\newif\iflong
\newif\ifshort
\newcommand\blfootnote[1]{%
  \begingroup
  \renewcommand\thefootnote{}\footnote{#1}%
  \addtocounter{footnote}{-1}%
  \endgroup
}
\ifshort\title{An Algorithmic Framework for Locally Constrained Homomorphisms}\fi
\iflong\title{An Algorithmic Framework for Locally Constrained Homomorphisms}\fi 
\author{Laurent Bulteau}{LIGM, CNRS, Université Gustave Eiffel, France}{laurent.bulteau@univ-eiffel.fr}{https://orcid.org/0000-0003-1645-9345}{}
\author{Konrad K. Dabrowski}{Newcastle University, UK}{konrad.dabrowski@newcastle.ac.uk}{https://orcid.org/0000-0001-9515-6945}{Supported by EPSRC grant EP/V00252X/1.}
\author{Noleen K\"ohler}{Université Paris-Dauphine, PSL University, CNRS UMR7243, LAMSADE, Paris, France}{noleen.kohler@dauphine.psl.eu}{}{}
\author{Sebastian Ordyniak}{University of Leeds, UK}{sordyniak@gmail.com}{https://orcid.org/0000-0002-1825-0097}{Supported by EPSRC grant EP/V00252X/1.}
\author{Dani\"el Paulusma}{University of Durham, UK}{daniel.paulusma@durham.ac.uk}{https://orcid.org/0000-0001-5945-9287}{Supported by Leverhulme Trust grant RPG-2016-258.}
\authorrunning{L. Bulteau, K.\,K. Dabrowski, N. K\"ohler, S. Ordyniak, D. Paulusma} 
\keywords{locally constrained homomorphism, parameterized complexity, deletion set/modulator to small components, role assignment, integer linear programming} 
\newcommand{\N}{\mathbb{N}}
\newcommand{\Z}{\mathbb{Z}}
\newcommand{\SB}{\{\,}
\newcommand{\SM}{\;{:}\;}
\newcommand{\SE}{\,\}}
\newcommand{\bigoh}{\mathcal{O}}
\newcommand{\Nat}{\mathbb{N}}
\newcommand{\TTT}{\mathcal{T}}
\newcommand{\AAA}{\mathcal{A}}
\newcommand{\III}{\mathcal{I}}
\newcommand{\EEE}{\mathcal{E}}
\newcommand{\TD}{\mathcal{D}}
\newcommand{\tw}{{\textup{tw}}}
\newcommand{\td}{{\textup{td}}}
\newcommand{\vcn}{{\textup{vc}}}
\newcommand{\fr}{{\textup{fr}}}
\newcommand{\fvn}{{\textup{fv}}}
\newcommand{\dsnc}{{\textup{ds}_c}}
\newcommand{\pw}{{\textup{pw}}}
\newcommand{\drm}{\textup{drm}}
\newcommand{\undC}{C}
\newcommand{\fproj}[2]{#1|_{#2}}
\newcommand{\ccmap}{\textup{tc}}
\newcommand{\wSCMP}[4]{\textup{wSM}(#1,#2,#3,#4)}
\newcommand{\SCMP}[4]{\textup{SM}(#1,#2,#3,#4)}
\newcommand{\BCMP}[4]{\textup{BM}(#1,#2,#3,#4)}
\newcommand{\wSCMPn}{\textup{wSM}\xspace}
\newcommand{\SCMPn}{\textup{SM}\xspace}
\newcommand{\BCMPn}{\textup{BM}\xspace}
\newcommand{\ICMP}{\textup{ICM}\xspace}
\newcommand{\CH}{\textup{CH}\xspace}
\newcommand{\CON}{\textup{CON}\xspace}
\newcommand{\fullc}{\ensuremath\xrightarrow{_B}\xspace}
\newcommand{\surjc}{\ensuremath\xrightarrow{_S}\xspace}
\newcommand{\parc}{\ensuremath\xrightarrow{_I}\xspace}
\newcommand{\EXT}{\textup{Ext}}
\newcommand{\Hom}{\textsc{Homomorphism}}
\newcommand{\LBHOM}{\textsc{Locally Bijective Homomorphism}}
\newcommand{\LIHOM}{\textsc{Locally Injective Homomorphism}}
\newcommand{\LSHOM}{\textsc{Locally Surjective Homomorphism}}
\newcommand{\xLBHom}[1]{\textsc{Locally Bijective $#1$-Homomorphism}}
\newcommand{\xLIHom}[1]{\textsc{Locally Injective $#1$-Homomorphism}}
\newcommand{\xLSHom}[1]{\textsc{Locally Surjective $#1$-Homomorphism}}
\newcommand{\Hom}{{\sc{Hom}}}
\newcommand{\LBHOM}{{\sc{LBHom}}}
\newcommand{\LIHOM}{{\sc{LIHom}}}
\newcommand{\LSHOM}{{\sc{LSHom}}}
\newcommand{\xLBHom}[1]{\textsc{$#1$-LBHom}}
\newcommand{\xLIHom}[1]{\textsc{$#1$-LIHom}}
\newcommand{\xLSHom}[1]{\textsc{$#1$-LSHom}}
\newcommand{\hy}{\hbox{-}\nobreak\hskip0pt}
\newcommand{\cc}[1]{{\mbox{\textnormal{\textsf{#1}}}}\xspace}  %% Complexity class
\newcommand{\NP}{\cc{NP}}
\newcommand{\FPT}{\cc{FPT}}
\newcommand{\XP}{\cc{XP}}
\newcommand{\Weft}{{\cc{W}}}
\newcommand{\W}[1]{{\Weft}{{\normalfont{[#1]}}}}
\newcommand{\paraNP}{\cc{paraNP}}
\begin{document}

\maketitle

\begin{abstract}
A homomorphism $f$ from a guest graph $G$ to a host graph $H$ is locally bijective, injective or surjective if for every $u\in V(G)$, the restriction of $f$ to the neighbourhood of $u$ is bijective, injective or surjective, respectively. The corresponding decision problems, \LBHOM{}, \LIHOM{} and \LSHOM{}, are well studied both on general graphs and on special graph classes.  Apart from complexity results when the problems are parameterized by the treewidth and maximum degree of the guest graph, the three problems still lack a thorough study of their parameterized complexity. This paper fills this gap: we prove a number of new \FPT, \W{1}-hard and para-\NP-complete results by considering a hierarchy of parameters of the guest graph $G$. For our \FPT\ results, we do this through the development of a new algorithmic framework that involves a general ILP model. To illustrate the applicability of the new framework, we also use it to prove \FPT\ results for the {\sc Role Assignment} problem, which originates from social network theory and is closely related to locally surjective homomorphisms.
\ifshort
\blfootnote{\noindent \emph{Statements where proofs or details are provided in the appendix are marked with a $\star$.}}
\fi
\end{abstract}

\section{Introduction}\label{s-intro}

A {\it homomorphism} from a graph $G$ to a graph $H$ is a mapping $\phi: V(G) \to V(H)$  such that $\phi(u)\phi(v)\in E(H)$ for every $uv\in E(G)$. Graph homomorphisms generalise graph colourings (let $H$ be a complete graph) and have been intensively studied over a long period of time, both from a structural and an algorithmic perspective. We refer to the textbook of Hell and Ne\v{s}et\v{r}il~\cite{HN04} for a further introduction.

We write $G\to H$ if there exists a homomorphism from $G$ to $H$; here, $G$ is called the {\it guest graph} and $H$ is the {\it host graph}.
We denote the corresponding decision problem by  \Hom{}, and if $H$ is fixed, that is, not part of the input, we write $H$-\Hom{}. The renowned Hell-Ne\v{s}et\v{r}il dichotomy~\cite{HN90}  states that  $H$-\Hom{} is polynomial-time solvable if $H$ is bipartite, and \NP-complete otherwise.
We denote the vertices of $H$ by $1,\ldots,|V(H)|$ and call them {\it colours}.
\iflong The reason for doing this is that graph homomorphisms generalise graph colourings: there exists a homomorphism from a graph $G$ to the complete graph on $k$ vertices if and only if $G$ is $k$-colourable.\fi

Instead of fixing the host graph~$H$, one can also restrict the
structure of the guest graph~$G$ by bounding some graph
parameter. \ifshort Here it is known that, if $\FPT\neq\W{1}$, then
\Hom{}  can be solved in polynomial time if and only if the so-called
core of the guest graph has bounded treewidth~\cite{Gr07}.\fi
\iflong A classical result states that \Hom{} is polynomial-time solvable when the guest graph~$G$ has bounded treewidth~\cite{CR00,Fr90}. The {\it core} of a graph $G$ is the subgraph $F$ of $G$ such that $G\to F$ and there is no proper subgraph $F'$ of $F$ with $G\to F'$
(the core is unique up to isomorphism~\cite{HN92}).
Dalmau, Kolaitis and Vardi~\cite{DKV02} proved that the \Hom{} problem is polynomial-time solvable even if the core of the guest graph~$G$ has bounded treewidth. This result was strengthened by Grohe~\cite{Gr07}, who proved that if $\FPT\neq\W{1}$, then \Hom{}  can be solved in polynomial time if and only if this condition holds.
\fi

\medskip
\noindent
{\bf Locally constrained homomorphisms.} We are interested in three well-studied variants of graph homomorphisms that occur after placing constraints on the neighbourhoods of the vertices of the guest graph $G$.
Consider a homomorphism $\phi$ from a graph $G$ to a graph~$H$.
We say that $\phi$ is locally injective, locally bijective or locally surjective for $u \in V(G)$ if
the restriction $\phi_u$ to the neighbourhood 
$N_G(u)=\{v\; |\; uv\in E(G)\}$ 
of $u$ is injective, bijective or surjective. We say that
$\phi$ is {\it locally injective}, {\it locally bijective}  or {\it locally surjective} if it is locally injective, locally bijective, or locally surjective for every $u \in V(G)$.
We denote these {\it locally constrained} homomorphisms by
$G\fullc H$, $G\parc H$ and $G\surjc H$, respectively.

The three variants have been well studied in several settings over a long period of time. For example, locally injective homomorphisms are also known as {\it partial graph coverings} and are used in telecommunications~\cite{FK02}, in distance constrained labelling~\cite{FKK01} and as indicators of the existence of homomorphisms  of  derivative graphs~\cite{Ne71}. Locally bijective homomorphisms originate from topological graph theory~\cite{Bi74,Ma67} and are more commonly known as {\it graph coverings}. They are used in distributed
computing~\cite{An80,AG81,Bo89} and in constructing highly transitive regular graphs~\cite{Bi82}. Locally surjective homomorphisms are sometimes called {\it colour dominations}~\cite{KT00}. They have applications in distributed computing~\cite{CMZ06,CP11} and in social science~\cite{EB91,PR01,RS01,WR83}. In the latter context they are known as {\it role assignments}, as we will explain in more detail below.

Let  \LBHOM{}, \LIHOM{} and \LSHOM{} be the three problems of deciding, for two graphs $G$ and $H$, whether $G\fullc H$, $G\parc H$ or $G\surjc H$ holds, respectively. As before, we write
 \xLBHom{H}, \xLIHom{H} and \xLSHom{H} in the case where the host graph~$H$ is fixed.
Out of the three problems, only the complexity of \xLSHom{H} has been completely classified, both for general graphs and bipartite graphs~\cite{FP05}. We refer to a series of papers~\cite{AFS91,BLT11,FK02,FKP08,KPT97,KPT98,LT10} for polynomial-time solvable and \NP-complete cases of \xLBHom{H} and \xLIHom{H}; see also the survey by Fiala and Kratochv\'{i}l~\cite{FK08}. Some more recent results include sub-exponential algorithms for \xLBHom{H}, \xLIHom{H} and \xLSHom{H} on string graphs~\cite{OR20} and complexity results for \xLBHom{H} for host graphs $H$ that are multigraphs~\cite{KTT16} or that have semi-edges~\cite{BFHJK}.

In our paper we assume that both $G$ and $H$ are part of the input.
We note a fundamental difference between locally injective homomorphisms on one hand and locally bijective and surjective homomorphisms on the other hand. Namely, for connected graphs $G$ and $H$, we must have
 $|V(G)|\geq |V(H)|$ if $G\fullc H$ or $G\surjc H$, whereas $H$ might be arbitrarily larger than $G$ if $G\parc H$ holds.  For example, if we let $G$ be a complete graph, then $G\parc H$ holds if and only if $H$ contains a clique on at least $|V(G)|$ vertices.

The above difference is also reflected in the complexity results for the three problems under input restrictions.
In fact, \LIHOM{} is closely related to the {\sc Subgraph Isomorphism} problem and is usually the
hardest problem. For example, \LBHOM{} is {\sc Graph Isomorphism}-complete on chordal guest graphs, but polynomial-time solvable on interval
guest
graphs and \LSHOM{} is \NP-complete on chordal guest graphs, but polynomial-time solvable on proper interval
guest
graphs~\cite{HHP12}.
In contrast, \LIHOM{} is \NP-complete even on complete guest graphs~$G$, which follows from a reduction from the {\sc Clique} problem via the aforementioned equivalence: $G\parc H$ holds if and only if $H$ contains a clique on at least $|V(G)|$ vertices.

To give another example,
\LBHOM{}, \LSHOM{} and \LIHOM{} are \NP-complete for guest graphs~$G$ of path-width at most $5$, $4$ and $2$, respectively~\cite{CFHPT15} (all three problems are polynomial-time solvable if $G$ is a tree~\cite{CFHPT15,FP10}).
Note that these hardness results imply that the aforementioned polynomial-time result on \Hom{} for guest graphs~$G$ of bounded treewidth~\cite{CR00,Fr90} does not carry over to any of the three locally constrained homomorphism problems. It is also known that \LBHOM{}~\cite{Kr94}, \LSHOM{}~\cite{KT00} and
\LIHOM{}~\cite{FK02} are \NP-complete even if $G$ is cubic and $H$ is the complete graph $K_4$ on four vertices, but
polynomial-time solvable if~$G$ has bounded treewidth and one of the two graphs~$G$ or~$H$ has bounded maximum degree~\cite{CFHPT15}.

\medskip
\noindent
{\bf An Application.} Locally surjective homomorphisms from a graph $G$ to a graph $H$ are known as $H$-role assignments in social network theory. We will include this topic in our investigation and provide some brief context.
Suppose we are given a social network of individuals whose properties we aim to characterise. Can we assign each individual a role such that individuals with the same role relate in the same way to other individuals with some role, using exactly $h$ different roles in total? 
To formalise this question, we model the network as a graph~$G$, where vertices represent individuals and edges represent the existence of a relationship between two individuals.
We now ask whether $G$ has
an {\it $h$-role assignment}, that is, a function $f$ that assigns each vertex $u\in V(G)$ a {\it role} $f(u)\in \{1,\ldots,h\}$, such that $f(V(G))=\{1,\ldots,h\}$ and for every two vertices $u$ and~$v$, if $f(u)=f(v)$ then $f(N_G(u))=f(N_G(v))$. 

Role assignments were introduced
by White and Reitz~\cite{WR83} as {\it regular equivalences} and were called {\it role colourings} by Everett and Borgatti~\cite{EB91}.
We observe that two adjacent vertices $u$ and $v$ may have the same role, that is, $f(u)=f(v)$ is allowed (so role assignments are not proper colourings).
Hence, a connected graph $G$ has an $h$-role assignment if and only if $G\surjc H$ for some connected graph $H$ with $|V(H)|=h$, as long as we allow $H$ to have self-loops (while we assume that $G$ is a graph with no self-loops).

The {\sc Role Assignment} problem is to decide, for a graph $G$ and an integer $h$, whether $G$ has an $h$-role assignment. If $h$ is fixed, we denote the problem $h$-{\sc Role Assignment}. Whereas {\sc $1$-Role Assignment} is trivial, {\sc $2$-Role Assignment} is \NP-complete~\cite{RS01}. In fact, {\sc $h$-Role Assignment} is \NP-complete for planar graphs $(h\geq 2)$~\cite{PR15}, cubic graphs $(h\geq 2)$~\cite{PR}, bipartite graphs $(h\geq 3)$~\cite{PS}, chordal graphs $(h\geq 3)$~\cite{HPR10} and split graphs $(h\geq 4)$~\cite{Do16}.
 Very recently, Pandey, Raman and Sahlo~\cite{PRS21} gave an $n^{\bigoh(h)}$-time algorithm for {\sc Role Assignment} on general graphs and an $f(h)n^{\bigoh(1)}$-time algorithm on forests.

\medskip
\noindent
{\bf Our Focus.} We continue the line of study in~\cite{CFHPT15} and focus on the following research question:

\medskip
\noindent
{\it For which parameters of the guest graph do \LBHOM{}, \LSHOM{} and \LIHOM{} become fixed-parameter tractable?}

\medskip
\noindent
We will also apply our new techniques towards answering this question for the {\sc Role Assignment} problem.
In order to address our research question, we need some additional terminology.
A graph parameter $p$ {\it dominates} a parameter~$q$ if there is a function~$f$ such that $p(G)\leq f(q(G))$ for every graph~$G$.
If $p$ dominates $q$ but $q$ does not dominate $p$, then $p$ is {\it more powerful (less restrictive)} than $q$. We denote this by $p \rhd q$.
If neither $p$ dominates $q$ nor $q$ dominates $p$, then $p$ and $q$ are {\it incomparable (orthogonal)}.
Given the para-\NP-hardness results on \LBHOM{}, \LSHOM{} and \LIHOM{} for graph classes of bounded path-width~\cite{CFHPT15}, we will consider a range of graph parameters that are less powerful than path-width. In this way we aim to increase our understanding of the (parameterized) complexity of  \LBHOM{}, \LSHOM{} and \LIHOM{}.

For an integer $c\geq 1$, a {\it $c$-deletion set} of a graph $G$ is a subset $S\subseteq V(G)$ such that every connected component of $G\setminus S$ has at most $c$ vertices. The {\it  $c$-deletion set number} $\dsnc(G)$ of a graph $G$ is the minimum size of a $c$-deletion set in $G$. If $c=1$, then we obtain the {\it vertex cover number} $\vcn(G)$ of $G$.
The $c$-deletion set number is also known as {\it vertex integrity}~\cite{DDH16}.
It is closely related to the {\it fracture number} $\fr(G)$, 
introduced in~\cite{DEGKO17}, which is the minimum $k$ such that $G$ has a $k$-deletion set on at most~$k$ vertices.
Note that $\fr(G)\leq \max\{c,\dsnc(G)\}$ holds for every integer~$c$.
The {\it feedback vertex set number} $\fvn(G)$ of a graph $G$  is the size of a smallest set $S$ such that $G\setminus S$ is a forest.
We write $\tw(G)$, $\pw(G)$ and $\td(G)$ for the treewidth, path-width and tree-depth of a graph $G$, respectively; see~\cite{NOdM12} for more information,
in particular on tree-depth. It is known that
\ifshort$\tw(G) \rhd \pw(G) \rhd \td(G) \rhd \fr(G) \rhd \dsnc(G)
(\mbox{fixed}\; c) \rhd \vcn(G) \rhd |V(G)|,$ \fi
\iflong $$\tw(G) \rhd \pw(G) \rhd \td(G) \rhd \fr(G) \rhd \dsnc(G)
(\mbox{fixed}\; c) \rhd \vcn(G) \rhd |V(G)|,$$ \fi where the second relationship is proven
in~\cite{BGHK95} and the others follow immediately from their definitions (see also Section~\iflong\ref{ssec:pre-graph}\fi\ifshort\ref{sec:pre}\fi).
It is readily seen that \ifshort $\tw(G)\rhd \fvn(G) \rhd
\mbox{ds}_2(G)$ \fi \iflong $$\tw(G)\rhd \fvn(G) \rhd
\mbox{ds}_2(G)$$ \fi and that $\fvn(G)$ is incomparable with the parameters $\pw(G)$, $\td(G)$, $\fr(G)$ and $\dsnc(G)$ for every fixed
$c\geq 3$
(consider e.g. a tree of \iflong arbitrarily \fi large path-width and
the disjoint union of \iflong arbitrarily \fi many triangles).

\begin{table}
\begin{center}\resizebox{\textwidth}{!}{
\begin{tabular}{llll}
\toprule
guest graph parameter &\LIHOM{} & \LBHOM{} & \LSHOM{} \\
\midrule
$|V(G)|$ &\XP, \W{1}-hard~\cite{DF95} &\FPT &\FPT\\
vertex cover number &\textcolor{blue}{\XP} (Theorem~\ref{the:IHvc}), \W{1}-hard &\FPT &\FPT\\
$c$-deletion set number 
(fixed $c$)
 &\textcolor{blue}{para-\NP-c $(c\geq 2)$} (Theorem~\ref{the:IHdic}) & \FPT &\FPT\\
fracture number & para-\NP-c &\textcolor{blue}{\FPT} (Theorem~\ref{thm:FPTalgoLSHOMandLBHOM}) &\textcolor{blue}{\FPT} (Theorem~\ref{thm:FPTalgoLSHOMandLBHOM})\\
tree-depth &para-\NP-c &\textcolor{blue}{para-\NP-c} (Theorem~\ref{t-np}) &\textcolor{blue}{para-\NP-c} (Theorem~\ref{t-np})\\
path-width &para-\NP-c~\cite{CFHPT15} &para-\NP-c~\cite{CFHPT15} &para-\NP-c~\cite{CFHPT15}\\
treewidth&para-\NP-c &para-\NP-c &para-\NP-c\\
maximum degree &para-\NP-c~\cite{FK02} &para-\NP-c~\cite{Kr94} &para-\NP-c~\cite{KT00}\\
treewidth plus maximum degree &\XP, \W{1}-hard &\XP~\cite{CFHPT15}  &\XP~\cite{CFHPT15}\\
feedback vertex set number &para-\NP-c &\textcolor{blue}{para-\NP-c} (Theorem~\ref{t-np2}) &\textcolor{blue}{para-\NP-c} (Theorem~\ref{t-np2})
 \end{tabular}}

\vspace*{2.5mm}
\caption{Table of results. The results in blue are the new results proven in this paper. The results in black are either known results, some of which are now also implied by our new results, or follow immediately from other results in the table;
in particular, for a graph $G$, $\dsnc(G)\geq \fr(G)$ if $c\leq \fr(G)-1$, and
$\dsnc(G)\leq \fr(G)$ if $c\geq \fr(G)$.
Also note that \LIHOM{} is \W{1}-hard when parameterized by $|V(G)|$, as {\sc Clique} is \W{1}-hard when parameterized by the clique number~\cite{DF95}, so as before, we can let $G$ be the complete graph in this case.}\label{t-thetable}
\vspace*{-1cm}
\end{center}
\end{table}

\medskip
\noindent
{\bf Our Results.}
We prove a number of new parameterized complexity results for \LBHOM{}, \LSHOM{} and \LIHOM{} by taking some property of the guest graph~$G$ as the parameter. In particular, we consider the graph parameters above.
Our two main results, which are proven in Section~\ref{sec:appl}, show that \LBHOM{} and \LSHOM{} are fixed-parameter tractable
parameterized by the fracture number of $G$.
These two results cannot be strengthened to the tree-depth of the guest graph, for which we prove para-\NP-completeness in Section~\ref{s-npcom}. Note that the latter results imply the known para-\NP-completeness results for path-width of the guest graph~\cite{CFHPT15}. 
In Section~\ref{s-npcom} we also prove that  \LBHOM{} and \LSHOM{} are para-\NP-complete when parameterized by the feedback vertex set number of the guest graph. This result and the para-NP-hardness for tree-depth motivated us to consider the fracture number as a natural remaining graph parameter for obtaining an fpt algorithm. 

The above para-\NP-completeness results for \LBHOM{} in fact even hold for {\sc $3$-FoldCover},
the restriction of \LBHOM{} to input pairs $(G,H)$ where $|V(G)|=3|V(H)|$. The {\sc $k$-FoldCover} problem was introduced in~\cite{Bo89} (where it was called the $k$-{\sc Graph Covering} problem). 
In fact, the aforementioned result of~\cite{CFHPT15} on \LBHOM{} for path-width is the first proof that $3$-{\sc FoldCover} is \NP-complete, and the proof can easily be adapted for $k\geq 4$, as observed by Klav\'ik~\cite{Kl17}.

In Section~\ref{s-injective} we prove that \LIHOM{} is in \XP\ and \W{1}-hard when parameterized by the vertex cover number, or equivalently, the $c$-deletion set number for $c=1$. We then show that the \XP-result for \LIHOM{} cannot be generalised to hold for $c\geq 2$.
In fact, in the same section, we will determine the complexity of \LIHOM{} on graphs with $c$-deletion set number at most $k$ for every fixed pair of integers $c$ and $k$. 
Our results for \LBHOM{}, \LSHOM{} and \LIHOM{}  are summarised, together with the known results, in Table~\ref{t-thetable}.

\medskip
\noindent
{\bf Algorithmic Framework.}
The \FPT\ algorithms for \LBHOM{} and \LSHOM{} are proven via a new
algorithmic framework (described in detail in Section~\ref{s-algo})
that involves a reduction to an integer linear program (ILP) that has
a wider applicability. To illustrate this, in Section~\ref{sec:appl}
we also use our general framework to prove that {\sc Role Assignment}
is \FPT\ when parameterized by $c+\dsnc$.
\iflong We emphasize that in our framework the host graph~$H$ is not fixed,
but part of the input, in contrast to other frameworks that include
the locally constrained homomorphism problems (and that consequently
work for more powerful graph parameters), such as the framework of
locally checkable vertex partitioning problems~\cite{BTV13, TP97} or
the framework of Gerber and Kobler~\cite{GK03} based on (feasible) interval degree constraint matrices.
\fi

\medskip
\noindent
{\bf Techniques.}
The main ideas behind our algorithmic ILP framework are as
follows. Let $G$ and $H$ be the guest and host graphs, respectively.
First, we observe that if $G$ has a
$c$-deletion of size at most $k$ and there is a locally surjective
homomorphism from $G$ to $H$, then $H$ must also have a
$c$-deletion set of size at most $k$.
However it does not suffice to compute $c$-deletion sets
$D_G$ and $D_H$ for $G$ and $H$, guess a partial homomorphism $h$
from $D_G$ to $D_H$, and use the structural properties of $c$-deletion
sets to decide whether $h$ can be extended to
a desired homomorphism from $G$ to $H$. This is because
a homomorphism from $G$ to $H$ does not necessarily map $D_G$ to $D_H$.
Moreover, even if it did, vertices in $G\setminus D_G$ can still be mapped to vertices in $D_H$.
Consequently, components of $G\setminus D_G$ can still be mapped to more than one
component of $H\setminus D_H$. This makes it difficult to decompose the
homomorphism from $G$ to $H$ into small independent parts.
To overcome this challenge, we
prove that there are small sets $D_G$ and $D_H$ of vertices in $G$ and
$H$, respectively, such that every locally surjective homomorphism
from $G$ to $H$ satisfies: \ifshort(1) the pre-image of $D_H$ is a
subset of $D_G$, (2) $D_H$ is a $c'$-deletion set for $H$ for some $c'$ bounded in
terms of only $c+k$, and (3) all but at most $k$ components of $G\setminus
D_G$ have at most $c$ vertices and, while the remaining
components can be arbitrary large, their treewidth is bounded in terms
of $c+k$.
\fi\iflong\\[-11pt]
\begin{enumerate}
\item the pre-image of $D_H$ is a subset of $D_G$,
\item $D_H$ is a $c'$-deletion set for $H$ for some $c'$ bounded in
terms of only $c+k$, and 
\item all but at most $k$ components of $G\setminus
D_G$ have at most $c$ vertices and, while the remaining
components can be arbitrary large, their treewidth is bounded in terms
of $c+k$.\\[-11pt]
\end{enumerate}\fi
As $D_G$ and $D_H$ are small, we can enumerate all possible
homomorphisms
 from some subset of $D_G$ to~$D_H$. Condition~2 allows us to show that any locally
surjective homomorphism from $G$ to $H$ can be decomposed into locally
surjective homomorphisms from a small set of components of $G\setminus
D_G$ (plus $D_G$) to one component of $H\setminus D_H$ (plus $D_H$). This
enables us to formulate the question of whether 
a homomorphism from a subset of $D_G$ to a subset of $D_H$
can be extended to a desired homomorphism from $G$ to $H$ 
in terms of an ILP. Finally, Condition~3 allows us to efficiently
compute the possible parts of the decomposition, that is, which (small) sets of
components of $G\setminus D_G$ can be mapped to which components of $D_H$.

\section{Preliminaries}\label{sec:pre}

\iflong We use standard notation from graph theory, as can be found in e.g.~\cite{diestel00}. \fi
Let $G$ be a graph. We denote the vertex set and edge set of $G$ by $V(G)$ and $E(G)$, respectively. 
Let $X \subseteq V(G)$ be a set of vertices of $G$.
The \emph{subgraph of $G$ induced by $X$}, denoted $G[X]$, is the graph with vertex set $X$ and edge set $E(G) \cap [X]^2$.
Whenever the underlying graph is clear from the context, we will
sometimes refer to an induced subgraph simply by its set of vertices.
We use $G \setminus X$ to denote the subgraph of $G$ induced by $V(G) \setminus X$.
Similarly, for $Y \subseteq E(G)$ we let $G \setminus Y$ be the subgraph of $G$ obtained by deleting all edges in $Y$ from $G$.

For a graph $G$ and a vertex $u \in V(G)$, we let $N_G(u)=\{v\; |\; uv\in E(G)\}$ and
$N_G[v]=N_G(v)\cup \{v\}$ denote the open and closed neighbourhood of $v$ in $G$,
respectively.  We let $\Delta(G)$ be the maximum degree of $G$.
Recall that we assume that the guest graph~$G$ does not contain self-loops, while the host graph~$H$ is permitted to have self-loops.
In this case, by definition, $u\in N_H(u)$ if $uu\in E(H)$.

\iflong
\subsection{Parameterized Complexity}

In parameterized complexity~\cite{CyganFKLMPPS15,DowneyF13,FlumGrohe06},
the complexity of a problem is studied not only with respect to the
input size, but also with respect to some problem parameter(s). The
core idea behind parameterized complexity is that the combinatorial
explosion resulting from the \NP-hardness of a problem can sometimes
be confined to certain structural parameters that are small in
practical settings. We now proceed to the formal definitions.

A {\it parameterized problem} $Q$ is a subset of $\Omega^* \times
\mathbb{N}$, where $\Omega$ is a fixed alphabet. Each instance of $Q$ is a pair $(I, \kappa)$, where $\kappa \in \Nat$ is called the {\it
parameter}. A parameterized problem $Q$ is
{\it fixed-parameter tractable} (\FPT)~\cite{CyganFKLMPPS15,DowneyF13,FlumGrohe06}, if there is an
algorithm, called an {\em \FPT-algorithm},  that decides whether an input $(I, \kappa)$
is a member of $Q$ in time $f(\kappa) \cdot |I|^{\bigoh(1)}$, where $f$ is a computable function and $|I|$ is the size of the input instance. The class \FPT{} denotes the class of all fixed-parameter
tractable parameterized problems. 

A parameterized problem $Q$
is {\it \FPT-reducible} to a parameterized problem $Q'$ if there is
an algorithm, called an \emph{\FPT-reduction}, that transforms each instance $(I, \kappa)$ of $Q$
into an instance $(I', \kappa')$ of
$Q'$ in time $f(\kappa)\cdot |I|^{\bigoh(1)}$, such that $\kappa' \leq g(\kappa)$ and $(I, \kappa) \in Q$ if and
only if $(I', \kappa') \in Q'$, where $f$ and $g$ are computable
functions. By \emph{fpt-time}, we denote time of the form $f(\kappa)\cdot |I|^{\bigoh(1)}$, where $f$ is a computable function.
Based on the notion of \FPT-reducibility, a hierarchy of
parameterized complexity, {\it the \cc{W}-hierarchy} $=\bigcup_{t
\geq 0} \W{t}$, where $\W{t} \subseteq \W{t+1}$ for all $t \geq 0$, has
been introduced, in which the $0$-th level \W{0} is the class {\it
\FPT}. The notions of hardness and completeness have been defined for each level
\W{$i$} of the \cc{W}-hierarchy for $i \geq 1$ \cite{CyganFKLMPPS15,DowneyF13}.
It is commonly believed that $\W{1} \neq \FPT$ (see \cite{CyganFKLMPPS15,DowneyF13}).
This assumption has served as the main working hypothesis of fixed-parameter intractability.
The class \XP{} contains parameterized problems that can be solved in $\bigoh(|I|^{f(\kappa)})$ time, where $f$ is a computable function.
It contains the class \W{t}, for all $t \geq 0$, and every problem in \XP{} is polynomial-time solvable when the parameter is bounded by a constant.
The class \paraNP{} is the class of parameterized problems that can be solved by non-deterministic algorithms in $f(\kappa)\cdot |I|^{\bigoh(1)}$ time, where $f$ is a computable function.
A problem is \emph{\paraNP{}-hard} if it is \NP-hard for a constant value of the parameter~\cite{FlumGrohe06}.
\fi

\iflong\subsection{Graph Parameters}\label{ssec:pre-graph}\fi

A \emph{$(k,c)$-extended deletion set} for $G$ is a set $D \subseteq
V(G)$ such that: \ifshort (1) every component of $G\setminus D$ either has at most $c$ vertices or has a
  $c$-deletion set of size at most $k$ and (2)
at most $k$ components of $G\setminus D$ have more than $c$
vertices. \fi\iflong
\begin{itemize}
\item every component of $G\setminus D$ either has at most $c$ vertices or has a
  $c$-deletion set of size at most $k$ and
\item at most $k$ components of $G\setminus D$ have more than $c$ vertices.
\end{itemize}
\fi
 We need the following well-known fact\ifshort s\fi:
\begin{proposition}[\cite{DBLP:journals/ai/KroneggerOP19}]\label{pro:comp-dels}
  Let $G$ be a graph and let $k$ and $c$ be natural numbers. Then,
  deciding whether $G$ has a $c$-deletion set of size at most $k$ is
  fixed-parameter tractable parameterized by $k+c$.
\end{proposition}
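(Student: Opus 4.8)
The plan is to prove Proposition~\ref{pro:comp-dels} by a standard bounded-search-tree argument, essentially the same one used for the vertex cover and feedback-vertex-set problems, adapted to the ``vertex integrity''/$c$-deletion setting. First I would fix the guest graph $G$ and the target bound $k$ on the deletion-set size, and describe a recursive procedure $\textsc{Solve}(G',k')$ that decides whether $G'$ has a $c$-deletion set of size at most $k'$. The base cases are: if $k'<0$, reject; if every connected component of $G'$ has at most $c$ vertices, accept (the empty set works). Otherwise, pick any connected component $B$ of $G'$ with $|V(B)|>c$. Since $B$ is connected and has more than $c$ vertices, it contains a connected subgraph on exactly $c+1$ vertices (grow a tree from any vertex until it has $c+1$ vertices); call its vertex set $W$, so $|W|=c+1$.

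The key branching observation is that any $c$-deletion set $S$ of $G'$ must intersect $W$: if $S\cap W=\emptyset$, then $W$ lies entirely inside one component of $G'\setminus S$, which would then have at least $c+1$ vertices, a contradiction. Hence I would branch on which vertex of $W$ goes into the solution: for each $w\in W$, recursively call $\textsc{Solve}(G'\setminus\{w\},k'-1)$, and accept if any branch accepts. Correctness of the branching is exactly the observation just made, together with the fact that a $c$-deletion set of size at most $k'$ for $G'$ containing $w$ restricts to a $c$-deletion set of size at most $k'-1$ for $G'\setminus\{w\}$, and conversely.

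For the running time, the search tree has branching factor at most $c+1$ and depth at most $k$ (the parameter $k'$ strictly decreases by $1$ along every branch, and once it drops below $0$ we stop), so it has at most $(c+1)^k$ leaves and $\bigoh((c+1)^k)$ nodes in total. At each node the work is polynomial in $|V(G')|\le|V(G)|$: computing connected components, checking whether all have size at most $c$, and extracting a connected $(c+1)$-vertex set $W$ can all be done in time $\bigoh(|V(G)|+|E(G)|)$. Altogether the algorithm runs in time $(c+1)^k\cdot|G|^{\bigoh(1)}$, which is of the required form $f(k+c)\cdot|G|^{\bigoh(1)}$, establishing fixed-parameter tractability parameterized by $k+c$.

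I do not expect any genuine obstacle here, as the statement is folklore; the only mild care needed is the argument that a connected graph on more than $c$ vertices always contains a connected set of exactly $c+1$ vertices (a one-line BFS/DFS argument), and the bookkeeping ensuring the recursion depth, and hence the branching tree, is controlled by $k$ rather than by $|G|$. One could alternatively cite that a $c$-deletion set is exactly a hitting set for the (polynomially listable, bounded-size) family of connected $(c+1)$-vertex subsets and invoke the standard FPT bounded-cardinality hitting set algorithm, but the direct branching argument above is self-contained and just as short.
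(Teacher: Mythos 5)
Your proof is correct. The paper does not prove this proposition at all --- it is imported verbatim with a citation to Kronegger, Ordyniak and Pfandler --- so there is no in-paper argument to compare against; your bounded-search-tree proof (branch over the $c+1$ vertices of a connected $(c+1)$-vertex set, which every $c$-deletion set must hit) is the standard self-contained argument for this folklore fact, and all the steps, including the existence of a connected $(c+1)$-vertex subset in any component with more than $c$ vertices and the $(c+1)^{k}\cdot|G|^{\bigoh(1)}$ running-time bound, are sound.
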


\iflong
\subparagraph*{Tree-depth.} Tree-depth is closely related to treewidth, and the structure of graphs
of bounded tree-depth is well understood~\cite{NOdM12}.
A useful way of thinking about graphs of bounded tree-depth is that they are
(sparse) graphs with no long paths.

The \emph{tree-depth} of an undirected graph $G$, denoted by $\td(G)$,
is the smallest natural number $k$ such
that there is an undirected rooted forest $F$ with vertex set $V(G)$ of height
at most $k$ for which $G$ is a subgraph of $\undC(F)$, where
$\undC(F)$ is called the \emph{closure} of $F$ and is the undirected graph with vertex set $V(F)$ having an
edge between $u$ and $v$ if and only if $u$ is an ancestor of $v$ in
$F$. A forest $F$ for which $G$ is a subgraph of $\undC(F)$ is also
called a \emph{tree-depth decomposition}, whose \emph{depth} is equal
to the height of the forest plus one.
Informally a graph has tree-depth at most $k$ if it can be
embedded in the closure of a forest of height $k$. Note that if $G$ is
connected, then it can be embedded in the closure of a tree instead of
a forest. 

\subparagraph*{Treewidth.}
A \emph{tree-decomposition}~$\mathcal{T}$ of a graph $G$ is a pair 
$(T,\chi)$, where $T$ is a tree and $\chi$ is a function that assigns each 
tree node $t$ a set $\chi(t) \subseteq V(G)$ of vertices such that the following 
conditions hold:
\begin{enumerate}[(P1)]
\item For every
edge $uv \in E(G)$, there is a tree node
  $t$ such that $u,v\in \chi(t)$.
\item For every vertex $v \in V(G)$,
  the set of tree nodes $t$ with $v\in \chi(t)$ induces a non-empty subtree of~$T$.
\end{enumerate}

The sets $\chi(t)$ are called \emph{bags} of the decomposition~$\mathcal{T}$ and $\chi(t)$ 
is the bag associated with the tree node~$t$. 
The \emph{width} of a tree-decomposition 
$(T,\chi)$ is the size of a largest bag minus~$1$. The \emph{treewidth} of a graph $G$,
denoted by $\tw(G)$, is the minimum
width over all tree-decompositions of~$G$. 
\fi

\iflong\subparagraph*{Relationships and Properties.}
The following proposition summaries the known relationships between the parameters we consider.\fi

\iflong
\begin{proposition}[\cite{NOdM12}]
  \label{pro:parrel}
  Let $G$ be a graph and let $k$ and $c$ be natural numbers. Then:
  \begin{itemize}
  \item if $G$ has a $c$-deletion set of size at most $k$, then $\td(G)\leq
    k+c$.
  \item if $G$ has a $(k',c)$-extended deletion set of size at most $k$, then
    $\td(G)\leq k'+k+c$.    
  \item $\tw(G)\leq \td(G)$.
  \end{itemize}
\end{proposition}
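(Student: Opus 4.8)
\textbf{Proof proposal for Proposition~\ref{pro:parrel}.}

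The plan is to prove each of the three items by directly exhibiting an appropriate tree-depth decomposition (for the first two) or tree-decomposition (for the last), using the definitions given above.

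First I would handle the third item, $\tw(G) \leq \td(G)$, which is the easiest. Given a tree-depth decomposition, that is, a rooted forest $F$ of height $\td(G)$ with $V(F) = V(G)$ and $G \subseteq \undC(F)$, I would build a tree-decomposition $(T,\chi)$ as follows: take $T$ to be (a tree obtained from) $F$ itself, and for each node $v$ let $\chi(v)$ be the set consisting of $v$ together with all its ancestors in $F$. Property (P2) holds because the nodes whose bag contains a fixed vertex $w$ are exactly $w$ and its descendants, which form a subtree. Property (P1) holds because every edge of $G$ is an edge of $\undC(F)$, hence joins a vertex to one of its ancestors, and both lie in the bag of the lower endpoint. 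Each bag is a root-to-node path in $F$, so has at most $\td(G)$ vertices, giving width at most $\td(G) - 1 \leq \td(G)$ (here I rely on the indexing convention the paper fixes for depth versus height).

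Next I would prove the first item: if $G$ has a $c$-deletion set $S$ of size at most $k$, then $\td(G) \leq k+c$. The idea is to put the $k$ vertices of $S$ as a path at the top of the forest, and hang each component of $G \setminus S$ underneath. Formally, order $S = \{s_1,\dots,s_{|S|}\}$ arbitrarily and create a path $s_1 - s_2 - \cdots - s_{|S|}$ (with $s_1$ the root of a tree, or introduce one tree per batch if we want a forest). For each connected component $K$ of $G \setminus S$, since $|V(K)| \leq c$, we can order its vertices as an arbitrary path and attach this path below $s_{|S|}$ (if $S = \emptyset$, $K$ just becomes its own tree). The resulting forest $F$ has height at most $|S| + c \leq k + c$, and $V(F) = V(G)$. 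To see $G \subseteq \undC(F)$: any edge of $G$ with an endpoint in $S$ is covered because every vertex of $S$ is an ancestor of every other vertex of the tree it lies in (and all of $S$ is on one root path); any edge of $G$ with both endpoints outside $S$ lies inside a single component $K$, whose vertices form a path in $F$ and hence a clique in $\undC(F)$. So every edge joins a vertex to an ancestor, as required, and $\td(G) \leq k+c$.

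The second item then follows from essentially the same construction with one extra layer of bookkeeping, which I expect to be the only mildly delicate point. Suppose $D$ is a $(k',c)$-extended deletion set of size at most $k$. Build the forest by first laying $D$ as a path of length at most $k$ at the top. Each component $K$ of $G \setminus D$ is of one of two types: either $|V(K)| \leq c$, in which case we hang it as a path of length at most $c$ as before; or $K$ has a $c$-deletion set $S_K$ with $|S_K| \leq k'$, and there are at most $k$ components of this second type. For a type-two component $K$, recurse: hang $S_K$ as a path of length at most $k'$ below the bottom of $D$'s path, and then hang each component of $K \setminus S_K$ (which has at most $c$ vertices) as a path below $S_K$. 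The total height along any root-to-leaf branch is at most $|D| + |S_K| + c \leq k + k' + c$. The verification that $G \subseteq \undC(F)$ is the same case analysis as before — edges touching $D$, edges inside some $K$ touching $S_K$, and edges inside a small piece of $K \setminus S_K$ — each joining a vertex to an ancestor. The only thing to be careful about is that we only need a single branch of this shape per type-two component, and the bound on height is a bound on each branch, not a sum over components, so the ``at most $k$ components'' hypothesis is in fact not needed for the tree-depth bound itself (it is needed elsewhere in the paper); I would just note this. Hence $\td(G) \leq k' + k + c$, completing the proof.
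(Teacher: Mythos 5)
Your proposal is correct. Note, however, that the paper does not prove this proposition at all: it is imported wholesale from Ne\v{s}et\v{r}il and Ossona de Mendez~\cite{NOdM12} as a known fact, so there is no in-paper argument to compare yours against. Your three constructions are the standard ones and all check out: the root-to-node-path bags for $\tw(G)\leq\td(G)$, the ``deletion set as a spine, components as pendant paths'' forest for the first item, and the two-level version of the same forest for the second item. Your side remark that the ``at most $k$ large components'' clause of the definition of a $(k',c)$-extended deletion set is not needed for the tree-depth bound is accurate, and your hedge about the height-versus-depth indexing is warranted given the paper's slightly unusual phrasing (depth equals height plus one), but under either convention your bounds go through. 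The only value a written-out proof like yours adds over the citation is self-containedness; the paper only ever uses the proposition to bound the treewidth of extensions $G[D_G\cup C]$ in Lemmas~\ref{lem:computingcm} and~\ref{lem:computingBCMandSCM}, for which the coarse bounds you establish suffice.
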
 
\fi
\ifshort
\begin{proposition}[\cite{NOdM12}]
  \label{pro:parrel}
  Let $G$ be a graph and $k$ and $c$ be natural numbers. Then:
  \begin{itemize}
  \item if $G$ has a $c$-deletion set of size at most $k$, then $\td(G)\leq
    k+c$.
  \item if $G$ has a $(k',c)$-extended deletion set of size at most $k$, then
    $\td(G)\leq k'+k+c$.    
  \item $\tw(G)\leq \td(G)$.
  \end{itemize}
\end{proposition}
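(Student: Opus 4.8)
The three statements in Proposition~\ref{pro:parrel} are all standard facts about tree-depth and treewidth, and the plan is to prove each by exhibiting the appropriate decomposition explicitly. For the first item, suppose $S$ is a $c$-deletion set of $G$ of size at most $k$. I would build a tree-depth decomposition (rooted forest $F$ with $G\subseteq \undC(F)$) as follows: first arrange the vertices of $S$ into a single path $v_1,\dots,v_{|S|}$ in $F$ (i.e.\ $v_1$ the root, each $v_{i+1}$ the unique child of $v_i$), which contributes height $|S|\le k$. Then, below $v_{|S|}$, attach for each connected component $C$ of $G\setminus S$ a separate subtree rooted at $v_{|S|}$; since $|V(C)|\le c$, we can arrange $C$ as a path of length at most $c$ hanging off $v_{|S|}$ (any ordering works, as the closure of a path on $C$ contains all of $C$'s edges). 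Every edge of $G$ is then covered by $\undC(F)$: edges inside $S$ and edges with one endpoint in $S$ are covered because $S$ sits on a single root-to-node path, and edges inside a component $C$ are covered because $C$ forms a path in $F$. The total height is at most $k+c$, giving $\td(G)\le k+c$.

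For the second item I would argue analogously, now starting from a $(k',c)$-extended deletion set $D$ of size at most $k$. Put the $\le k$ vertices of $D$ on a single path at the top of $F$ (height $\le k$). For each component $C$ of $G\setminus D$: if $|V(C)|\le c$, hang it as a path of length $\le c$ as before; if $|V(C)|>c$, then by definition $C$ has a $c$-deletion set $S_C$ of size $\le k'$, so by the first item $\td(C)\le k'+c$, and I can attach a tree-depth decomposition of $C$ of height $\le k'+c$ below the bottom vertex of the $D$-path. Edges between $D$ and $G\setminus D$, and edges inside $D$, are covered since $D$ lies on one path; edges inside each component are covered by the sub-decomposition of that component. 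The height is at most $k+\max\{c,\,k'+c\}=k+k'+c$ (using $k'\ge 0$), so $\td(G)\le k'+k+c$. (Strictly one should take the maximum over component types, but $k'+c\ge c$ makes $k+k'+c$ a valid bound in all cases.)

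For the third item, $\tw(G)\le\td(G)$, I would take an optimal tree-depth decomposition $F$ of $G$ (height $\td(G)$) and turn it into a tree-decomposition of the same underlying tree structure $T=F$ by setting, for each node $t$ of $F$, the bag $\chi(t)$ to be the set of ancestors of $t$ in $F$ together with $t$ itself. Property (P2) (connectivity of bags containing a fixed vertex $v$) holds because the nodes whose ancestor-or-self set contains $v$ are exactly $v$ and its descendants, which form a subtree of $F$. Property (P1) holds because for any edge $uv\in E(G)\subseteq E(\undC(F))$, one of $u,v$ is an ancestor of the other, so both lie in the bag of the lower one. Each bag has size at most the height of $F$, namely $\td(G)$, so the width is at most $\td(G)-1$, giving $\tw(G)\le\td(G)$.

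None of the three steps presents a genuine obstacle — these are textbook facts, and the proof amounts to carefully writing down the decompositions and checking the axioms; the only mild care needed is the off-by-one bookkeeping between the "height" of a rooted forest and the "depth"/number of levels in the tree-depth decomposition, and making sure the additive constants in items~1 and~2 line up with the convention for $\td$ used in the paper. Since the statement is cited to~\cite{NOdM12}, in the paper itself it would be acceptable to simply refer to that source; the sketch above is the self-contained argument one would give if a proof were required.
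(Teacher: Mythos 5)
The paper offers no proof of this proposition at all: it is stated as a known fact and attributed to the reference \cite{NOdM12}, so there is nothing in the paper to compare against line by line. Your self-contained argument is correct and is exactly the standard one. For items~1 and~2, stacking the deletion set on a single root-to-leaf path and hanging each component below it (as a path of length at most $c$, or recursively as a decomposition of height at most $k'+c$ for the large components) gives a forest whose closure contains $G$, and the height bounds are right; your remark about the off-by-one between ``height'' and ``depth'' only makes the bounds slacker, never tighter, so the stated inequalities hold under either convention. For item~3, the bags $\chi(t)=\{t\}\cup\{\text{ancestors of }t\}$ do satisfy (P1) and (P2) for the reasons you give; the only detail you gloss over is that a tree-depth decomposition may be a forest rather than a tree, but joining its components by arbitrary edges between roots preserves both properties, so this is cosmetic. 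In short: the paper delegates to the literature, and your explicit constructions are a valid (and slightly sharper, since they actually give $\tw(G)\leq\td(G)-1$) substitute.
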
 
\fi

\iflong\subsection{Locally Constrained Homomorphisms}\fi\ifshort\subparagraph*{Locally Constrained Homomorphisms.}\fi

We always allow self-loops for the host graph, but not for the guest graph (see also Section~\ref{s-intro}).
Here we show some basic properties of locally constrained homomorphisms.

\iflong \begin{observation} \fi \ifshort \begin{observation}[$\star$] \fi\label{obs:surjective}
  Let $G$ and $H$ be non-empty connected graphs and let $\phi$ be a locally surjective homomorphism from $G$ to $H$.
  Then $\phi$ is surjective.
\end{observation}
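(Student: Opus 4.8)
The plan is to use connectivity of $H$ together with the local surjectivity condition to propagate reachability. First I would pick any vertex $w \in V(H)$ that lies in the image $\phi(V(G))$; such a vertex exists since $G$ is non-empty. The goal is to show $\phi(V(G)) = V(H)$, and since $H$ is connected it suffices to show that $\phi(V(G))$ is a non-empty set that is ``closed under taking neighbours'' in $H$, i.e.\ if $x \in \phi(V(G))$ and $xy \in E(H)$ then $y \in \phi(V(G))$. Indeed, once we know $\phi(V(G))$ is non-empty and closed under neighbours, a standard argument (walk from any vertex of $H$ to a vertex of $\phi(V(G))$ along a path, using connectedness) forces $\phi(V(G)) = V(H)$.

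The key step is the closure property. Suppose $x = \phi(u)$ for some $u \in V(G)$, and let $y \in N_H(x)$. I would now invoke local surjectivity at $u$: the restriction of $\phi$ to $N_G(u)$ is surjective onto $N_H(\phi(u)) = N_H(x)$. Since $y \in N_H(x)$, there is some $v \in N_G(u)$ with $\phi(v) = y$, hence $y \in \phi(V(G))$. This establishes closure. One should double-check the edge case where $G$ might have an isolated vertex: if $u$ is isolated in $G$ then $N_G(u) = \emptyset$, and local surjectivity at $u$ would require $N_H(\phi(u)) = \emptyset$; but this is fine — it just means $\phi(u)$ is isolated in $H$, and since $H$ is connected and non-empty, $H$ is a single vertex with (possibly) a self-loop, and surjectivity is immediate. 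In the main case we may assume every vertex of $G$ has a neighbour, or simply note that the argument above only needs one vertex $u$ mapping to $x$ for which we can extract a preimage of each neighbour of $x$; if $x$ has a neighbour $y$ then $N_H(x)\neq\emptyset$, so by local surjectivity $N_G(u)\neq\emptyset$ and the extraction goes through.

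Having closure and non-emptiness of $\phi(V(G))$, I would finish as follows: take an arbitrary $z \in V(H)$. Since $H$ is connected and $\phi(V(G)) \neq \emptyset$, there is a path $z = z_0, z_1, \dots, z_\ell = w$ in $H$ from $z$ to some $w \in \phi(V(G))$. Walking backwards along this path and applying the closure property $\ell$ times, we get $z_{\ell-1}, z_{\ell-2}, \dots, z_0 = z$ all lie in $\phi(V(G))$. Hence $z \in \phi(V(G))$, and since $z$ was arbitrary, $\phi$ is surjective.

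I do not expect a serious obstacle here; the only subtlety is bookkeeping around isolated vertices / graphs with no edges, which the remark above handles. The heart of the argument is the single observation that local surjectivity lets us pull back neighbours of any vertex in the image, which combined with connectivity of $H$ spreads the image over all of $V(H)$.
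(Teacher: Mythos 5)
Your proposal is correct and rests on the same key insight as the paper's proof: local surjectivity lets you pull back any neighbour of a vertex in the image, so the image is closed under taking neighbours in $H$, and connectivity of $H$ finishes the job. The paper merely phrases this as a contradiction (if the image were proper, connectivity of $H$ would yield an edge from the image to its complement, violating local surjectivity at a preimage of the image-side endpoint), whereas you argue directly by propagating along paths; this is the same argument in contrapositive form.
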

\iflong
\begin{proof}
  Suppose not, and let $C$ be the set of vertices in $V(H)\setminus \phi(V(G))$.
  Note that $C\neq \emptyset$ (because otherwise $\phi$ is surjective) and $\phi(V(G))\neq \emptyset$ (because $G$ is non-empty).
  Because $H$ is connected, there is an edge $uv \in E(H)$ such that $u\in V(C)$ and $v \in \phi(V(G))$.
  But then, the mapping $\phi_x : N_G(x) \rightarrow N_H(v)$ is not surjective for any vertex $x \in \phi^{-1}(v)$.
\end{proof}
\fi

\iflong \begin{observation} 
  \label{obs:preimage}
  Let $G$ and $H$ be non-empty connected graphs with a homomorphism $\phi$ from $G$ to $H$ and let $I \subseteq \phi(V(G))$.
  Let $P=\phi^{-1}(I)$ and $\phi_R=\fproj{\phi}{P}$.
  If $\phi$ is a locally injective, surjective or bijective homomorphism, then $\phi_R$ is a locally injective, surjective or bijective homomorphism, respectively, from $G[P]$ to $H[I]$.
\end{observation}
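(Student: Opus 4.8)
The plan is to verify directly that restricting $\phi$ to the preimage of $I$ preserves whichever local property $\phi$ has. First I would set $P = \phi^{-1}(I)$ and $\phi_R = \fproj{\phi}{P}$, and observe that $\phi_R$ is indeed a homomorphism from $G[P]$ to $H[I]$: if $uv \in E(G[P])$ then $uv \in E(G)$, so $\phi(u)\phi(v) \in E(H)$, and since $\phi(u),\phi(v) \in I$ this edge lies in $H[I]$. The heart of the argument is the claim that for every vertex $u \in P$, the open neighbourhood of $u$ in $G[P]$ is exactly $N_G(u) \cap \phi^{-1}(I)$, and that $\phi$ maps this set onto (a subset of, resp.\ injectively into) $N_H(\phi(u)) \cap I = N_{H[I]}(\phi(u))$.

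The key step is the following local observation. Fix $u \in P$ and write $v = \phi(u) \in I$. Since $\phi$ is a homomorphism, $\phi(N_G(u)) \subseteq N_H(v)$, so $\phi$ restricted to $N_G(u)$ already lands in $N_H(v)$; the restriction $\phi_u'$ of $\phi_R$ to $N_{G[P]}(u)$ is then $\phi_u$ restricted further to those neighbours whose image lies in $I$, i.e.\ to $\phi_u^{-1}(N_H(v) \cap I)$. For the injective case this is immediate: a restriction of an injective map is injective. For the surjective case, I would argue: given $w \in N_H(v) \cap I = N_{H[I]}(v)$, local surjectivity of $\phi$ at $u$ gives some $x \in N_G(u)$ with $\phi(x) = w$; but $w \in I$, so $x \in \phi^{-1}(I) = P$, hence $x \in N_{G[P]}(u)$, and $\phi_R(x) = w$. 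Thus $\phi_u'$ is surjective onto $N_{H[I]}(v)$. The bijective case follows by combining the two. This also implicitly uses that $N_H(v) \cap I$ is precisely the neighbourhood of $v$ in $H[I]$, which is just the definition of an induced subgraph (taking care, as noted in the preliminaries, that a self-loop $vv \in E(H)$ with $v \in I$ survives in $H[I]$, so the convention $v \in N_H(v)$ is respected on both sides).

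I do not expect any serious obstacle here: the statement is essentially a bookkeeping exercise, and the only point requiring a little care is making sure the ``onto'' direction genuinely uses $w \in I$ to pull the witnessing preimage $x$ back into $P$ — without that, a neighbour realizing $w$ might lie outside $P$ and the restricted map could fail to be surjective. A secondary point of care is the treatment of loops at vertices of $I$, handled by the induced-subgraph convention above. The hypotheses that $G$ and $H$ are non-empty and connected and that $I \subseteq \phi(V(G))$ are not actually needed for this observation (they matter for Observation~\ref{obs:surjective}, which is what guarantees $I$ can be taken to be the full image); the proof goes through for arbitrary $G$, $H$, $\phi$ and $I \subseteq V(H)$, and I would simply remark on this if it is convenient later, or leave the statement as phrased.
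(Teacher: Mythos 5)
Your proposal is correct and follows essentially the same route as the paper's proof: the homomorphism and injective cases are immediate from restriction, and the surjective case hinges on exactly the same key fact, namely that any witness $x\in N_G(u)$ with $\phi(x)=w\in I$ automatically lies in $P=\phi^{-1}(I)$ (the paper phrases this contrapositively, as ``$\phi$ does not map any vertex of $V(G)\setminus P$ into $I$'', and argues by contradiction, but the content is identical). Your additional remarks on self-loops and on the dispensability of the connectivity hypotheses are accurate but not needed.
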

\fi
\iflong
\begin{proof}
  Clearly, $\phi_R$ is a homomorphism from $G[P]$ to $H[I]$ and since $\phi_R$ is a restriction of $\phi$, it follows that if $\phi$ is locally injective, then so is $\phi_R$.
  It remains to show that if $\phi$ is locally surjective, then so is $\phi_R$.
  Suppose, for contradiction, that $\phi$ is locally surjective, but $\phi_R$ is not.
  Then there is a vertex $v \in P$ such that $\phi_R(N_G(v)\cap P) \subsetneq N_H(\phi_R(v)) \cap I$.
  However, since $\phi$ does not map any vertex in $V(G)\setminus P$ to a vertex of~$I$, it follows that $\phi(N_G(v)) \cap I \subsetneq	N_H(\phi(v)) \cap I$, so $\phi(N_G(v)) \neq N_H(\phi(v))$. Thus $\phi$ is not surjective, a contradiction.
\end{proof}
\fi

\iflong \begin{observation} \fi \ifshort \begin{observation}[$\star$] \fi\label{obs:separators}
  Let $G$ and $H$ be graphs, let $D \subseteq V(G)$, and let $\phi$ be a homomorphism from $G$ to $H$. 
  Then, for every component $C_G$ of $G\setminus D$ such that $\phi(C_G)\cap\phi(D)=\emptyset$, there is a component $C_H$ of $H\setminus \phi(D)$ such that
$\phi(C_G) \subseteq C_H$.
  Moreover, if $\phi$ is locally injective/surjective/bijective, then $\phi_R=\fproj{\phi}{D\cup C_G}$ is a homomorphism from $G'=G[D\cup C_G]$ to $H'=H[\phi(D)\cup C_H]$ that is locally injective/surjective/bijective for every  $v \in V(C_G)$.
\end{observation}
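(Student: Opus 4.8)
The plan is to prove the two assertions in turn, the first by a routine connectivity argument and the second by the observation that restricting $\phi$ to $D\cup C_G$ (on the guest side) and to $\phi(D)\cup C_H$ (on the host side) does not alter the \emph{open} neighbourhood of any vertex lying in $C_G$; this lets the local conditions transfer verbatim. Throughout I follow the paper's convention of identifying a component with its vertex set, so $\phi(C_G)$ means $\phi(V(C_G))$, and I note at the outset that $V(C_G)\neq\emptyset$, so $\phi(V(C_G))\neq\emptyset$ and the component $C_H$ will genuinely exist.

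First I would establish that $\phi(C_G)$ is contained in a single component of $H\setminus\phi(D)$. Since $C_G$ is connected, any two vertices $u,v\in V(C_G)$ are joined by a $u$--$v$ walk inside $C_G$; applying $\phi$ edge by edge yields a $\phi(u)$--$\phi(v)$ walk in $H$ all of whose vertices lie in $\phi(C_G)$. By hypothesis $\phi(C_G)\cap\phi(D)=\emptyset$, so this walk avoids $\phi(D)$ and hence lies inside $H\setminus\phi(D)$; therefore $\phi(u)$ and $\phi(v)$ lie in the same component of $H\setminus\phi(D)$. As $u,v$ were arbitrary, there is a single component $C_H$ of $H\setminus\phi(D)$ with $\phi(C_G)\subseteq V(C_H)$.

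For the ``moreover'' part, let $\phi_R=\fproj{\phi}{D\cup C_G}$. It maps $D$ into $\phi(D)$ and, by the first part, $V(C_G)$ into $V(C_H)$, so its image lies in $V(H')=\phi(D)\cup V(C_H)$; and every edge of $G'=G[D\cup C_G]$ is an edge of $G$, whose $\phi$-image is an edge of $H$ with both endpoints in $V(H')$, hence an edge of the induced subgraph $H'$. Thus $\phi_R$ is a homomorphism from $G'$ to $H'$. The crux is then the pair of identities $N_{G'}(v)=N_G(v)$ and $N_{H'}(\phi(v))=N_H(\phi(v))$, valid for every $v\in V(C_G)$. For the first: a $G$-neighbour of $v$ either lies in $D$ or, being adjacent to $v$ in $G\setminus D$, lies in the same component $C_G$, so in either case it lies in $V(G')$. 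For the second: $\phi(v)\in V(C_H)$, and an $H$-neighbour $w$ of $\phi(v)$ either lies in $\phi(D)$ or is joined to $\phi(v)$ by an edge with neither endpoint in $\phi(D)$, and hence lies in the same component $C_H$ of $H\setminus\phi(D)$; in either case $w\in V(H')$. (Self-loops cause no trouble: $G$ has none, and a self-loop at $\phi(v)$ is irrelevant since $\phi(v)\in V(C_H)\subseteq V(H')$.) Consequently, for $v\in V(C_G)$ the local map $(\phi_R)_v\colon N_{G'}(v)\to N_{H'}(\phi_R(v))$ is literally the same function as $\phi_v\colon N_G(v)\to N_H(\phi(v))$, so it is injective, surjective or bijective exactly when $\phi$ is locally injective, surjective or bijective.

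I do not expect a genuine obstacle here; the statement is essentially bookkeeping. The only points requiring care are: the computation $N_{H[S]}(x)=N_H(x)\cap S$ together with the permitted self-loops of $H$; and the fact that the local-constraint claim must be asserted only for $v\in V(C_G)$ — at vertices of $D$ the neighbourhood may strictly shrink when passing from $G$ to $G'$, which is precisely why the conclusion is phrased as ``locally injective/surjective/bijective for every $v\in V(C_G)$'' rather than for all of $V(G')$.
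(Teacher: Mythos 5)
Your proof is correct and follows essentially the same route as the paper's: a connectivity argument for the first claim (the paper phrases it as a contradiction on a single edge of $C_G$ whose endpoints would land in different components, you phrase it directly via walks), and preservation of neighbourhoods of vertices in $V(C_G)$ under the restriction for the second. If anything, you are slightly more thorough than the paper, which only records $N_G[v]=N_{G'}[v]$ on the guest side, whereas you also verify $N_{H'}(\phi(v))=N_H(\phi(v))$ on the host side -- a detail genuinely needed for the surjective case.
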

\iflong
\begin{proof}
  Suppose for a contradiction that this is not the case.
  Then, there is a component $C_G$ of $G\setminus D$ and an edge $uv \in E(C_G)$ such that $\phi(u)$ and $\phi(v)$ are in different components of $H\setminus \phi(D)$.
  Therefore, $\phi(u)\phi(v) \notin E(H)$, contradicting our assumption that $\phi$ is a homomorphism.

  Towards showing the second statement, first note that $\phi_R$ is a homomorphism from $G'$ to $H'$.
  Moreover, $N_G[v]=N_{G'}[v]$ for every vertex $v \in V(C_G)$, so if $\phi$ is locally injective/surjective/bijective for a vertex $v \in V(C_G)$, then so is $\phi_R$.
\end{proof}
\fi

\iflong The following lemma is a basic but crucial observation showing that if $G\surjc H$ and
$G$ has a small $c$-deletion set, then so does $H$.\fi
\iflong \begin{lemma} \fi \ifshort \begin{lemma}[$\star$] \fi\label{lem:delSet}
  Let $G$ and $H$ be non-empty connected graphs, let $D \subseteq
  V(G)$ be a $c$-deletion set for $G$, and let
  $\phi$ be a locally surjective homomorphism
  from $G$ to $H$.
  Then $\phi(D)$ is a $c$-deletion set for $H$.
\end{lemma}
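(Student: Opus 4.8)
The statement to prove is that every connected component of $H\setminus\phi(D)$ has at most $c$ vertices. First I would note that, since $G$ and $H$ are non-empty and connected, Observation~\ref{obs:surjective} gives that $\phi$ is surjective; in particular every vertex of $H$ outside $\phi(D)$ has a non-empty preimage, and that preimage is necessarily disjoint from $D$ (a vertex of $D$ maps into $\phi(D)$). The heart of the argument is the structural claim: for every component $C_H$ of $H\setminus\phi(D)$ there is a \emph{single} component $C_G$ of $G\setminus D$ with $V(C_H)\subseteq\phi(V(C_G))$. Granting this, since $D$ is a $c$-deletion set for $G$ we have $|V(C_G)|\le c$, hence $|V(C_H)|\le|\phi(V(C_G))|\le|V(C_G)|\le c$, which finishes the proof.

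To prove the claim I would pick any $w\in V(C_H)$, choose a preimage $v\in\phi^{-1}(w)$ (so $v\notin D$), and let $C_G$ be the component of $G\setminus D$ containing $v$. Then I would show by induction along paths inside $C_H$ starting at $w$ that every vertex of $C_H$ lies in $\phi(V(C_G))$: suppose $w'\in V(C_H)\cap\phi(V(C_G))$, say $w'=\phi(v')$ with $v'\in V(C_G)$, and let $w''$ be any neighbour of $w'$ in $H$ that lies in $C_H$. Local surjectivity of $\phi$ at $v'$ yields a neighbour $v''$ of $v'$ in $G$ with $\phi(v'')=w''$. Since $w''\in V(C_H)$ we have $w''\notin\phi(D)$, which forces $v''\notin D$; as $v'v''\in E(G)$ with both endpoints outside $D$, $v''$ lies in the same component of $G\setminus D$ as $v'$, i.e.\ $v''\in V(C_G)$, so $w''\in\phi(V(C_G))$. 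Because $C_H$ is connected, this propagates to all of $V(C_H)$, proving the claim.

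Alternatively, the same conclusion can be obtained more slickly via Observations~\ref{obs:surjective} and~\ref{obs:preimage}: put $I=V(H)\setminus\phi(D)$ (note $I\subseteq\phi(V(G))$ by surjectivity) and $P=\phi^{-1}(I)\subseteq V(G)\setminus D$; by Observation~\ref{obs:preimage}, $\phi|_P$ is a locally surjective homomorphism from $G[P]$ to $H\setminus\phi(D)$. Fixing a component $C_H$ of $H\setminus\phi(D)$, a chosen vertex $w\in V(C_H)$ and a preimage $v\in P$ of $w$, and letting $C_G$ be the component of $G[P]$ containing $v$, the restriction $\phi|_{V(C_G)}$ is a locally surjective homomorphism between the non-empty connected graphs $C_G$ and $C_H$ (using that $N_{G[P]}(v)\subseteq V(C_G)$ and $N_{H\setminus\phi(D)}(w')=N_{C_H}(w')$ for $w'\in V(C_H)$), hence surjective by Observation~\ref{obs:surjective}; since every component of $G[P]$ is contained in a component of $G\setminus D$ and so has at most $c$ vertices, $|V(C_H)|\le c$.

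I do not anticipate a genuine obstacle here; the one point that needs care is that tracing a path inside $C_H$ stays within a \emph{single} component of $G\setminus D$ rather than escaping into $D$. This is precisely where local surjectivity (rather than being merely a homomorphism) is used, together with the fact that all images of vertices of $D$ lie in $\phi(D)$, which is disjoint from $C_H$.
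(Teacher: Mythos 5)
Your proof is correct and follows essentially the same route as the paper: both pick a preimage of a vertex of $C_H$, note it lies in a component $C_G$ of $G\setminus D$ with $|V(C_G)|\le c$, and use local surjectivity together with connectivity of $C_H$ to show $\phi(V(C_G))$ must cover all of $V(C_H)$. The only difference is presentational — the paper argues by contradiction via a boundary edge of $C_H$ (using Observation~\ref{obs:separators} to restrict $\phi$ to $D\cup C_G$), while you propagate the coverage directly along paths in $C_H$ — but the underlying mechanism is identical.
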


\iflong
\begin{proof}
  Suppose not, then there is a component $C_H$ of $H\setminus \phi(D)$ such that $|C_H|>c$.
  By Observation~\ref{obs:surjective}, it follows that $\phi$ is surjective and therefore $\phi^{-1}(C_H)$ is defined.
  Let $v \in \phi^{-1}(C_H)$.
  Then $v \notin D$ and therefore $v$ is in some component $C_G$ of $G \setminus D$.
  Observation~\ref{obs:separators} implies that $\phi_R=\fproj{\phi}{D\cup C_G}$ is a homomorphism from $G[D\cup C_G]$ to $H[\phi(D)\cup C_H]$ that is locally surjective for every $v \in V(C_G)$.

  Now $|V(C_G)|<|V(C_H)|$, so there must be a vertex in $V(C_H) \setminus \phi_R(C_G)$.
  Because $C_H$ is connected, there is an edge $xy \in E(C_H)$ such that $x\in V(C_H)\setminus \phi_R(C_G)$ and $y \in \phi_R(V(C_G))$.
  But then, the mapping $\phi_z : N_G(z) \rightarrow N_H(y)$ is not surjective for any vertex $z \in \phi_R^{-1}(y)$.
\end{proof}
\fi

\iflong\subsection{Integer Linear Programming}\fi\ifshort\subparagraph*{Integer Linear Programming.}\fi

Given a set $X$ of variables and a set $C$ of linear constraints (i.e.
inequalities) over the variables in $X$ with integer coefficients, the
task in the feasibility variant of \emph{integer linear programming
  {\sc (ILP)}} is to decide whether there is an assignment $\alpha : X
\rightarrow \Z$ of the variables satisfying all constraints in $C$.
We will use the following well-known result by Lenstra~\cite{Lenstra83}.

\begin{proposition}[\cite{FellowsLokshtanovMisraRS08,FrankTardos87,Kannan87,Lenstra83}]\label{prop:Lenstra}
  \label{pro:pilp}
  {\sc ILP} is fpt parameterized by the
  number of variables.
\end{proposition}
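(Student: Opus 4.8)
The plan is to establish the classical result of Lenstra underlying this proposition: an algorithm that decides feasibility of an {\sc ILP} in time $f(d)\cdot|I|^{\bigoh(1)}$, where $d=|X|$ is the number of variables. First I would reformulate the question geometrically. The constraints in $C$ describe a rational polyhedron $P=\{x\in\mathbb{R}^{d}\SM Ax\leq b\}$, and feasibility is exactly the question of whether $P$ contains a point of $\Z^{d}$. Since the set of integer points in $P$ is unchanged by suitable scaling and rounding, and since by the result of Frank and Tardos~\cite{FrankTardos87} one may preprocess $A$ and $b$ so that all coefficients have bit-size bounded polynomially in $d$ (independently of the magnitude of the original numbers), I may assume throughout that the arithmetic stays polynomially bounded.

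The engine of the argument is the flatness theorem from the geometry of numbers: there is a function $\omega\colon\N\to\N$ such that every convex body in $\mathbb{R}^{d}$ containing no integer point has lattice width at most $\omega(d)$, where the lattice width of $P$ is $\min_{c\in\Z^{d}\setminus\{0\}}\bigl(\max_{x\in P}c^{\top}x-\min_{x\in P}c^{\top}x\bigr)$. I would combine this with LLL lattice-basis reduction to compute, in polynomial time, a nonzero integer direction $c$ whose width on $P$ is within a factor depending only on $d$ of the minimum lattice width. This yields a recursive decision procedure. Compute such a direction $c$ and its width $w$ (both obtained from the two linear programs $\max/\min\,c^{\top}x$ over $P$). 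If $w>\omega(d)$, then by the flatness theorem $P$ must contain an integer point and I answer \emph{yes}. Otherwise every integer point of $P$ lies on one of at most $\omega(d)+1$ parallel hyperplanes $c^{\top}x=t$ with $t$ an integer in the interval $[\min_{x\in P}c^{\top}x,\ \max_{x\in P}c^{\top}x]$.

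For each of these $\bigoh(\omega(d))$ values of $t$ I would restrict to the slab $P\cap\{c^{\top}x=t\}$. A unimodular change of coordinates mapping $c$ onto a coordinate vector turns this intersection into an integer-feasibility instance in dimension $d-1$ (the constraint $c^{\top}x=t$ pins one coordinate to an integer), on which I recurse. The recursion has depth $d$ and branching factor at most $\omega(d)+1$ at every node, so the number of leaves is bounded by $\prod_{i=1}^{d}(\omega(i)+1)$, a function of $d$ alone; the work at each node---LLL reduction, a constant number of linear-program solves, and the unimodular transformation---is polynomial in the (preprocessed, hence polynomially bounded) input size. Multiplying the two bounds gives the running time $f(d)\cdot|I|^{\bigoh(1)}$ and hence membership in \FPT.

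The main obstacle is the flatness step. Guaranteeing that the branching factor is controlled by a function of $d$ alone requires both the flatness theorem and the assurance that LLL actually produces a direction realizing the lattice width up to a $d$-dependent factor, so that the dichotomy ``large width $\Rightarrow$ feasible'' versus ``small width $\Rightarrow$ few hyperplanes'' is valid. A secondary technical point is keeping the bit-complexity under control across the $d$ levels of recursion, for which the Frank--Tardos preprocessing and the fact that unimodular transformations do not blow up coefficient sizes are precisely what is needed. Unbounded polyhedra require a brief separate treatment (for instance by reducing to a bounded instance via the recession cone), but they cause no essential difficulty.
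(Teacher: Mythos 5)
This proposition is not proven in the paper at all: it is imported as a known black-box result, with the proof residing in the cited works of Lenstra, Kannan, and Frank--Tardos. Your sketch correctly reconstructs exactly that classical argument (Frank--Tardos coefficient reduction, Khinchine-type flatness, LLL-based approximation of the lattice width, and branching on at most $\omega(d)+1$ hyperplanes with recursion in dimension $d-1$), so it matches the approach the paper's citations point to; the only loose end, which you essentially acknowledge, is that the feasibility threshold must be $\omega(d)$ multiplied by the LLL approximation factor rather than $\omega(d)$ itself, and that one must first reduce to a bounded, full-dimensional body before invoking flatness --- both standard and benign adjustments.
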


\section{Our Algorithmic Framework}\label{s-algo}

In this section we present our main algorithmic framework that will
allow us to show that \LSHOM{}, \LBHOM{} and \textsc{Role
  Assignment} are fpt parameterized by
$k+c$, whenever the guest graph has $c$-deletion set number at most
$k$. To illustrate the main ideas behind our framework, let us first
explain these ideas for the examples of \LSHOM{} and \LBHOM{}. In this case we
are given $G$ and $H$ and we know that $G$ has
a $c$-deletion set of size at most $k$. Because of
Lemma~\ref{lem:delSet}, it then follows that
if $(G,H)$ is a yes-instance of \LSHOM{} or \LBHOM{}, then 
$H$ also has a
$c$-deletion set of size at most $k$. Informally, our next step, which
is given in Section~\ref{ssec:phds}, is to
compute a small (i.e. with size bounded by a function of $k+c$) set $\Phi$ 
of partial locally surjective homomorphisms
such that (1) every locally surjective homomorphism from $G$ to
$H$ augments some $\phi_P \in \Phi$ and (2) for every $\phi_P \in
\Phi$, the domain of $\phi_P$ is a $(k,c)$-extended deletion set of $G$
and the co-domain of $\phi_P$ is a $c'$-deletion set of $H$, where
$c'$ is bounded by a function of $k+c$.
Here and in what follows, we say that a function $\phi: V(G)\rightarrow
V(H)$ \emph{augments} (or is an \emph{augmentation} of) a partial
function $\phi_P : V_G \rightarrow V_H$, where $V_G \subseteq V(G)$
and $V_H \subseteq V(H)$ if $v\in V_G \Leftrightarrow \phi(v)\in V_H$
and $\fproj{\phi}{V_G}=\phi_P$.
This allows us to reduce our problems to
(boundedly many) subproblems of the following form: Given a
$(k,c)$-extended deletion set $D_G$ for $G$, a $c'$-deletion set
$D_H$ for $H$, and a locally surjective (respectively bijective)
homomorphism $\phi_P$ from $D_G$ to $D_H$, find a locally surjective
homomorphism $\phi$ from $G$ to $H$ that augments
$\phi_P$.
In Section~\ref{ssec:ILPform} we will then show how to formulate this subproblem as an integer linear
program and in Section~\ref{ssec:solvILP} we will show that we can efficiently
construct and solve the ILP for this subproblem.
Importantly, our ILP formulation will
allow us to solve a much more general problem, where the host graph
$H$ is not explicitly given, but defined in terms of a set of linear
constraints. We will then exploit this in Section~\ref{sec:appl} to solve not
only \LSHOM{} and \LBHOM{}, but also the \textsc{Role Assignment} problem.

\subsection{Partial Homomorphisms for the Deletion Set}
\label{ssec:phds}

For a graph $G$ and $m\in \mathbb{N}$ we let $D_G^{m}:=\{v\in
V(G)\mid \deg_G(v)\geq m\}$. The aim of this subsection is to show
that there is a small set $\Phi$ of partial homomorphisms such that
every locally surjective (respectively bijective) homomorphism from
$G$ to $H$ augments some $\phi_P \in \Phi$ and, for every $\phi_P \in
\Phi$, the domain of $\phi_P$ is a $(k,c)$-extended deletion set for~$G$ of size at most $k$
and its co-domain is a $c'$-deletion set of size at most $k$
for $H$. The main idea behind finding this set $\Phi$ is to consider
the set of high degree vertices in $G$ and $H$, i.e. the sets
$D_G^{k+c}$ and $D_H^{k+c}$. As it turns out (see
Lemma~\ref{lem:highDegreeDeletionSet}), for every subset
$D\subseteq D_G^{k+c}$, $D$ is a $(k-|D|,c)$-extended deletion set for
$G$ of size at most $k$ and $D_H^{k+c}$ is a $c'$-deletion set for~$H$ of size
at most $k$, where $c'=kc(k+c)$. Moreover, as we will show in
Lemma~\ref{lem:LBHMapsHighDegreeToHighDeptree}, every locally surjective (respectively bijective)
homomorphism from $G$ to $H$ has to augment a locally surjective
(respectively bijective) homomorphism from some induced subgraph of $G[D_G^{k+c}]$ to $D_H=D_H^{k+c}$. Intuitively, this holds because for every locally surjective homomorphism, only vertices of high degree in $G$ can be mapped to a
vertex of high degree in $H$ and for every vertex in~$H$, there must be a vertex in~$G$ that is mapped to it.
\iflong \begin{lemma} \fi \ifshort \begin{lemma}[$\star$] \fi\label{lem:highDegreeDeletionSet}
  Let $G$ be a graph. If $G$ has a $c$-deletion set of size at most $k$,
  then the set $D_G^{k+c}$ is
  a $kc(k+c)$-deletion set of size at most $k$.
  Furthermore, every subset $D\subseteq D_G^{k+c}$ is a
  $(k-|D|,c)$-extended deletion set of $G$.
\end{lemma}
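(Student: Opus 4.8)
The plan is to fix a $c$-deletion set $S$ for $G$ with $|S| \le k$ and analyse how the high-degree set $D_G^{k+c}$ relates to $S$. First I would observe that $D_G^{k+c} \subseteq S$: if $v \notin S$, then $v$ lies in some component $C$ of $G \setminus S$ with $|V(C)| \le c$, so all neighbours of $v$ outside $S$ lie in $C$ (there are at most $c-1$ of them) and all neighbours inside $S$ number at most $|S| \le k$, giving $\deg_G(v) \le k + c - 1 < k+c$; hence $v \notin D_G^{k+c}$. Consequently $|D_G^{k+c}| \le |S| \le k$, which gives the size bound in both claims.

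Next I would bound the component sizes of $G \setminus D_G^{k+c}$ to get the $kc(k+c)$-deletion-set claim. Write $D = D_G^{k+c}$ and $S' = S \setminus D$; then $S'$ is still a $c$-deletion set of the graph $G \setminus D$ (deleting $D$ only removes vertices and edges), and $|S'| \le k$. Now take any component $C$ of $G \setminus D$. The set $S' \cap V(C)$ has size at most $k$, and removing it from $C$ leaves pieces each of size at most $c$; since $C$ is connected, each vertex of $S' \cap V(C)$ has degree at most $k+c-1$ in $G$ (because it is not in $D$), so it is adjacent to at most $k+c-1$ of the small pieces. Hence the pieces of $C \setminus (S' \cap V(C))$, together with the at most $k$ cut vertices, can be reassembled: $C$ is obtained from at most $k$ vertices plus at most $k(k+c-1)$ pieces of size at most $c$ each (a piece not adjacent to any cut vertex would be its own component, so all surviving pieces attach to some cut vertex), giving $|V(C)| \le k + kc(k+c-1) \le kc(k+c)$ for $k,c \ge 1$. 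Thus $D$ is a $kc(k+c)$-deletion set of size at most $k$; note in passing one could also invoke Proposition~\ref{pro:parrel} to bound tree-depth, but here the direct component-size count is what is needed.

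For the second claim, fix $D \subseteq D_G^{k+c}$. Since $D \subseteq S$, the set $S \setminus D$ is a $c$-deletion set of $G \setminus D$ of size at most $k - |D|$ (using $|S| \le k$ and $|D|$ vertices removed). In particular every component of $G \setminus D$ either has at most $c$ vertices, or contains a vertex of $S \setminus D$ and hence has a $c$-deletion set of size at most $k - |D|$; moreover the number of components of $G \setminus D$ that have more than $c$ vertices is at most $|S \setminus D| \le k - |D| \le k$, since each such component must meet $S \setminus D$. This is exactly the definition of a $(k-|D|,c)$-extended deletion set, completing the proof.

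The only slightly delicate point — and the step I would be most careful with — is the reassembly count for the component sizes: one must check that every piece of $C$ remaining after deleting the cut vertices $S' \cap V(C)$ is indeed attached to one of those cut vertices (otherwise the piece would already be a full component of $G \setminus D$, contradicting that it sits inside the larger component $C$), so that the number of pieces is bounded by (number of cut vertices) $\times$ (max degree of a cut vertex). Everything else is bookkeeping with the inequality $\deg_G(v) < k+c$ for $v \notin D_G^{k+c}$.
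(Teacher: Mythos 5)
Your proposal is correct and follows essentially the same approach as the paper's proof: both rely on the observation that $D_G^{k+c}$ is contained in any $c$-deletion set $S$ of size at most $k$ (via the degree bound $k+c-1$ for vertices outside $S$), then bound component sizes of $G\setminus D_G^{k+c}$ by attaching at most $k+c-1$ pieces of size at most $c$ to each of the at most $k$ leftover vertices of $S$, and finally use $S\setminus D$ as the witness $c$-deletion set for the extended-deletion-set claim. The only difference is cosmetic (you count per component of $G\setminus D_G^{k+c}$ where the paper counts globally over the components adjacent to $S\setminus D_G^{k+c}$), and the delicate point you flag about every surviving piece attaching to a cut vertex is handled correctly.
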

\iflong
\begin{proof}
  Let $D'$ be a $c$-deletion set of $G$ of size at most $k$.
  Then every vertex $v\in V(G)\setminus D'$ has degree at most $k+c-1$, as each of its neighbours lies either in its own component of $G\setminus D'$ or in $D'$.
  Hence $D_G^{k+c}\subseteq D'$ and therefore $|D_G^{k+c}|\leq k$.
  Let $C_1,\ldots, C_m$ be the components of $G \setminus D'$ that contain a vertex adjacent to a vertex in $D'\setminus D_G^{k+c}$.
  Since $|D'\setminus D_G^{k+c}|\leq k$ and every vertex in $D'\setminus D_G^{k+c}$ has degree at most $k+c-1$, we find that $m\leq k(k+c-1)$ and $|C_1\cup \dots \cup C_m\cup (D'\setminus D_G^{k+c})|\leq k+kc(k+c-1)\leq kc(k+c)$.
  Since every component in $G\setminus D_G^{k+c}$ is either contained in a component of $G\setminus D'$ or contained in $C_1\cup \dots \cup C_m\cup (D'\setminus D_G^{k+c})$, we find that $D_G^{k+c}$ is a $kc(k+c)$-deletion set.

 Let $D\subseteq D_G^{k+c}\subseteq D'$. We will show that $D$ is a $(k-|D|,c)$-extended deletion set of $G$.
The components of $G\setminus D$ that contain no vertices from $D'\setminus D$ are components of $G\setminus D'$ and thus have size at most~$c$. Consider a component $C$ of $G\setminus D$ that contains at least one vertex from $D'\setminus D$. 
Let $D_C=V(C)\cap (D'\setminus D)$. Every component of $C\setminus D_C$ is a component of $G\setminus D'$ and thus has size at most~$c$.
Moreover, $D_C$ has size at most $|D'\setminus D|\leq k-|D|$.
 We conclude that every component of $G \setminus D$ either has size at most~$c$ or has a $c$-deletion set of size at most~$k-|D|$. Furthermore, since there are at most $k-|D|$ vertices in $D_G^{k+c} \setminus D$, and every component of $G \setminus D$ that has size larger than~$c$ must contain a vertex of $D_G^{k+c}$, it follows that there are at most $k-|D|$ components of $G \setminus D$ that have size larger than~$c$. This completes the proof.
\end{proof}
\fi
Using the above lemma, we now get to the most important result of this
subsection, which informally speaking shows that there are only boundedly many
possible pre-images of the vertices in $D_H^{k+c}$ and, moreover, these
pre-images are subsets of $D_G^{k+c}$. 

\begin{lemma}\label{lem:LBHMapsHighDegreeToHighDeptree}
  Let $G$ and $H$ be non-empty connected graphs such that $G$ has a $c$-deletion set of size at most $k$.
  If there is a locally surjective homomorphism $\phi$ from $G$ to $H$, then there is a set $D\subseteq D_G^{k+c}$ and a locally
  surjective homomorphism $\phi_P$ from $G[D]$ to $H[D_H^{k+c}]$
  such that $\phi$ augments $\phi_P$.
  If $\phi$ is locally bijective, then $D=D_G^{k+c}$ and $\phi_P$
  is a locally bijective homomorphism.
\end{lemma}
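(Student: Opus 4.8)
The plan is to let $D$ be the $\phi$-pre-image of the set $D_H^{k+c}$ of high-degree vertices of $H$, to show this pre-image is contained in $D_G^{k+c}$, and to show that restricting $\phi$ to $D$ yields the desired partial homomorphism. First I would record that, since $G$ and $H$ are non-empty and connected and $\phi$ is locally surjective, Observation~\ref{obs:surjective} gives that $\phi$ is in fact surjective; this is exactly what is needed to apply Observation~\ref{obs:preimage} later. The core claim is the containment $\phi^{-1}(D_H^{k+c})\subseteq D_G^{k+c}$. To prove it, let $v\in V(G)$ with $\phi(v)\in D_H^{k+c}$. Local surjectivity at $v$ means $\phi(N_G(v))=N_H(\phi(v))$, and hence, reading $\deg_H(\phi(v))$ as $|N_H(\phi(v))|$ (so that a self-loop of $H$ contributes a single element to the neighbourhood, consistently with the definition of $N_H$), we get $\deg_G(v)=|N_G(v)|\geq |\phi(N_G(v))|=|N_H(\phi(v))|=\deg_H(\phi(v))\geq k+c$, so $v\in D_G^{k+c}$.

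Next I would set $D:=\phi^{-1}(D_H^{k+c})$ and $\phi_P:=\fproj{\phi}{D}$; by the core claim, $D\subseteq D_G^{k+c}$. That $\phi$ augments $\phi_P$ is then immediate from the definitions: $v\in D\Leftrightarrow \phi(v)\in D_H^{k+c}$ by the choice of $D$, and $\fproj{\phi}{D}=\phi_P$ by construction. To see that $\phi_P$ is a locally surjective homomorphism from $G[D]$ to $H[D_H^{k+c}]$, apply Observation~\ref{obs:preimage} with $I:=D_H^{k+c}$: surjectivity of $\phi$ gives $I\subseteq V(H)=\phi(V(G))$, so the hypotheses hold; then $P=\phi^{-1}(I)=D$, and the observation yields precisely that $\fproj{\phi}{D}=\phi_P$ is a locally surjective homomorphism from $G[D]$ to $H[D_H^{k+c}]$. (This also covers the degenerate case $D_H^{k+c}=\emptyset$, where $\phi_P$ is the empty map; note that Observation~\ref{obs:preimage} does not require the induced subgraphs to be connected or non-empty.)

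Finally, suppose $\phi$ is locally bijective. Then $\phi$ is in particular locally surjective, so everything above applies verbatim with the same $D$ and $\phi_P$; it remains to show $D=D_G^{k+c}$ and that $\phi_P$ is locally bijective. For the inclusion $D_G^{k+c}\subseteq D$, take $v\in D_G^{k+c}$: local bijectivity at $v$ makes $\phi$ restrict to a bijection $N_G(v)\to N_H(\phi(v))$, so $\deg_H(\phi(v))=\deg_G(v)\geq k+c$, giving $\phi(v)\in D_H^{k+c}$ and hence $v\in D$. Together with $D\subseteq D_G^{k+c}$ this gives $D=D_G^{k+c}$. That $\phi_P$ is locally bijective follows from the locally bijective case of Observation~\ref{obs:preimage}, applied exactly as before.

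I do not anticipate a genuine obstacle here; the statement is essentially a packaging exercise. The two points that need a little care are: (i) the degree comparison, where one must fix the convention that a self-loop of the host contributes a single element to the neighbourhood, so that $\deg_H(\phi(v))\leq\deg_G(v)$ holds cleanly (otherwise the containment $\phi^{-1}(D_H^{k+c})\subseteq D_G^{k+c}$ could be off by one); and (ii) verifying the hypothesis $I\subseteq\phi(V(G))$ of Observation~\ref{obs:preimage}, which is exactly where the surjectivity of $\phi$ from Observation~\ref{obs:surjective} is used.
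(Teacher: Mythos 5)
Your proposal is correct and follows essentially the same route as the paper: both take $D=\phi^{-1}(D_H^{k+c})$, derive $D\subseteq D_G^{k+c}$ from the degree inequality $\deg_G(v)\geq\deg_H(\phi(v))$ (with equality in the bijective case), and verify local surjectivity/bijectivity of the restriction. The only cosmetic difference is that you delegate the last step to Observation~\ref{obs:preimage} while the paper inlines the same contradiction argument (and adds some facts about $\phi(D_G^{k+c})$ that are not needed for the statement itself).
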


\begin{proof}	
  By Lemma~\ref{lem:highDegreeDeletionSet}, $D_G^{k+c}$ is a $kc(k+c)$-deletion set of size at most $k$.
  Furthermore, observe that for a locally surjective homomorphism $\phi$ from $G$ to $H$, the inequality $\deg_G(v) \geq \deg_H(\phi(v))$ holds for every $v\in V(G)$ ($\deg_G(v) = \deg_H(\phi(v))$ holds in the locally bijective case).
  Since $\phi$ is surjective by \cref{obs:surjective}, this implies
  that $\phi(D_G^{k+c})\supseteq D_H^{k+c}$ (and if $\phi$
  is locally bijective, then $\phi(D_G^{k+c})= D_H^{k+c}$).
  By \cref{lem:delSet}, $\phi(D_G^{k+c})$ is a $kc(k+c)$-deletion set for $H$.
  Let $D = \phi^{-1}(D_H^{k+c})$, so $D\subseteq D_G^{k+c}$ (note that $D=D_G^{k+c}$ if $\phi$ is locally bijective).
  Now $\fproj{\phi}{D}$ is a surjective map from $D$ to $D_H^{k+c}$.
  Furthermore, $\phi(D_G^{k+c}\setminus D)\cap \phi(D)=\phi(D_G^{k+c}\setminus D)\cap D_H^{k+c}=\emptyset$.
  Moreover, for every $v \in V(G) \setminus D_G^{k+c}$, $\phi(v) \notin D_H^{k+c}=\fproj{\phi}{D}(D)$, since $\deg_G(v) \geq \deg_H(\phi(v))$.
  Furthermore, $\fproj{\phi}{D}$ is a homomorphism from $G[D]$ to $H[D_H^{k+c}]$ because $\phi$ is a homomorphism.
  We argue that $\fproj{\phi}{D}$ is locally surjective  (bijective resp.) by contradiction.
  Suppose $\fproj{\phi}{D}$ is not locally surjective.
  Then there is a vertex $u\in D$ and a neighbour $v \in D_H^{k+c}$ of $\fproj{\phi}{D}(u)$ such that $v\notin \fproj{\phi}{D}(N_G(u)\cap D)$.
  Since $\phi$ is locally surjective, there must be $w\in N_G(u)\setminus D$ such that $\phi(w)=v$.
  This contradicts the fact that $\phi(V(G)\setminus D) \cap D_H^{k+c}=\emptyset$.
  Hence $\fproj{\phi}{D}$ is a locally surjective homomorphism.
  In the bijective case we just need to additionally observe that $\fproj{\phi}{D}$ restricted to the neighbourhood of any vertex $v\in D$ must be injective.
  This completes the proof.
\end{proof}

Finally, we show that we can easily compute all possible
pre-images of $D_H^{k+c}$ in any locally surjective (respectively
bijective) homomorphism from $G$ to $H$.
\iflong \begin{lemma} \fi \ifshort \begin{lemma}[$\star$] \fi\label{lem:runTimePartialHom}
  Let $G$ and $H$ be non-empty connected graphs and let $k,c$ be
  non-negative integers.
  For any $D\subseteq D_G^{k+c}$, we can compute the set $\Phi_D$
  of all locally surjective (respectively bijective) homomorphisms $\phi_P$
  from $G[D]$ to $H[D_H^{k+c}]$ in $\bigoh(|D|^{|D|+2})$ time.
  Furthermore, $|\Phi_D|\leq |D|^{|D|}$.
\end{lemma}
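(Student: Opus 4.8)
The plan is to compute $\Phi_D$ by brute force over all candidate functions $D\to D_H^{k+c}$, after first establishing that $D_H^{k+c}$ has at most $|D|$ vertices for the homomorphisms that we actually need to collect. The homomorphisms $\phi_P$ of interest are exactly those arising as the restriction of a locally surjective (resp.\ bijective) homomorphism $\phi$ from $G$ to $H$ to the pre-image of $D_H^{k+c}$, so by Lemma~\ref{lem:LBHMapsHighDegreeToHighDeptree} it is enough to take $\Phi_D$ to consist of homomorphisms from $G[D]$ to $H[D_H^{k+c}]$ that are \emph{onto} $D_H^{k+c}$. For such a $\phi_P$, I would apply Observation~\ref{obs:separators} with the empty separator to each connected component $C$ of $G[D]$: the image $\phi_P(C)$ lies inside a single component $C_H$ of $H[D_H^{k+c}]$, and $\phi_P$ restricted to $C$ is a locally surjective homomorphism from $C$ to $C_H$ at every vertex of $C$; since $C$ and $C_H$ are non-empty and connected, Observation~\ref{obs:surjective} gives that this restriction is surjective, so $|V(C_H)|\le|V(C)|$. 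Summing over the components of $G[D]$ yields $|\phi_P(D)|\le|D|$, and since $\phi_P$ is onto $D_H^{k+c}$ we conclude $|D_H^{k+c}|\le|D|$. In particular, after computing $D_H^{k+c}$ (in time polynomial in the size of $H$), if $|D_H^{k+c}|>|D|$ we can immediately output $\Phi_D=\emptyset$.

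The second step is the enumeration itself. With $|D_H^{k+c}|\le|D|$, every element of $\Phi_D$ is in particular a function from the $|D|$-element set $D$ into the at-most-$|D|$-element set $D_H^{k+c}$, of which there are at most $|D|^{|D|}$; this already gives $|\Phi_D|\le|D|^{|D|}$. To produce $\Phi_D$, I would precompute $G[D]$ and $H[D_H^{k+c}]$ together with an adjacency matrix of the latter, and then iterate over all at most $|D|^{|D|}$ candidate functions $f\colon D\to D_H^{k+c}$. For each candidate, checking that $f$ is a homomorphism costs $\bigoh(|D|^2)$, by testing $f(u)f(v)\in E(H[D_H^{k+c}])$ for each of the $\bigoh(|D|^2)$ edges $uv$ of $G[D]$; and checking that $f$ is locally surjective — or, in the bijective variant, locally bijective — at every $u\in D$ also costs $\bigoh(|D|^2)$ in total, since at each of the $|D|$ vertices one compares the set $f(N_{G[D]}(u))$ with $N_{H[D_H^{k+c}]}(f(u))$ in $\bigoh(|D|)$ time (and, in the bijective variant, additionally checks that $f$ is injective on $N_{G[D]}(u)$). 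In the surjective variant one also discards candidates whose image is not all of $D_H^{k+c}$. A candidate survives all checks if and only if it lies in $\Phi_D$, so the total running time is $\bigoh(|D|^{|D|}\cdot|D|^2)=\bigoh(|D|^{|D|+2})$, dominating the polynomial preprocessing cost.

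The only non-routine step is the first one: arguing that $D_H^{k+c}$ is small (at most $|D|$ vertices) for the relevant $\phi_P$, which is precisely what shrinks the search space from the $|V(H)|^{|D|}$ arbitrary functions $D\to V(H)$ down to the $|D|^{|D|}$ functions $D\to D_H^{k+c}$; everything afterwards is routine verification of the homomorphism and local-constraint conditions on graphs with at most $|D|$ vertices. I would be careful in the write-up to make explicit which homomorphisms $\Phi_D$ collects — those onto $D_H^{k+c}$, matching the output of Lemma~\ref{lem:LBHMapsHighDegreeToHighDeptree} — so that the size bound is not affected by locally surjective homomorphisms whose image is merely a proper union of components of $H[D_H^{k+c}]$.
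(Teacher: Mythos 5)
Your proposal is correct and follows essentially the same route as the paper: bound $|D_H^{k+c}|\le|D|$ (so that an empty answer can be returned otherwise), then brute-force over the at most $|D|^{|D|}$ candidate maps and verify each in $\bigoh(|D|^2)$ time. Your component-wise detour through Observations~\ref{obs:separators} and~\ref{obs:surjective} is unnecessary --- the paper gets $|D_H^{k+c}|\le|D|$ directly from the pigeonhole fact that a surjective map from $D$ exists only if the codomain has at most $|D|$ elements --- but your explicit remark that $\Phi_D$ should collect only the homomorphisms that are onto $D_H^{k+c}$ is a fair clarification of what the paper's proof implicitly assumes.
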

\iflong
\begin{proof}
Let $D \subseteq D_G^{k+c}$ and suppose there is a surjective map $\phi_P:D\rightarrow D_H^{k+c}$.
Then for every vertex $v \in D_H^{k+c}$, there must be a vertex $x \in D$ such that $\phi_P(x)=v$.
Therefore $|D_H^{k+c}|\leq |D|$, so if this condition fails, then we can immediately return that $\Phi_D=\emptyset$.

Otherwise, for each vertex of $|D|$, there are $|D_H^{k+c}|\leq |D|$ possible choices for where a map $\phi_P:D\rightarrow D_H^{k+c}$ could map this vertex.
We can list all of the at most $|D|^{|D|}$ resulting maps in $\bigoh(|D|^{|D|})$ time, and for each such map, we can check whether it is a locally surjective (respectively bijective) homomorphism in $\bigoh(|D|^2)$ time.
\end{proof}
\fi
\subsection{ILP Formulation}
\label{ssec:ILPform}

\begin{figure}
    \centering
    \includegraphics{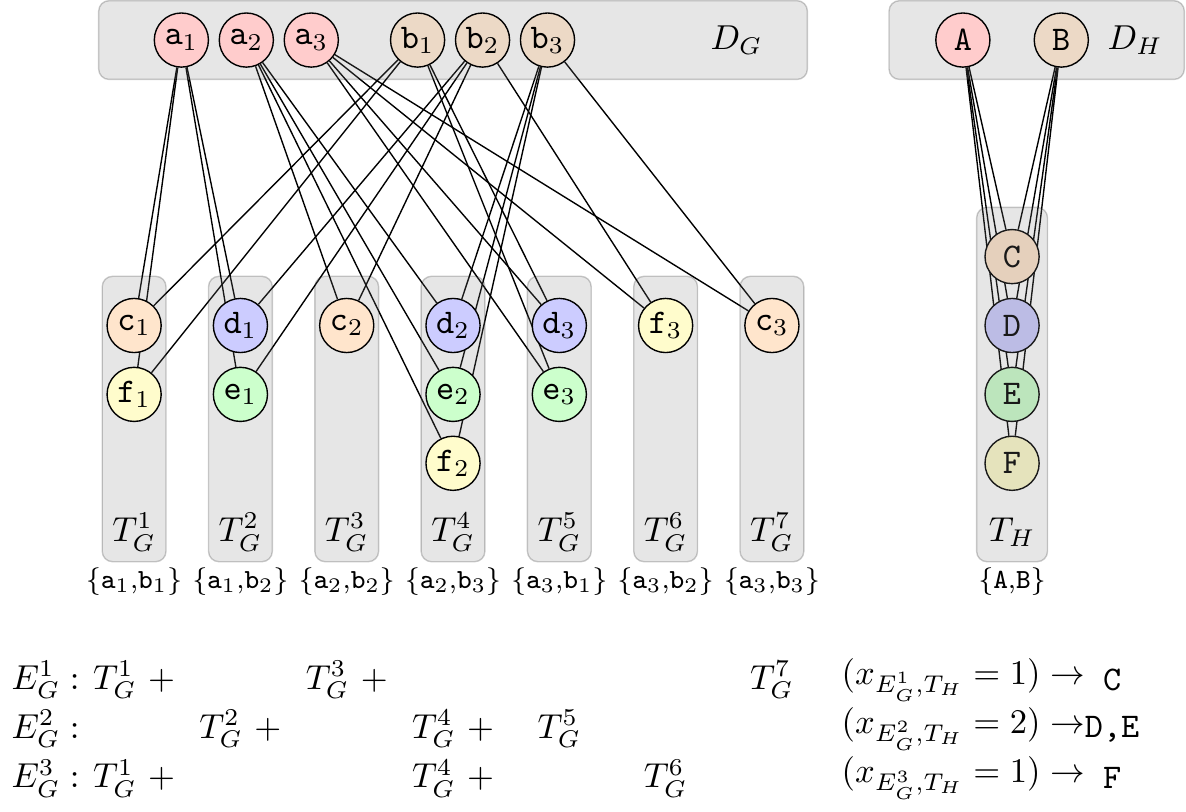}
    \bigskip
    
    \includegraphics{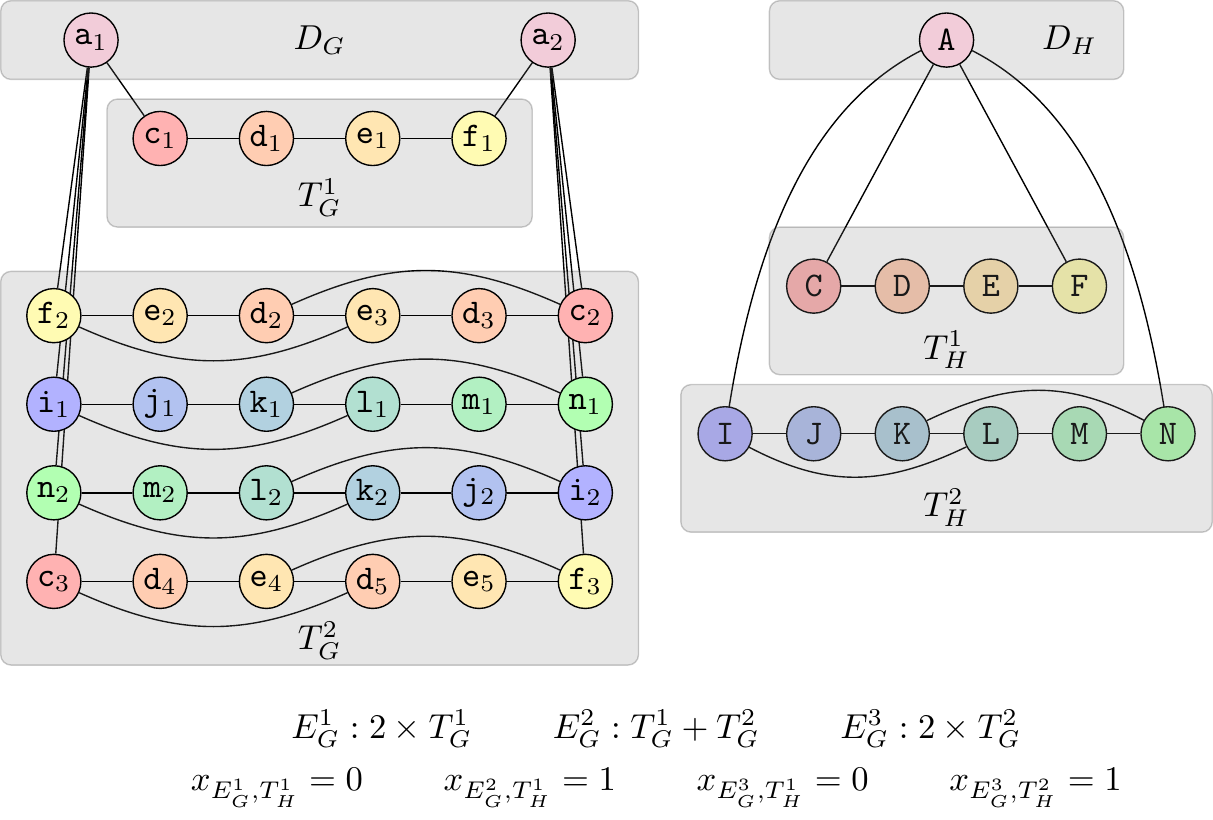}
    
    \caption{Top: A locally bijective homomorphism from a graph $G$ (left) to a graph $H$ (right), augmenting a partial homomorphism mapping the vertices of the vertex cover $D_G$ into $D_H=\{${\texttt A},{\texttt B}$\}$. 
    The $i$th vertex of $G$ mapped to some vertex {\texttt X} of $H$ is denoted {\texttt x}$_i$. Vertices in $G\setminus D_G$ are grouped by type (e.g. $\{c_1\}$ and $\{f_1\}$ have type $T_G^1$), each $T_G^i$ is characterised by the neighbours of its vertices, recalled below each column. Vertices in $H\setminus D_H$ all have the same type $T_H$.
    Rows $\EXT_G^i$ are extensions that can be minimally $\phi_P$-B-mapped to $T_H$ (in particular, each \texttt{a}$_i$ and \texttt{b}$_i$ must have a neighbour in some type in $\EXT_g^i$, which can be used as a pre-image of any vertex in $\{${\texttt C},{\texttt D},{\texttt E},{\texttt F}$\}$. Using $\EXT_G^1$ once (for colour {\texttt C}), $\EXT_G^2$ twice (for colours {\texttt D} and {\texttt E}) and $\EXT_G^3$ once (for colour {\texttt F}) yields the given locally bijective homomorphism, and it can be verified that each {\texttt a}$_i$ and {\texttt b}$_i$ indeed has all four colours in its neighbourhood.\\
    Bottom: a locally surjective homomorphism from a graph $G$ (left) to a graph $H$ (right), where $D_G$ is a $6$-deletion set. The extensions $\EXT_G^1,\EXT_G^2,\EXT_G^3$ can be minimally $\phi_P$-S-mapped to $T_H^1$; only $\EXT_G^3$ can also be minimally $\phi_P$-S-mapped to $T_H^2$. Furthermore, $T_G^1$ and $T_G^2$ can each be weakly $\phi_P$-S-mapped to some type in $H$ (respectively $T_H^1$ and $T_H^2$). Using pair $(\EXT_G^2, T_H^1)$ and $(\EXT_G^3, T_H^2)$ once is sufficient to ensure that the mapping is locally surjective for each \texttt{a}$_i$.}
    \label{fig:LBHOM}
\end{figure}

\newcommand{\noTc}[2]{(#1+#2)2^{\binom{#1+#2}{2}}} 

In this section, we will show how to formulate the subproblem obtained
in the previous subsection in terms of an ILP instance.
More specifically, we will show that the following problem can be formulated in terms of an ILP: given a partial locally surjective (respectively bijective) homomorphism $\phi_P$ from some induced subgraph $D_G$ of $G$ to some induced subgraph $D_H$ of $H$, can this be augmented to a locally surjective (respectively bijective) homomorphism from $G$ to $H$?
See the top of Figure~\ref{fig:LBHOM}
for an illustration of the subproblem for the simpler case when $D_G$ is a vertex cover of $G$ and
we are looking for a locally bijective homomorphism.
Moreover, we will actually show that for this to work, the host graph
$H$ does not need to be given explicitly, but can instead be defined by a
certain system of linear constraints.

The main ideas behind our translation to ILP are as follows. Suppose that there is a locally surjective (respectively bijective)
homomorphism $\phi$ from $G$ to $H$ that augments $\phi_P$. Because $\phi$
augments $\phi_P$, Observation~\ref{obs:separators} implies that
$\phi$ maps every component $C_G$ of $G\setminus V(D_G)$ entirely to some
component $C_H$ of $H\setminus V(D_H)$, moreover, $\fproj{\phi}{V(D_G)\cup
  V(C_G)}$ is already locally surjective (respectively bijective) for every
vertex $v \in V(C_G)$. Our aim now is to describe $\phi$ in terms of its
parts consisting of locally surjective (respectively bijective)
homomorphisms from \emph{extensions} of $D_G$ in $G$, i.e. sets of
components of $G\setminus D_G$ plus $D_G$, to \emph{simple extensions} of
$D_H$ in $H$, i.e. single components of $H\setminus D_H$ plus
$D_H$. Note that the main difficulty comes from the fact that we need
to ensure that $\phi$ is locally surjective (respectively bijective)
for every $d \in D_G$ and not only for the vertices within the
components of $G\setminus D_G$. This is why we need to describe the parts of
$\phi$ using sets of components of $G\setminus D_G$ and not just
single components. However, as we will show, it will suffice to
consider only minimal extensions of $D_G$ in $G$, where an extension is minimal if
no subset of it allows for a locally surjective (respectively
bijective) homomorphism from it to some simple extension of $D_H$ in
$H$. The fact that we only need to consider minimal extensions is
important for showing that we can compute the set of all possible
parts of $\phi$ efficiently (see Section~\ref{ssec:solvILP}).
Having shown this, we can create an ILP that has one variable
$x_{\EXT_G\EXT_H}$ for every minimal extension $\EXT_G$ and every simple
extension $\EXT_H$ such that there is a
locally surjective (respectively bijective) homomorphism from $\EXT_G$ to
$\EXT_H$ that augments $\phi_P$. The value of the variable $x_{\EXT_G\EXT_H}$
now corresponds to the number of parts used by $\phi$ that map minimal
extensions isomorphic to $\EXT_G$ to simple extensions isomorphic to
$\EXT_H$ that augment $\phi_P$. We can then use linear constraints on these variables to ensure
that:
\begin{description}
\item[(SB2')] $H$ contains exactly the right number of extensions isomorphic to
  $\EXT_H$ required by the assignment for $x_{\EXT_G\EXT_H}$,
\item[(B1')] $G$ contains exactly the right number of minimal extensions isomorphic to $\EXT_G$
  required by the assignment for $x_{\EXT_G\EXT_H}$
  (if $\phi$ is locally bijective),
\item[(S1')] $G$ contains at least the number of minimal extensions isomorphic to $\EXT_G$
  required by the assignment for $x_{\EXT_G\EXT_H}$
  (if $\phi$ is locally surjective),
\item[(S3')] for every simple extension $\EXT_G$ of $G$ that is not yet used in any part of
  $\phi$, there is a homomorphism from $\EXT_G$ to some simple
  extension of $D_H$ in $H$ that augments $\phi_P$ and is locally surjective for every
  vertex in $\EXT_G\setminus D_G$ (if $\phi$ is locally surjective).
\end{description}
Together, these constraints ensure that there is a locally surjective
(respectively bijective) homomorphism $\phi$ from $G$ to $H$ that
augments $\phi_P$. See also the bottom of Figure~\ref{fig:LBHOM} for an
illustration of the main ideas. We are now ready to formalise these
ideas. To do so, we need the following additional notation.

Given a graph $D$, an \emph{extension} for $D$ is a
graph $E$ containing $D$ as an induced subgraph. It is  \emph{simple}
if $E\setminus D$ is connected, and
\emph{complex} in general. Given two extensions $\EXT_1, \EXT_2$ of $D$, we
write  $\EXT_1 \sim_D \EXT_2$ if there is an isomorphism $\tau$ from
$\EXT_1$ to $\EXT_2$ with $\tau(d)=d$ for every $d\in D$. Then $\sim_D$ is
an equivalence relation. Let the \emph{types} of $D$, denoted
$\TTT_D$, be the set of equivalence classes of $\sim_D$
of simple extensions of $D$. We write $\TTT_D^c$ to denote the set of
types of $D$ of size at most $|D|+c$, so   $|\TTT_{D}^c|\leq \noTc{|D|}{c}$. 

Given a complex extension $E$ of $D$, let $C$ be a connected component
of $E\setminus D$. Then $C$ has type $T\in \TTT_D$ if $E[D\cup C]
\sim_D T$ (depending on the context, we also say that the extension
$E[D\cup C]$ has type $T$).
The \emph{type-count} of $E$ is the function $\ccmap_E: \TTT_D
\rightarrow \N$ 
such that $\ccmap_E(T)$ for $T \in \TTT_D$ is
the number of connected components of $E\setminus D$ with type $T$
(in particular if $E$ is simple, the type-count is $1$ for $E$ and $0$
for other types).  Note that two extensions are equivalent if and only
if they have the same type-counts; this then also implies that there is an isomorphism $\tau$ between the two extensions satisfying $\tau(d)=d$ for every $d\in D$.
We write $E\preceq E'$ if
$\ccmap_E(T)\leq \ccmap_{E'}(T)$ for all types $T\in \TTT_{D}$. If $E$
is an extension of $D$, we write $\TTT_D(E)=\{T\in \TTT_D \mid
\ccmap_E(T)\geq 1\}$ for the \emph{set of types of $E$} and $\EEE_D(E)$ for the set of simple  extensions
of $E$. Moreover, for $T \in \TTT_D$, we write
$\EEE_D(E,T)$ for the set of simple extensions in $E$ having type $T$.

A \emph{target description} is a tuple $(D_H, c, \CH)$ where $D_H$ is
a graph, $c$ is an integer and \CH is a set of linear constraints over
variables $x_T$, $T\in \TTT^c_{D_H}$. Note that a
type-count for ${D_H}$ is an integer assignment of the variables
$x_T$. A graph $H$ satisfies the target description $(D_H,c, \CH)$ if
it is an extension of $D_H$, $\ccmap_H(T)=0$ for $T
\notin  \TTT_{D_H}^c$, and setting $x_T=\ccmap_H(T)$ for all $T\in
\TTT_{D_H}^c$ satisfies all constraints in \CH.

In what follows, we assume that the following are given: the graphs
$D_G$, $D_H$, an extension $G$ of $D_G$, a target description $\TD=(D_H,
c, \CH)$, and a locally surjective (respectively bijective)
homomorphism $\phi_P:D_G\rightarrow D_H$.
Let $\EXT_G$ be an extension of $D_G$ with $\EXT_G \preceq
G$ and let $T_H \in \TTT_{D_H}^c$; note that we only consider $T_H \in
\TTT_{D_H}^c$, because we assume that $T_H$ is a type of a simple
extension of a graph $H$ that satisfies the target description $\TD$.
We say $\EXT_G$ can be
\emph{weakly $\phi_P$-S-mapped} to a
type $T_H$ if there exists an augmentation  $\phi:\EXT_G\rightarrow T_H$
of $\phi_P$ such that
$\phi$ is locally surjective for every $v\in \EXT_G\setminus D_G$.
We say that $\EXT_G$ can be
\emph{$\phi_P$-S-mapped} (respectively \emph{$\phi_P$-B-mapped}) to a
type $T_H$
if there exists an augmentation  $\phi:\EXT_G\rightarrow T_H$ of $\phi_P$
such that
$\phi$ is locally surjective (respectively locally bijective).
Furthermore, $\EXT_G$ can be \emph{minimally $\phi_P$-S-mapped} (respectively
\emph{minimally $\phi_P$-B-mapped}) to $T_H$ if $\EXT_G$ can be
$\phi_P$-S-mapped (respectively $\phi_P$-B-mapped) to $T_H$ and no other extension $\EXT_G'$
with $\EXT_G'\preceq \EXT_G$ can be $\phi_P$-S-mapped (respectively $\phi_P$-B-mapped) to $T_H$.
Let $\wSCMP{G}{D_G}{\TD}{\phi_P}$  be the set of all pairs $(T_G, T_H)$ such that $T_G\in
\TTT_{D_G}(G)$ can be weakly $\phi_P$-S-mapped to $T_H$.
Let  $\SCMP{G}{D_G}{\TD}{\phi_P}$ be the set of all pairs $(\EXT_G, T_H)$
with $\EXT_G \preceq G$, $T_H \in \TTT_{D_H}^{c}$ such that $\EXT_G$ can be
minimally  $\phi_P$-S-mapped to $T_H$ and let $\BCMP{G}{D_G}{\TD}{\phi_P}$ be the set of
all pairs $(\EXT_G, T_H)$ with $\EXT_G \preceq G$, $T_H \in \TTT_{D_H}^{c}$ 
such that $\EXT_G$ can be minimally $\phi_P$-B-mapped to $T_H$.
See also the bottom of Figure~\ref{fig:LBHOM} for an illustration of these notions.

We now build a set of linear constraints. To this end, besides
variables $x_T$ for $T\in T_H$, we introduce variables $x_{\EXT_G T_H}$
for each $(\EXT_G,T_H)\in \SCMPn$ (respectively
$\BCMPn$), where here and in what follows
$\wSCMPn=\wSCMP{G}{D_G}{\TD}{\phi_P}$, $\SCMPn=\SCMP{G}{D_G}{\TD}{\phi_P}$ and $\BCMPn=\BCMP{G}{D_G}{\TD}{\phi_P}$.
\begin{description}
\item[(S1)] $\sum_{(\EXT_G,T_H) \in
    \SCMPn}\ccmap_{\EXT_G}(T_G)*x_{\EXT_G T_H} \leq \ccmap_G(T_G)$ for every
  $T_G \in \TTT_{D_G}(G)$,
\item[(B1)] $\sum_{(\EXT_G,T_H) \in
    \BCMPn}\ccmap_{\EXT_G}(T_G)*x_{\EXT_G T_H}  =  \ccmap_G(T_G)$ for every
  $T_G \in \TTT_{D_G}(G)$,
\item[(S2)] $\sum_{\EXT_G : (\EXT_G,T_H)\in \SCMPn}x_{\EXT_G,T_H} = x_{T_H}$ for every
  $T_H \in \TTT_{D_H}$,
\item[(B2)] $\sum_{\EXT_G : (\EXT_G,T_H)\in \BCMPn}x_{\EXT_G,T_H} = x_{T_H}$ for every
  $T_H \in \TTT_{D_H}$,  
\item[(S3)]  $\sum_{(T_G,T_H) \in
    \wSCMPn} x_{T_H} \geq 1$ for every
  $T_G \in \TTT_{D_G}(G)$.
\end{description}
We refer to \iflong the bottom of \fi Figure~\ref{fig:LBHOM} for an illustration
and note that (S1) corresponds to (S1'), (B1) corresponds to (B1'), (S2) and (B2) correspond to (SB2'), and (S3) corresponds to (S3').

\iflong \begin{lemma} \fi \ifshort \begin{lemma}[$\star$] \fi\label{lem:ILPFormulation}
  Let $D_G$ and $D_H$ be graphs, let $G$ be an extension of $D_G$ and
  let $\TD=(D_H, c,  \CH)$ be a target description.
  Moreover, let $\phi_P:V(D_G)\rightarrow V(D_H)$ be a locally surjective
  (respectively bijective) homomorphism from $D_G$ to $D_H$.
  There exists a graph $H$ satisfying $\TD$ and a locally
  surjective (respectively bijective) homomorphism $\phi$ augmenting $\phi_P$
  if and only if the equation system \emph{(\CH, S1, S2, S3)} (respectively \emph{(\CH, B1,
  B2))} admits a solution.
\end{lemma}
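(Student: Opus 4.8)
The plan is to prove both directions of the biconditional by translating between a locally surjective (respectively bijective) homomorphism $\phi$ and a feasible solution of the ILP.

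For the forward direction, suppose there is a graph $H$ satisfying $\TD$ and a locally surjective (respectively bijective) homomorphism $\phi$ augmenting $\phi_P$. First I would set $x_T := \ccmap_H(T)$ for every $T \in \TTT^c_{D_H}$; since $H$ satisfies $\TD$, these values satisfy $\CH$, and moreover $\ccmap_H(T) = 0$ for $T \notin \TTT^c_{D_H}$, so every component of $H \setminus D_H$ has a type of size at most $|D_H|+c$. Next, using Observation~\ref{obs:separators} (with $D = V(D_G)$), each component $C_G$ of $G \setminus D_G$ is mapped by $\phi$ entirely into a single component $C_H$ of $H \setminus \phi(V(D_G)) = H \setminus V(D_H)$ (the last equality because $\phi$ augments $\phi_P$ so $\phi(V(D_G)) = V(D_H)$), and $\fproj{\phi}{V(D_G) \cup V(C_H$'s preimage$)}$ is locally surjective/bijective for every vertex of those components. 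This groups the components of $G \setminus D_G$ by which component of $H \setminus D_H$ they land in; for each component $C_H$ of $H \setminus D_H$, let $\EXT_H = H[V(D_H) \cup V(C_H)]$ (a simple extension of type $T_H$, say), and let $\EXT_G$ be the extension of $D_G$ formed by $D_G$ together with all components of $G \setminus D_G$ mapped into $C_H$. Then $\EXT_G$ can be $\phi_P$-S-mapped (respectively $\phi_P$-B-mapped) to $T_H$, but possibly not minimally; I would pass to a minimal subextension $\EXT_G' \preceq \EXT_G$ that can still be minimally $\phi_P$-S-mapped (respectively $\phi_P$-B-mapped) to $T_H$ — such a $\EXT_G'$ exists by $\preceq$-minimality. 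For each component $C_H$ of $H \setminus D_H$ I thus obtain a pair $(\EXT_G', T_H) \in \SCMPn$ (respectively $\BCMPn$); setting $x_{\EXT_G T_H}$ to be the number of components of $H \setminus D_H$ whose associated pair is $(\EXT_G, T_H)$ gives the remaining variable values. Constraints (S2)/(B2) hold because summing over $\EXT_G$ counts exactly the components of type $T_H$, i.e.\ $x_{T_H}$. For (B1), in the bijective case the passage to a minimal subextension is actually forced to keep all of $\EXT_G$ (a locally bijective homomorphism cannot afford to leave any component of $G \setminus D_G$ unmapped, since $\phi$ is surjective by Observation~\ref{obs:surjective} and bijectivity on neighbourhoods is tight), so the minimal extensions partition the components of $G \setminus D_G$ exactly, giving equality in (B1). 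For (S1), the minimal subextensions use at most the available components, so we get the inequality, and every unused type $T_G$ of $G$ still occurs in $G$, so in particular there is some component of $H \setminus D_H$ (hence some $T_H$ with $x_{T_H} \geq 1$) to which a single component of type $T_G$ can be weakly $\phi_P$-S-mapped — namely the one $\phi$ actually maps it to — establishing (S3).

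For the reverse direction, suppose the system $(\CH, \text{S1}, \text{S2}, \text{S3})$ (respectively $(\CH, \text{B1}, \text{B2})$) has a solution. I would construct $H$ by taking $\ccmap_H(T) := x_T$ copies of each simple extension of type $T$, glued along the common $D_H$; since the $x_T$ satisfy $\CH$ and are supported on $\TTT^c_{D_H}$, the resulting $H$ satisfies $\TD$. I then build $\phi$ componentwise: for each pair $(\EXT_G, T_H)$ with $x_{\EXT_G T_H} \geq 1$, I use the witnessing $\phi_P$-S-map (respectively $\phi_P$-B-map) to map a copy of $\EXT_G$ onto a copy of the simple extension of type $T_H$, doing this $x_{\EXT_G T_H}$ times; constraint (S2)/(B2) guarantees that the number of simple extensions of type $T_H$ in $H$ is exactly $\sum_{\EXT_G} x_{\EXT_G T_H}$, so every component of $H \setminus D_H$ is the target of exactly one such part. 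Constraint (S1) (respectively (B1)) guarantees that $G$ has enough (respectively exactly the right number of) components of each type $T_G$ to realise all these parts; in the surjective case, the remaining components of $G \setminus D_G$ — those not used by any part — are handled by (S3): for each such leftover component of type $T_G$ there is, by (S3), some $T_H$ with $x_{T_H} \geq 1$ and a weak $\phi_P$-S-map from $T_G$ to $T_H$, and I route that component into an (already present) copy of a type-$T_H$ extension using the weak map, which keeps $\phi$ locally surjective at that component's vertices. Finally I check $\phi$ is a genuine locally surjective (respectively bijective) homomorphism at every vertex: at vertices inside $G \setminus D_G$ this is immediate from the definition of the (weak/minimal) $\phi_P$-maps; at vertices $d \in V(D_G)$, local surjectivity (respectively bijectivity) follows because the minimal extensions were chosen so that the union over all parts of the neighbourhoods used already covers (respectively bijects onto, using (B1) equality and surjectivity of $\phi$) $N_H(\phi_P(d))$ — this is exactly the point of working with minimal extensions rather than single components.

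The main obstacle I expect is the bijective case of (B1) together with verifying local bijectivity at the deletion-set vertices: I must argue that a minimal $\phi_P$-B-mappable extension, when mapped into a simple extension of $H$, uses its components in a way that meshes exactly (no over- or under-use) across all $D_G$-vertices and all copies, so that the global count constraint (B1) is both necessary and sufficient. Showing necessity requires the observation that in a locally bijective homomorphism every component of $G \setminus D_G$ must be used (by surjectivity, Observation~\ref{obs:surjective}), so no "leftover" components exist — unlike the surjective case where (S3) absorbs them. Showing sufficiency requires that the minimal extensions can be packed disjointly to realise the multiset of target extensions without collision, which relies on the fact that distinct copies of simple extensions in $H$ share only $D_H$, so parts interact only through $D_G$/$D_H$ and through the global type counts, which (B1)/(B2) control exactly. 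The weak-map machinery for (S3) is routine by comparison, as is the construction of $H$ from the type counts.
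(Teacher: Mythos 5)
Your proposal is correct and follows essentially the same route as the paper's proof: the forward direction decomposes $\phi$ via the pre-images of the simple extensions of $D_H$ in $H$ (using Observation~\ref{obs:separators}), passes to minimal sub-extensions in the surjective case while arguing that in the bijective case the full pre-image is already minimal, and defines the variable assignment by counting; the reverse direction builds $H$ from the type counts, distributes the pairs over the components of $H\setminus D_H$ via the bijections given by (S2)/(B2), realises them disjointly in $G$ via (S1)/(B1), absorbs leftover components with (S3), and checks the local constraints at the $D_G$-vertices exactly as the paper does. The only cosmetic difference is that the paper works directly with $W_\phi(\EXT_H)$ in the bijective case rather than arguing that the minimal sub-extension coincides with it, but these amount to the same thing.
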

\iflong
\begin{proof}
  Towards showing the forward direction of the claim, let $H$ be a
  graph satisfying $\TD=(D_H,c, \CH)$ and let $\phi$
  be a locally surjective (respectively bijective) homomorphism that augments $\phi_P$. 

  Consider $T_H \in \TTT_{D_H}(H)$ and let $\EXT_H \in \EEE_{D_H}(H,T_H)$. Let
  $P=G[\phi^{-1}(\EXT_H)]$; note that $D_G \subseteq V(P)$ and therefore
  $P$ is a (possibly) complex extension of $D_G$. Then because
  of Observation~\ref{obs:preimage}, we obtain that
  $\phi_R=\fproj{\phi}{P}$ is a locally surjective (respectively bijective) homomorphism that augments
  $\phi_P$ from $P$ to $\EXT_H$. Moreover,
  because of Observation~\ref{obs:separators}, it follows that
  $P\setminus D_G$ is the union of a set $W_\phi(\EXT_H)$ of components of
  $G-D_G$. Therefore, $W_\phi(\EXT_H)$ can be $\phi_P$-S-mapped (respectively $\phi_P$-B-mapped) to
  $T_H$. Moreover, if $W_\phi(\EXT_H)$ can be $\phi_P$-S-mapped to $T_H$, then $W_\phi(\EXT_H)$ also
  contains a subset $W_\phi^{\min}(\EXT_H)$ of components that can be
  minimally $\phi_P$-S-mapped to $T_H$.

  Let $\wSCMPn=\wSCMP{G}{D_G}{\TD}{\phi_P}$,
  $\SCMPn=\SCMP{G}{D_G}{\TD}{\phi_P}$ and
  $\BCMPn=\BCMP{G}{D_G}{\TD}{\phi_P}$.
  Let $X_T=\SB x_{T_H} \SM T_H \in \TTT_{D_H}(H) \SE$, $X_M=\SM x_{\EXT_G
    T_H} \SM (\EXT_G,T_H) \in \SCMPn \SE$ (respectively $X_M=\SB x_{\EXT_G
    T_H} \SM (\EXT_G,T_H) \in \BCMPn \SE$), and $X=X_T\cup X_M$.
  Let $\alpha : X \rightarrow \Nat$ be defined by setting:
  \begin{itemize}
  \item $\alpha(x_{T_H})=\ccmap_H(T_H)$ and
  \item $\alpha(x_{\EXT_G T_H})=|\SB \EXT_H \in \EEE_{D_H}(H,T_H) \SM
  \ccmap_{\EXT_G}=\ccmap_{W_{\phi}^{\min}(\EXT_H)}\SE|$ (in the locally surjective case)
  \item $\alpha(x_{\EXT_G T_H})=|\SB \EXT_H \in \EEE_{D_H}(H,T_H) \SM
    \ccmap_{\EXT_G}=\ccmap_{W_{\phi}(\EXT_H)}\SE|$ (in the locally bijective case)
  \end{itemize}
  We claim that the assignment $\alpha$ satisfies the equation system
  (\CH, S1, S2, S3) (respectively the equation system (\CH, B1, B2)). 
  Because $H$ satisfies $\TD$, it follows
  that $\alpha$ satisfies \CH.

  We start by showing the claim for the locally surjective case.
  Towards showing that (S3) is satisfied, consider a type $T_G \in
  T_{D_G}(G)$ and let $\EXT_G\in \EEE_{D_G}(G,T_G)$. Then, because of Observation~\ref{obs:separators}, the
  mapping $\fproj{\phi}{\EXT_G}$ maps $\EXT_G$ to some type $T_H \in
  \TTT_{D_H}(H)$ and therefore shows that $\EXT_G$ can be $\phi_P$-S-mapped to $T_H$.
    
  Towards showing (S1), let $T_G\in \TTT_{D_G}(G)$. Because of
  Observation~\ref{obs:separators}, every extension $\EXT_G \in \EEE_{D_G}(G,T_G)$
  satisfies $\phi(\EXT_G)\subseteq \EXT_H$ for some simple extension $\EXT_H$ of $D_H$.
  In other words $\EXT_G$ is contained in the
  pre-image of exactly one simple extension $\EXT_H$, showing that every
  extension $\EXT_G\in \EEE_{D_G}(G,T_G)$ is counted at most once on the left side of
  the inequality in (S1) and therefore the left side is at most $\ccmap_G(T_G)$. 

  Towards showing (S2), let $T_H \in \TTT_{D_H}$. Then, because
  $W_{\phi}^{\min}(\EXT_H)\neq W_{\phi}^{\min}(\EXT_H')$ for every two distinct $\EXT_H,\EXT_H' \in
  \EEE_{D_H}(H,T_H)$, we obtain:
  \begin{eqnarray*}
    \sum_{\EXT_G : (\EXT_G,T_H)\in \SCMPn}\alpha(x_{\EXT_G T_H}) & = &\sum_{\EXT_G : (\EXT_G,T_H)\in \SCMPn}|\SB \EXT_H \in \EEE_{D_H}(H,T_H) \SM
                                                             \ccmap_{\EXT_G} = \ccmap_{W_{\phi}^{\min}(\EXT_H)}\SE|\\
                                                       & = & \ccmap_H(T_H)\\
    & = & \alpha(T_H),
  \end{eqnarray*}
  as required.

  Finally, if $\phi$ is locally bijective, we only have to show (B1)
  and (B2), which can be shown very similarly to (S1) and (S2).
  That is, (B1) can be
  shown very similarly to (S1) by using the additional observation that due to
  the definition of $\alpha$ in terms of $W_\phi(\EXT_H)$ instead of
  $W_\phi^{\min}(\EXT_H)$, every simple extension $\EXT_G$
  also occurs in the pre-image of at least one simple extension $\EXT_H$.
  Moreover, (B2) can be
  shown in the same manner as (S2), since $W_\phi(\EXT_H)\neq W_\phi(\EXT_H')$ also holds
  for every two distinct $\EXT_H,\EXT_H' \in \EEE_{D_H}(H,T_H)$.
 
  Towards showing the reverse direction, let $\alpha : X \rightarrow
  \Nat$ be an assignment satisfying the equation system (\CH, S1, S2,
  S3) (respectively the equation system (\CH,B1,B2)). Let $H$ be the
  unique graph consisting of $D_H$ and $\alpha(T_H)$
  extensions of $D_H$ of type $T_H$ for every $T_H \in
  \TTT_{D_H}$. Then $H$
  satisfies $(D_H,c,\CH)$ and $\ccmap_H(T_H)=\alpha(x_{T_H})$.

  We now define a function
  $\phi : V(G) \rightarrow V(H)$, which will be a locally surjective
  (respectively, locally bijective)
  homomorphism that augments $\phi_P$ as follows.

  Let $\AAA$ be the multiset containing each pair $(\EXT_G,T_H) \in
  \SCMPn$ (respectively $\BCMPn$) exactly $\alpha(x_{\EXT_S T_H})$ times. Because
  of (S2) (respectively (B2)), there is a bijection $\gamma_{T_H}$
  between $\AAA_{T_H}=\SB (\EXT_G, T_H)\SM (\EXT_G,T_H)\in \AAA\SE$ and
  $\EEE_{D_H}(H,T_H)$ for
  every $T_H \in \TTT_{D_H}(H)$. Let $\gamma$ be the bijection between $\AAA$
  and the extensions $\EEE_{D_H}(H)$ given by
  $\gamma((\EXT_G,T_H))=\gamma_{T_H}((\EXT_G,T_H))$.

  Because the proof now diverges quite significantly for the locally surjective and locally bijective cases,
  we start by showing the remainder of the proof for the
  former case and then show how to adapt
  the proof in latter (easier) case.
  
  Because of (S1), there is a function $\beta$ from $\AAA$
  to the complex extensions of $G$ such that:
  \begin{itemize}
  \item $\ccmap_{\beta((\EXT_G,T_H))}=\ccmap_{\EXT_G}$ for every $(\EXT_G,T_H)
    \in \AAA$,
  \item $\beta(A)\cap \beta(A')=\emptyset$ for every
    two distinct $A$ and $A'$ in $\AAA$.
  \end{itemize}
  Let $A=(\EXT_G,T_H) \in \AAA$. We set $P_A=\beta(A)$ and $I_A=\gamma(A)$
  Because $A \in \SCMPn$, there is a locally surjective homomorphism $\phi_A$ from
  $P_A$ to $I_A$ that augments $\phi_P$.
  Let
  $\EEE_\AAA$ be the set of simple extensions $\EXT_G$ in $\EEE_{D_G}(G)$ for which there is
  an $A \in \AAA$ such that $\EXT_G$ is an induced subgraph of $\beta(A)$. Moreover, let
  $\bar{\EEE}_\AAA$ be the set of all remaining simple extensions in $\EEE_{D_G}(G)$, i.e. the
  set of all simple extensions $\EXT_G$ in $\EEE_{D_G}(G)\setminus \EEE_\AAA$.
  Consider a simple extension $\EXT_G$ in $\bar{\EEE}_\AAA$. Then, because of
  (S3), there is a $T_H \in \TTT_{D_H}(H)$ and a corresponding extension $\EXT_H \in \EEE_{D_H}(H,T_H)$ such that there
  is a homomorphism $\phi_{\EXT_G}$ from $\EXT_G$ to
  $\EXT_H$ that augments $\phi_P$, which is locally
  surjective for every $v \in V(\EXT_G-D_G)$. We are now ready to define $\phi : V(G) \rightarrow
  V(H)$. That is we set $\phi(v)$ to be equal to:
  \begin{itemize}
  \item $\phi_P(v)$ if $v \in D_G$,
  \item $\phi_{\EXT_G}(v)$ if $v \in V(\EXT_G)$ for some simple extension
    $\EXT_G\in\bar{\EEE}_\AAA$, and
  \item $\phi_A(v)$ if $v \in V(\EXT_G-D_G)$ for some $\EXT_G=\beta(A)$ and $A \in \AAA$.
  \end{itemize}
  It remains to show that $\phi$ is a locally surjective homomorphism
  from $G$ to $H$ that augments $\phi_P$. Clearly,
  $\phi$ augments $\phi_P$ by definition and because
  $\phi_{\EXT_G}$ does so too for every simple extension $\EXT_G$ in
  $\bar{\EEE}_\AAA$, as does $\phi_A$ for every $A \in \AAA$.
  Moreover, $\phi$ is also a homomorphism, because every edge $\{u,v\}
  \in E(G)$ is contained in $G[\EXT_G]$ for some simple extension $\EXT_G$
  in $\EEE_{D_G}(G)$ and $\phi$ maps $\EXT_G$ according to some homomorphism
  $\phi_{\EXT_G}$ (if $\EXT_G \in \bar{\EEE}_\AAA$ or some homomorphism
  $\phi_A$ (otherwise). For basically the same reason, i.e. because
  every $\phi_{\EXT_G}$ and every $\phi_A$ is locally surjective for
  every vertex in $V(G)\setminus D_G$, $\phi$ is locally surjective for
  every vertex $v \in V(G)\setminus D_G$. Towards showing that $\phi$ is also
  locally surjective for every $d\in D_G$, let $n_H$ be any neighbour of
  $\phi(d)$ in $H$. If $n_H \in \phi_P(D_G)$, then there is a neighbour
  $n_G$ of $d$ in $D_G$ with $\phi(n_G)=n_H$, because $\phi_P$ is
  locally surjective. If on the other hand $n_H \in V(\EXT_H-D_H)$ for
  some $\EXT_H \in T_{D_H}(H,T_H)$ with $T_H \in \TTT_{D_H}(H)$, then there is a neighbour
  $n_G$ of $d$ in $\beta(\gamma^{-1}(T_H))$ with $\phi(n_G)=n_H$,
  because $\phi$ (restricted to $\beta(\gamma^{-1}(T_H))$) is a
  locally surjective homomorphism from $\beta(\gamma^{-1}(T_H))$ to $\EXT_H$.

  This completes the proof for the locally surjective case.
  We now complete the proof for the locally bijective case.
  First note that because of (B1), the function
  $\beta$ from $\AAA$ to the complex extensions of $G$ is bijective.
  Moreover, if $A=(\EXT_G,T_H) \in \AAA$, then because $A \in
  \BCMPn$, there is a locally bijective homomorphism $\phi_A$ from
  $P_A=\beta(A)$ to $I_A=\gamma(A)$ that augments $\phi_P$.
  This now allows us to directly define $\phi : V(G) \rightarrow
  V(H)$. That is, we set $\phi(v)$ to be equal to:
  \begin{itemize}
  \item $\phi_P(v)$ if $v \in D_G$ and
  \item $\phi_A(v)$ if $v \in V(\EXT_G-D_G)$ for some $\EXT_G=\beta(A)$ and $A \in \AAA$.
  \end{itemize}
  It remains to show that $\phi$ is a locally bijective homomorphism
  from $G$ to $H$ that augments $\phi_P$. Note that we can assume that
  $\phi$ is already a locally surjective homomorphism that augments
  $\phi_P$, using the same arguments as for the 
  locally surjective case. Thus it only remains to show
  that $\phi$ is also locally injective for every $d \in D_G$. Suppose
  not, then there are two distinct neighbours $n_G$ and $n_G'$ that are
  mapped to the same neighbour $n_H$ of $\phi(d)$ in $H$. This is
  clearly not possible if both $n_G$ and $n_G'$ are in $D_G$ because
  $\phi_P$ is locally bijective on $D_G$. Moreover, this can also not
  be the case if exactly one of $n_G$ and $n_G'$ is in $D_G$, because
  then $n_H \in V(D_H)$, but because $\phi$ augments $\phi_P$, the
  other cannot be mapped to $D_H$. Therefore, we can assume that $n_G$
  and $n_G'$ are outside of $D_G$. Let $\EXT_H \in \EEE_{D_H}(H)$ be the
  simple extension containing $n_H$. Then, $n_G$ and $n_{G}'$ must by
  mapped by $\phi_{\gamma^{-1}(\EXT_H)}$, but this is not possible
  because $\phi_{\gamma^{-1}(\EXT_H)}$ is locally bijective.
\end{proof}
\fi

\subsection{Constructing and Solving the ILP}
\label{ssec:solvILP}

The main aim of this section is to show the following theorem.
\iflong\begin{theorem} \fi \ifshort \begin{theorem}[$\star$] \fi\label{the:decPart}
  Let $G$ be a graph, let $D_G$ be a $(k,c)$-extended deletion set (respectively a $c$-deletion set)
  of size at most $k$ for $G$, let $\TD=(D_H,c',\CH)$ be a target description and
  let 
  $\phi_P : D_G \rightarrow D_H$ be a locally surjective
  (respectively bijective) homomorphism from $D_G$ to $D_H$. Then,
  deciding whether there is a locally surjective (respectively
  bijective) homomorphism that augments $\phi_P$ from $G$ to any graph satisfying \CH
  is fpt parameterized by $k+c+c'$.
\end{theorem}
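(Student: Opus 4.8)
The plan is to apply Lemma~\ref{lem:ILPFormulation}, which tells us that the desired (locally surjective, respectively bijective) augmentation of $\phi_P$ exists if and only if the integer linear program (\CH, S1, S2, S3) (respectively (\CH, B1, B2)) is feasible. Hence it suffices to show that this ILP has a number of variables bounded by a function of $k+c+c'$ and can be written down in fpt time; its number of constraints and the bit-lengths of its coefficients are clearly polynomial in the input size, which is all that is additionally required for Lenstra's algorithm (Proposition~\ref{pro:pilp}) to decide feasibility in fpt time, after which Lemma~\ref{lem:ILPFormulation} yields correctness. Throughout I use that $|V(D_H)|$ is bounded in terms of $k$: in every application of the theorem $D_H$ is the homomorphic image of $D_G$ under a surjective map, so $|V(D_H)|\le|V(D_G)|\le k$.

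The ILP has two kinds of variables: the $x_T$ for $T\in\TTT^{c'}_{D_H}$, of which there are at most $\noTc{|V(D_H)|}{c'}$, bounded by a function of $k+c'$; and the $x_{\EXT_G T_H}$ for $(\EXT_G,T_H)\in\SCMPn$ (respectively $\BCMPn$). For the second kind I would first note that $|\TTT_{D_G}(G)|$ is bounded by a function of $k+c$: in the $(k,c)$-extended deletion set case all but at most $k$ components of $G\setminus D_G$ have at most $c$ vertices, so $|\TTT_{D_G}(G)|\le|\TTT^c_{D_G}|+k$, and in the $c$-deletion set case $\TTT_{D_G}(G)\subseteq\TTT^c_{D_G}$. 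The key structural step is to show that any $\EXT_G$ that can be minimally $\phi_P$-S-mapped (respectively $\phi_P$-B-mapped) to some $T_H$ consists of $D_G$ plus at most $k\cdot c'$ components of $G\setminus D_G$. Fixing a witnessing augmentation $\phi\colon\EXT_G\to T_H$, removing a component $C$ of $\EXT_G\setminus D_G$ leaves $\phi$ locally surjective at every vertex outside $D_G$ (whose neighbourhoods do not meet $C$) and, for $d\in V(D_G)$, still covering every neighbour of $\phi_P(d)$ that lies in $D_H$ (whose $\phi$-preimages lie in $D_G$, not $C$, since $\phi$ augments $\phi_P$); hence $C$ is non-removable only if some $d\in V(D_G)$ has a neighbour $w$ of $\phi_P(d)$ with $w\in V(T_H)\setminus V(D_H)$ all of whose $\phi$-preimages among the neighbours of $d$ in $\EXT_G$ lie in $C$. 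Charging each non-removable $C$ to such a pair $(d,w)$ is injective, because two disjoint components cannot both contain the (non-empty) preimage set of a fixed $(d,w)$; since there are at most $|V(D_G)|\cdot(|V(T_H)|-|V(D_H)|)\le kc'$ such pairs and every component of a minimal $\EXT_G$ is non-removable, the claim follows. Consequently there are at most $(kc'+1)^{|\TTT_{D_G}(G)|}$ possible type-counts for such an $\EXT_G$, so the ILP has boundedly many variables.

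To build the ILP in fpt time I would enumerate $\TTT^{c'}_{D_H}$ by brute force over all extensions of $D_H$ by at most $c'$ new vertices, and compute $\TTT_{D_G}(G)$ together with the coefficients $\ccmap_G(\cdot)$ by grouping the components of $G\setminus D_G$ of size at most $c$ by their type and handling each of the at most $k$ larger components individually (for each such component a tree decomposition of width $\bigoh(k+c)$ is obtained from a $c$-deletion set of it, which exists by the definition of $D_G$ and can be computed by Proposition~\ref{pro:comp-dels}). To obtain $\wSCMPn$, $\SCMPn$ and $\BCMPn$ I would enumerate all boundedly many candidates — single types $T_G\in\TTT_{D_G}(G)$ for $\wSCMPn$, and all type-counts over $\TTT_{D_G}(G)$ of total size at most $kc'$ (also bounded by $\ccmap_G$) for $\SCMPn$ and $\BCMPn$ — and for every candidate and every $T_H\in\TTT^{c'}_{D_H}$ decide whether the corresponding extension can be weakly $\phi_P$-S-mapped (respectively $\phi_P$-S-mapped, $\phi_P$-B-mapped) to $T_H$. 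Each such test asks for the existence of a locally constrained homomorphism from $G[\,V(D_G)\cup(\text{the chosen components})\,]$ to the fixed graph $T_H$ of size at most $|V(D_H)|+c'$, with the images of $V(D_G)$ pinned down by $\phi_P$; since that source graph has a $c$-deletion set of size at most $k+k^2$ (namely $V(D_G)$ together with $c$-deletion sets of the at most $k$ large components among those chosen), it has treewidth bounded by a function of $k+c$, and, local injectivity/surjectivity being a locally checkable condition, the test is answered by standard dynamic programming over a tree decomposition of the source in fpt time. Finally I would discard every pair $(\EXT_G,T_H)$ for which some strictly smaller enumerated candidate $\EXT_G'\prec\EXT_G$ can still be $\phi_P$-S-mapped (respectively $\phi_P$-B-mapped) to $T_H$, which leaves exactly $\SCMPn$ (respectively $\BCMPn$).

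Putting everything together, the ILP has a number of variables bounded by a function of $k+c+c'$ and is produced in fpt time, Proposition~\ref{pro:pilp} decides its feasibility in fpt time, and Lemma~\ref{lem:ILPFormulation} translates this into the answer for whether $\phi_P$ extends to a locally surjective (respectively bijective) homomorphism from $G$ to some graph satisfying $\TD$. I expect the main obstacle to be the structural argument that a minimal extension uses only boundedly many components, together with the closely related task of making the $\phi_P$-S-mappability tests run in fpt time even when the candidate extensions contain large (though bounded-treewidth) components of $G\setminus D_G$; once these are in hand, the rest is bookkeeping plus an invocation of Lenstra's algorithm.
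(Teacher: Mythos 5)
Your proposal is correct and follows essentially the same route as the paper's: it reproves the key structural bound of Lemma~\ref{lem:msbm-size} (your injective charging of non-removable components to pairs $(d,w)$ is the mirror image of the paper's construction of a small witness sub-extension from the components $C_{d,n_H}$, followed by invoking minimality), recovers Lemmas~\ref{lem:computingcm} and~\ref{lem:computingBCMandSCM} by the same type enumeration and bounded-treewidth dynamic programming for the mappability tests, and concludes via Lemma~\ref{lem:ILPFormulation} and Lenstra's algorithm exactly as in the paper's proof of Theorem~\ref{the:decPart}. Your explicit observation that $|V(D_H)|\leq k$ must be assumed (and holds in all applications) is a point the paper leaves implicit.
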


To prove Theorem~\ref{the:decPart}, we need to show that we can
construct and solve the ILP instance given in the previous section.
The main ingredient for the proof of Theorem~\ref{the:decPart} is
Lemma~\ref{lem:computingBCMandSCM}, which shows that we can efficiently compute the sets
$\wSCMPn$, $\SCMPn$, and $\BCMPn$. \ifshort
A crucial insight for its proof
is that if $(\EXT_G,\EXT_H) \in \SCMPn$ (or $(\EXT_G,\EXT_H) \in \BCMPn$),
then $\EXT_G$ consists of only boundedly many (in terms of some function of
the parameters) components, which will allow us to enumerate all
possibilities for $\EXT_G$ in fpt-time.
\fi
We
start by showing that the set $\TTT_{D_G}(G)$ can be computed
efficiently and has small size.
\iflong \begin{lemma} \fi \ifshort \begin{lemma}[$\star$] \fi\label{lem:computingcm}
  Let $G$ be a graph and let $D_G$ be a $(k,c)$-extended deletion set
  of size at most $k$ for $G$. Then, $\TTT_{D_G}(G)$ has size at
  most $k+\noTc{|D_G|}{c}$ and computing $\TTT_{D_G}(G)$ and
  $\ccmap_G$ is fpt parameterized by $|D_G|+k+c$.
\end{lemma}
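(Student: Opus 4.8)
The plan is to bound $\TTT_{D_G}(G)$ by classifying the components of $G\setminus D_G$ according to the two possibilities guaranteed by the definition of a $(k,c)$-extended deletion set, and to compute the simple extensions and their type-count directly by iterating over the components. First I would recall that, by definition, $G\setminus D_G$ has at most $k$ components with more than $c$ vertices, and every remaining component has at most $c$ vertices. The ``small'' components (those with at most $c$ vertices) together with $D_G$ form simple extensions of $D_G$ of size at most $|D_G|+c$; by the definition of $\TTT_{D_G}^c$, there are at most $\noTc{|D_G|}{c}$ equivalence classes of such extensions under $\sim_{D_G}$. Each of the at most $k$ ``large'' components contributes at most one additional type. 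Hence $|\TTT_{D_G}(G)| \le k + \noTc{|D_G|}{c}$, as claimed.

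For the algorithmic part, the plan is: compute the connected components of $G\setminus D_G$ in polynomial time; for each component $C$, form the induced subgraph $G[D_G\cup C]$, which is a simple extension of $D_G$; then group these simple extensions into $\sim_{D_G}$-equivalence classes. Two simple extensions $G[D_G\cup C]$ and $G[D_G\cup C']$ are $\sim_{D_G}$-equivalent iff there is an isomorphism fixing $D_G$ pointwise; since for the small components $|C|,|C'|\le c$, such an isomorphism (if it exists) can be found by brute force over the at most $c!$ bijections between $C$ and $C'$ (after first checking $|C|=|C'|$), combined with the fixed identity on $D_G$, giving a check running in time $\bigoh(c!\cdot (|D_G|+c)^2)$ per pair. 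There are at most $|V(G)|$ components, so the total number of pairwise comparisons is polynomial in $|V(G)|$, and the whole grouping is done in fpt-time in $|D_G|+c$. For the large components we simply keep each as its own type (there are at most $k$ of them, and testing equality of two of them against each other, or against a small type, reduces to a graph isomorphism test fixing $D_G$, which can be done in fpt-time since, e.g., one can first separate them by the isomorphism type of the component itself or simply keep them distinct — correctness only requires that equivalent extensions get the same type, and it is safe to also merge them when an isomorphism fixing $D_G$ is detected, which again is an fpt check because $D_G$ is small and serves as a set of ``coloured'' vertices that can be handled by the fpt algorithm for coloured graph isomorphism, or, more simply, we may leave them unmerged at the cost of only the already-accounted-for additive $k$). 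Once the classes are formed, $\ccmap_G(T)$ is just the number of components falling into class $T$, which is recorded during the grouping pass.

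The main obstacle is handling the large components correctly: one cannot brute-force isomorphisms between two components of unbounded size, so one must either argue that leaving each large component as its own type is harmless for the bound (it is, since there are at most $k$ of them and the statement's bound already allows an additive $+k$), or invoke an fpt isomorphism test that treats the at most $|D_G|$ vertices of $D_G$ as individually labelled. I would take the first, cleaner route: large components are simply assigned pairwise-distinct types (this only over-counts the number of types, never under-counts, and the bound $k+\noTc{|D_G|}{c}$ still holds), while small components are merged exactly via the brute-force isomorphism check described above. With this choice all steps run in fpt-time parameterized by $|D_G|+k+c$, which proves the lemma.
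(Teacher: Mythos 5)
Your size bound is exactly the paper's argument, and your treatment of the small components (brute force over the at most $c!$ bijections between $C$ and $C'$, with $D_G$ fixed pointwise) is a correct and arguably more elementary replacement for the paper's equivalence test. The gap is in the large components. Your primary route --- assigning each large component its own pairwise-distinct type --- does not prove the lemma as stated: it computes a proper refinement of $\TTT_{D_G}(G)$ rather than $\TTT_{D_G}(G)$ itself, and correspondingly outputs the wrong $\ccmap_G$ (two $\sim_{D_G}$-equivalent large components would be reported as two types of count $1$ instead of one type of count $2$). The lemma is consumed downstream (in Lemma~\ref{lem:computingBCMandSCM} and in the constraints (S1)/(B1)) precisely through the objects $\TTT_{D_G}(G)$ and $\ccmap_G$, so ``over-counting types, never under-counting'' is not harmless by fiat; at best it would require reproving the later lemmas for the refined partition, which you do not do. Your fallback route is also unsupported as written: graph isomorphism with a bounded number of individually labelled vertices is not known to be fpt (or even polynomial-time) on general graphs, so ``$D_G$ is small and serves as a set of coloured vertices'' does not by itself yield an fpt equivalence test. (A minor side point: a large component can never be equivalent to a small one, since the extensions have different sizes.)

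The missing ingredient is the structure of the large components. By the definition of a $(k,c)$-extended deletion set, every component $C$ of $G\setminus D_G$ with more than $c$ vertices has a $c$-deletion set of size at most $k$, so $G[D_G\cup C]$ has a $c$-deletion set of size at most $|D_G|+k$ and hence, by Proposition~\ref{pro:parrel}, tree-depth and therefore treewidth at most $|D_G|+k+c$. The paper then tests $G[D_G\cup C]\sim_{D_G} G[D_G\cup C']$ using the fact that graph isomorphism is fpt parameterized by treewidth~\cite{LokshtanovPPS17}, with the pointwise fixing of $D_G$ absorbed by colouring those at most $|D_G|$ vertices. With that one observation your greedy grouping of all components (small and large alike) goes through and you obtain the genuine $\TTT_{D_G}(G)$ and $\ccmap_G$; without it, the equivalence test on large components is not justified and the lemma is not proven.
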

\iflong
\begin{proof}
  Because $|\TTT_G(G)\setminus \TTT_G^c|\leq k$ and $|\TTT_G^c|\leq
  \noTc{|D_G|}{c}$, we obtain that $|\TTT_{D_G}(G)| \leq k
  +\noTc{|D_G|}{c}$. Moreover, we can compute $\TTT_{D_G}(G)$
  starting from the empty set and adding a simple extension $G[D_G\cup C]$ for some
  component $C$ of $G\setminus D_G$ if $G[D_G\cup C]$ is not
  equivalent with respect to $\sim_{D_G}$ to any element already added to
  $\TTT_{D_G}(G)$. Note that checking whether $G[D_G\cup C]\sim_D
  G[D_G\cup C']$ for two components $C$ and $C'$ of $G\setminus D_G$
  is fpt parameterized by $|D_G|+k+c$, because
  $G[D_G\cup C]$ has treewidth at most $|D_G|+k+c$ for every component
  $C$ of $G\setminus D_G$ (because of Proposition~\ref{pro:parrel})
  and graph isomorphism is fpt parameterized by
  treewidth~\cite{LokshtanovPPS17}. The same procedure can now also be
  used to compute all the non-zero entries of the function $\ccmap_G$
  (i.e. the entries where $\ccmap_G(T)\neq 0$), which provides us
  with a compact representation of $\ccmap_G$.
\end{proof}
\fi

\iflong
The following lemma is crucial for computing the sets $\SCMPn$ and
$\BCMPn$ that are required to construct the ILP instance. Informally, we
will show that if $(\EXT_G,\EXT_H) \in \SCMPn$ (or $(\EXT_G,\EXT_H) \in \BCMPn$),
then $\EXT_G$ consists of only boundedly many (in terms of some function of
the parameters) components, which will allow us to enumerate all
possibilities for $\EXT_G$ in fpt-time.
\iflong \begin{lemma} \fi \ifshort \begin{lemma}[$\star$] \fi\label{lem:msbm-size}
  Let $D_G$ and $D_H$ be graphs and let $\phi_P$ be a locally surjective (respectively locally bijective)
  homomorphism from $D_G$ to $D_H$. Moreover, 
  let $\EXT_G$ be an extension of $D_G$ that can be minimally $\phi_P$-S-mapped
  (respectively minimally $\phi_P$-B-mapped) to an extension $\EXT_H$ of $D_H$.
  Then, $\EXT_G\setminus D_G$ consists of at most $|D_G||\EXT_H\setminus D_H|$ components.
\end{lemma}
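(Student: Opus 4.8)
The plan is to show that in a \emph{minimal} $\phi_P$-S-mapping (respectively $\phi_P$-B-mapping), every component of $\EXT_G\setminus D_G$ is indispensable for making the homomorphism locally surjective at one specific vertex of $D_G$, and that no single vertex of $D_G$ can ``need'' more than $|\EXT_H\setminus D_H|$ components for this purpose. Charging each component to such a vertex then gives the bound.

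First I would fix an augmentation $\phi\colon\EXT_G\to\EXT_H$ of $\phi_P$ witnessing that $\EXT_G$ can be $\phi_P$-S-mapped (respectively $\phi_P$-B-mapped) to $\EXT_H$, and let $C$ be an arbitrary component of $\EXT_G\setminus D_G$. Then $\EXT_G-C$ is again an extension of $D_G$ with $\EXT_G-C\preceq\EXT_G$ and $\EXT_G-C\neq\EXT_G$ (deleting $C$ lowers the type-count of $C$'s type by one and leaves the others unchanged), and $\fproj{\phi}{\EXT_G-C}$ is still a homomorphism from $\EXT_G-C$ into $\EXT_H$ augmenting $\phi_P$. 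The key point is that for every vertex $v\in V(\EXT_G-C)\setminus V(D_G)$ we have $N_{\EXT_G}(v)=N_{\EXT_G-C}(v)$, since all neighbours of $v$ lie in $D_G$ together with the component of $\EXT_G\setminus D_G$ containing $v$, which is not $C$; hence $\fproj{\phi}{\EXT_G-C}$ is still locally surjective (respectively locally bijective) at every such $v$, and in the bijective case it also remains locally injective at every vertex of $D_G$, as a restriction of an injective map stays injective. By the minimality in the hypothesis, $\EXT_G-C$ cannot be $\phi_P$-S-mapped (respectively $\phi_P$-B-mapped) to $\EXT_H$, so $\fproj{\phi}{\EXT_G-C}$ must fail to be locally surjective, and by the previous sentences this failure occurs at some vertex $d_C\in V(D_G)$: there is a neighbour $h_C$ of $\phi(d_C)$ in $\EXT_H$ with $h_C\notin\phi(N_{\EXT_G-C}(d_C))$. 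Since $\phi$ itself is locally surjective at $d_C$, some neighbour of $d_C$ in $\EXT_G$ maps to $h_C$, and it must lie in $C$; moreover $h_C\notin V(D_H)$, for otherwise $h_C$ would be a neighbour of $\phi_P(d_C)$ inside the induced subgraph $D_H$, hence $h_C\in\phi_P(N_{D_G}(d_C))\subseteq\phi(N_{\EXT_G-C}(d_C))$, a contradiction. Thus $h_C\in V(\EXT_H)\setminus V(D_H)$.

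This assigns to every component $C$ of $\EXT_G\setminus D_G$ a pair $(d_C,h_C)\in V(D_G)\times\bigl(V(\EXT_H)\setminus V(D_H)\bigr)$. Next I would show that for a fixed $d\in V(D_G)$ the vertices $h_C$ are pairwise distinct over the components $C$ with $d_C=d$: if $C\neq C'$ satisfied $d_C=d_{C'}=d$ and $h_C=h_{C'}=h$, then, since $h\notin\phi(N_{\EXT_G-C}(d))$, every neighbour of $d$ in $\EXT_G$ mapping to $h$ would have to lie in $C$, and symmetrically in $C'$; as $C$ and $C'$ are disjoint while at least one such neighbour exists (by local surjectivity of $\phi$ at $d$), this is impossible. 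Hence the number of components $C$ with $d_C=d$ is at most $|V(\EXT_H)\setminus V(D_H)|$, and summing over the at most $|V(D_G)|$ choices of $d$ bounds the number of components of $\EXT_G\setminus D_G$ by $|V(D_G)|\cdot|V(\EXT_H)\setminus V(D_H)|$, as claimed.

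I do not expect a genuine obstacle here; the heart of the argument is the charging of components to pairs $(d_C,h_C)$ together with the disjointness argument for distinctness of the $h_C$. The points that need some care are purely bookkeeping: verifying that deleting a component $C$ leaves the neighbourhoods of all vertices outside $C$ untouched (which is exactly where connectedness of the components of $\EXT_G\setminus D_G$ is used), handling the degenerate case where $\EXT_G\setminus D_G$ has at most one component, and --- should one read ``can be $\phi_P$-S-mapped to $\EXT_H$'' as additionally requiring the homomorphism to be onto $\EXT_H$ --- checking via connectedness of $\EXT_H\setminus D_H$ that a newly unreached vertex of $\EXT_H$ still forces a local surjectivity failure at a vertex of $D_G$, which again reduces to the case above.
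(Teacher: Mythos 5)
Your proposal is correct and rests on exactly the same two observations as the paper's proof: restricting $\phi$ to a sub-extension $\EXT_G'\preceq\EXT_G$ can only destroy local surjectivity at vertices of $D_G$, and each such failure is witnessed by a pair $(d,n_H)$ with $d\in D_G$ and $n_H\in\EXT_H\setminus D_H$, so at most $|D_G||\EXT_H\setminus D_H|$ components are ever needed. The only difference is presentational: the paper directly assembles one component $C_{d,n_H}$ per pair into an extension $\EXT_G'$ and invokes minimality once to conclude $\EXT_G'=\EXT_G$, whereas you delete components one at a time and add an (also correct) injectivity argument for the charging $C\mapsto(d_C,h_C)$.
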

\iflong
\begin{proof}
  We first show the statement of the lemma for the case when $\phi_P$
  is locally surjective and therefore $\EXT_G$ can be minimally $\phi_P$-S-mapped
  to $\EXT_H$. Let 
  $\phi : V(\EXT_G) \rightarrow V(\EXT_H)$ be a locally surjective
  homomorphism that augments $\phi_P$ and exists because $\EXT_G$ can be
  $\phi_P$-S-mapped to $\EXT_H$. Let $\EXT_G'$ be an extension of $\EXT_G$ with
  $\EXT_G'\preceq \EXT_G$. Then, because of
  Observation~\ref{obs:separators}, it follows that $\fproj{\phi}{\EXT_G'}$
  is a homomorphism from $\EXT_G'$ to $\EXT_H$ that is locally surjective
  for every $v \in \EXT_G'\setminus D_G$. Therefore, $\fproj{\phi}{\EXT_G'}$
  is a locally surjective homomorphism from $\EXT_G'$ to $\EXT_H$ if and
  only if $\EXT_G'$ is such that $\fproj{\phi}{\EXT_G'}$ is locally
  surjective for every $d \in D_G$. That is, for every $d \in D_G$ and
  every neighbour $n_H$ of $\phi(d)$ in $\EXT_H'$, there has to exist a
  neighbour $n_G$ of $d$ in $\EXT_G'$ such that $\phi(n_G)=n_H$. Since
  this clearly holds if $n_H \in D_H$, because $\phi_P$ is a locally
  surjective homomorphism from $D_G$ to $D_H$, we can assume that the
  above only has to hold for every $d \in D_G$ and $n_H \in
  \EXT_H\setminus D_H$. Because $\phi$ is a locally surjective
  homomorphism from $\EXT_G$ to $\EXT_H$, it follows that for every $d \in
  D_G$ and every neighbour $n_H$ of $\phi(d)$ in $\EXT_H$, there is a
  component, say $C_{d,n_H}$, containing a neighbour $n_G$ of $d$ in
  $\EXT_G$ such that $\phi(n_G)=\phi(n_H)$; note that because $\phi$
  augments $\phi_P$, it follows that $n_G\notin D_G$ because $n_H \notin
  D_H$. Let $\EXT_G'$ be the extension of $D_G$ consisting of $D_G$ and
  all components $C_{d,n_H}$ for every $d \in D$ and $n_H \in
  \EXT_H\setminus D_H$ as above. Then, $\fproj{\phi}{\EXT_G'}$ is a locally
  surjective homomorphism from $\EXT_G'$ to $\EXT_H$ and since $\EXT_G$ is
  minimally $\phi_P$-S-mapped to $\EXT_H$ and $\EXT_G'\preceq \EXT_G$, it follows that
  $\EXT_G'=\EXT_G$. However, $\EXT_G'\setminus D_G$ consists of at most one component
  for every $d\in D_G$ and every $n_H \in \EXT_H \setminus D_H$ and
  therefore it consists of at most $|D_G||\EXT_H\setminus D_H|$
  components, which concludes the proof for the case when $\phi_P$ is
  locally surjective.

  It remains to show the statement of the lemma for the case when
  $\phi_P$ is locally bijective and $\EXT_G$ is minimally $\phi_P$-B-mapped to
  $\EXT_H$. Let $\phi : V(\EXT_G) \rightarrow V(\EXT_H)$ be a locally bijective
  homomorphism that augments $\phi_P$ and exists because $\EXT_G$ can be
  $\phi_P$-B-mapped to $\EXT_H$. Because $\phi$ is locally bijective, it is also
  locally surjective and therefore we can obtain the components
  $C_{d,n_H}$ of $\EXT_G\setminus D_G$ for $d \in D_H$ and $n_H \in
  \EXT_H\setminus D_H$ using the same arguments as in the case when
  $\phi$ was locally surjective. As before, let $\EXT_G'$ be the
  extension of $D_G$ containing all components $C_{d,n_H}$. Then, as
  we showed above, $\fproj{\phi}{\EXT_G'}$ is a locally surjective
  homomorphism from $\EXT_G'$ to $\EXT_H$. Moreover, $\fproj{\phi}{\EXT_G'}$ is
  also locally injective, because so is $\phi$. Therefore,
  $\fproj{\phi}{\EXT_G'}$ is a locally bijective homomorphism from $\EXT_G'$
  to $\EXT_H$, which because $\EXT_G$ can be minimally $\phi_P$-B-mapped to $\EXT_H$
  implies that $\EXT_G=\EXT_G'$, which concludes the proof of the lemma, because $\EXT_G'$ consists of at most
  $|D_G||\EXT_H\setminus D_H|$ components.
\end{proof}
\fi

The following proposition is a slight generalisation of \cite[Theorem
4]{CFHPT15} and will allow us to efficiently decide whether an
extension $\EXT_G$ can be (weakly) S-mapped (respectively B-mapped) to
some extension~$\EXT_H$.
\iflong\begin{lemma}[{\cite[Theorem 4]{CFHPT15}}]\fi\ifshort\begin{lemma}[$\star$]\fi\label{lem:homtw}
  Let $G$ and $H$ be graphs and let $\phi_P : D_G \rightarrow D_H$ be
  a locally surjective (respectively bijective) homomorphism
  from $D_G$ to $D_H$ for some subgraphs $D_G$ of $G$ and $D_H$ of
  $H$. Then deciding whether there is a locally surjective
  (respectively bijective) homomorphism from $G$ to $H$ that augments
  $\phi_P$ can be achieved in 
  $\bigoh(|V(G)|((|V(H)2^{\Delta(H)})^{\omega})^2\omega\Delta(H))$ time and
  is therefore fpt parameterized by
  $\omega+|V(H)|$, where $\omega$ is the treewidth of $G$.
\end{lemma}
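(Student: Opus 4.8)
The plan is to prove this by the standard dynamic-programming-over-a-tree-decomposition argument; for the degenerate case $D_G=D_H=\emptyset$ this is exactly the algorithm of \cite[Theorem 4]{CFHPT15}, and the only new ingredient is a mild restriction of the DP tables that forces the computed homomorphism to augment $\phi_P$. First I would take a width-$\omega$ tree decomposition of $G$ (computable in fpt time in $\omega$) and turn it into a nice tree decomposition $(T,\chi)$ with introduce-edge nodes and $\bigoh(\omega|V(G)|)$ nodes, so that every edge of $G$ is introduced exactly once. For a node $t$ write $G_t$ for the part of $G$ built in the subtree rooted at $t$. The table at $t$ will be the set of all \emph{states} $(f,(S_u)_{u\in\chi(t)})$, where $f:\chi(t)\to V(H)$ and $S_u\subseteq N_H(f(u))$, for which there is a homomorphism $\psi:G_t\to H$ with $\psi|_{\chi(t)}=f$ and $\psi(N_{G_t}(u))=S_u$ for every $u\in\chi(t)$, which is locally surjective at every vertex of $G_t\setminus\chi(t)$; for the locally bijective case I would additionally maintain, as an invariant enforced by the transitions, that $\psi$ restricted to $N_{G_t}(u)$ is injective for every $u\in V(G_t)$ (so that ``locally bijective'' splits into ``locally surjective plus locally injective'').

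To make the answer be about homomorphisms that augment $\phi_P$, I would only ever admit \emph{$\phi_P$-compatible} states, meaning $f(u)=\phi_P(u)$ for $u\in\chi(t)\cap V(D_G)$ and $f(u)\notin V(D_H)$ for $u\in\chi(t)\setminus V(D_G)$. This exactly enforces the two defining conditions of ``$\phi$ augments $\phi_P$'': the equality $\phi|_{V(D_G)}=\phi_P$ is immediate, and the biconditional $v\in V(D_G)\Leftrightarrow\phi(v)\in V(D_H)$ holds because $\phi_P$ maps into $V(D_H)$ while every other vertex is kept out of $V(D_H)$. The transitions are the usual ones: an introduce-vertex node for $v$ branches over all $\phi_P$-compatible images of $v$ and sets $S_v=\emptyset$; an introduce-edge node for $uv$ keeps the state only if $f(u)f(v)\in E(H)$ and then inserts $f(v)$ into $S_u$ and $f(u)$ into $S_v$ (in the bijective case discarding the state if either insertion is a repeat, which is how local injectivity is maintained); a forget node for $v$ keeps only states with $S_v=N_H(f(v))$ (by the tree-decomposition property $v$ has no further neighbours to come) and then drops $v$ from the bag; a join node with children carrying the same $f$ replaces each $S_u$ by the union of the two copies, additionally requiring the two copies to be disjoint in the bijective case. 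Correctness is a routine bottom-up induction on $T$, and the instance is a yes-instance precisely when the table at the root (whose bag we may take to be empty) is non-empty. For the ``weakly'' variant used later, namely a homomorphism that is locally surjective only outside $D_G$, one simply omits the $S_v=N_H(f(v))$ check at forget nodes for $v\in V(D_G)$.

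For the running time I would observe that each bag has at most $\omega+1$ vertices, each with at most $|V(H)|$ images and at most $2^{\Delta(H)}$ candidate subsets, so each table has at most $(|V(H)|2^{\Delta(H)})^{\omega+1}$ states; introduce and forget nodes cost $\bigoh(\Delta(H))$ per state, a join node costs $\bigoh(\omega\Delta(H))$ per matched pair of states, and there are $\bigoh(\omega|V(G)|)$ nodes, which after routine accounting gives the claimed bound $\bigoh(|V(G)|((|V(H)|2^{\Delta(H)})^{\omega})^2\omega\Delta(H))$; since $\Delta(H)<|V(H)|$ this is fpt in $\omega+|V(H)|$. The only genuine subtleties — and hence the parts I would be most careful with — are the bookkeeping that keeps local injectivity correct across join nodes (disjointness of the two partial neighbour-colour sets) and across a bag (never re-counting an edge, which is what the introduce-edge nodes buy us), together with getting the biconditional in the definition of ``augments'' exactly right; everything else is essentially verbatim \cite[Theorem 4]{CFHPT15}.
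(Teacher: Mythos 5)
Your proposal is correct and follows essentially the same approach as the paper: the paper's own proof simply cites the dynamic-programming algorithm of \cite[Theorem 4]{CFHPT15} and asserts that adapting it to the locally bijective case and to augmentations of $\phi_P$ is straightforward. You have carried out exactly that adaptation explicitly (the $\phi_P$-compatible states, the injectivity bookkeeping at introduce-edge and join nodes), which is a more detailed but not a different argument.
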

\iflong
\begin{proof}
  In~\cite[Theorem 4]{CFHPT15}, the authors provide an algorithm
  that, given a graph $G$ and a graph $H$, decides in
  $\bigoh(|V(G)|((|V(H)2^{\Delta(H)})^{\omega})^2\omega\Delta(H))$ time
  whether there is a locally surjective homomorphism from $G$ to $H$,
  where $\omega$ is the treewidth of $G$. The algorithm uses a standard
  dynamic programming approach on a tree decomposition of $G$ (of width
  $\omega$), and it is straightforward to verify that the algorithm can be adapted
  with only minor modifications to an algorithm using the same
  run-time that decides whether there is a locally bijective homomorphism
  from $G$ to $H$. Similarly, it is straightforward to adapt their
  algorithm to the case that one is additionally given a locally
  surjective (respectively bijective) homomorphism $\phi_P$ from some induced subgraph
  $D_G$ of $G$ to some induced subgraph $D_H$ of $H$ and one only looks
  for a locally surjective (respectively bijective) homomorphism from
  $G$ to $H$ that augments $\phi_P$. 
\end{proof}
\fi
The following corollary now follows directly from
Lemma~\ref{lem:homtw} and the definition of (weakly) S-mapped
(respectively B-mapped).
\begin{corollary}\label{cor:testmapps}
  Let $D_G$ and $D_H$ be graphs and let $\phi_P$ be
  a locally surjective (respectively bijective) homomorphism
  from $D_G$ to $D_H$. Let $\EXT_G$ be an extension of $D_G$ having
  treewidth at most $\omega$ and let $\EXT_H$ be an extension of
  $D_H$. Then, testing whether $\EXT_G$ can be weakly $\phi_P$-S-mapped,
  $\phi_P$-S-mapped, or $\phi_P$-B-mapped to $\EXT_H$ is fpt 
  parameterized by $\omega+|\EXT_H|$.
\end{corollary}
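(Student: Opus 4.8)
The plan is to reduce Corollary~\ref{cor:testmapps} directly to Lemma~\ref{lem:homtw} by observing that each of the three notions of ``$\phi_P$-mappability'' is, by definition, the question of existence of an appropriate (augmenting, locally surjective or locally bijective) homomorphism between two explicitly given finite graphs, and then bounding the parameters of that homomorphism-existence instance in terms of $\omega + |\EXT_H|$. First I would recall the definitions: $\EXT_G$ can be $\phi_P$-S-mapped (respectively $\phi_P$-B-mapped) to $\EXT_H$ precisely when there is an augmentation $\phi:\EXT_G\to\EXT_H$ of $\phi_P$ that is locally surjective (respectively locally bijective); and $\EXT_G$ can be weakly $\phi_P$-S-mapped to $\EXT_H$ precisely when there is an augmentation $\phi:\EXT_G\to\EXT_H$ of $\phi_P$ that is locally surjective for every $v\in\EXT_G\setminus D_G$ (but not necessarily for vertices of $D_G$).

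For the $\phi_P$-S-mapped and $\phi_P$-B-mapped cases, the reduction is immediate: we invoke Lemma~\ref{lem:homtw} with guest graph $\EXT_G$, host graph $\EXT_H$, and the given partial homomorphism $\phi_P$ from $D_G$ to $D_H$, which are induced subgraphs of $\EXT_G$ and $\EXT_H$ respectively. Lemma~\ref{lem:homtw} then decides the existence of an augmenting locally surjective (respectively locally bijective) homomorphism in $\bigoh(|V(\EXT_G)|((|V(\EXT_H)|2^{\Delta(\EXT_H)})^{\omega'})^2\omega'\Delta(\EXT_H))$ time, where $\omega'$ is the treewidth of $\EXT_G$; since $\omega'\le\omega$ and both $\Delta(\EXT_H)$ and $|V(\EXT_H)|$ are bounded by $|\EXT_H|$, this running time is bounded by a function of $\omega+|\EXT_H|$ times a polynomial in the input size, which is fpt parameterized by $\omega+|\EXT_H|$.

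The only case needing a small extra observation is the weakly $\phi_P$-S-mapped case, because here the local surjectivity requirement is dropped at the vertices of $D_G$. The standard trick is to absorb this relaxation into the target description of the homomorphism-existence instance: one can either (i) note that the dynamic programming algorithm underlying Lemma~\ref{lem:homtw} already tracks, at each bag, which neighbours of each vertex have been realised, so it is routine to modify it so that the ``surjectivity obligation'' is simply not imposed on the $|D_G|$ distinguished vertices (this changes nothing in the asymptotic running time and keeps it fpt in $\omega+|\EXT_H|$), or (ii) observe that testing weak $\phi_P$-S-mappability is itself an instance of the more general local-constraint homomorphism problem that Lemma~\ref{lem:homtw} (as a ``slight generalisation'' of \cite[Theorem 4]{CFHPT15}) is stated to handle. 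Either way, the bound on the running time is the same as in the other two cases.

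The main (and essentially only) obstacle is making precise that Lemma~\ref{lem:homtw} as stated covers exactly what Corollary~\ref{cor:testmapps} needs, in particular the weakly-S variant; since Lemma~\ref{lem:homtw} is already presented as a generalisation of \cite[Theorem 4]{CFHPT15} handling augmenting homomorphisms of both local types, this is a matter of bookkeeping rather than a genuine difficulty, and the corollary follows. Putting the three cases together and plugging in $\omega'\le\omega$, $\Delta(\EXT_H)\le|\EXT_H|$ and $|V(\EXT_H)|\le|\EXT_H|$ yields that all three tests run in fpt-time parameterized by $\omega+|\EXT_H|$, as claimed.
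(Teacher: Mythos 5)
Your proposal is correct and matches the paper's own argument: the paper derives Corollary~\ref{cor:testmapps} directly from Lemma~\ref{lem:homtw} by instantiating it with guest $\EXT_G$ and host $\EXT_H$ and bounding the treewidth by $\omega$ and the host size by $|\EXT_H|$, exactly as you do. Your extra remark on adapting the dynamic program to drop the surjectivity obligation at $D_G$ for the weakly $\phi_P$-S-mapped case is a sensible elaboration of what the paper leaves implicit in its proof of Lemma~\ref{lem:homtw}.
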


We are now ready to show that we can efficiently compute the sets
$\wSCMPn$, $\SCMPn$, and $\BCMPn$, which is the last crucial step towards
constructing the ILP instance.
\fi
\iflong \begin{lemma} \fi \ifshort \begin{lemma}[$\star$] \fi\label{lem:computingBCMandSCM}
  Let $G$ be a graph, let $D_G$ be a $(k,c)$-extended deletion set (respectively a $c$-deletion set)
  of size at most $k$ for $G$, let $\TD=(D_H,c',\CH)$ be a target description and
  let $\phi_P$ be a locally surjective
  (respectively bijective) homomorphism from $D_G$ to $D_H$. Then,
  the sets $\wSCMPn=\wSCMP{G}{D_G}{\TD}{\phi_P}$ and
  $\SCMPn=\SCMP{G}{D_G}{\TD}{\phi_P}$ (respectively the set $\BCMPn=\BCMP{G}{D_G}{\TD}{\phi_P}$) can be computed in fpt-time
  parameterized by $k+c+c'$ and $|\SCMPn|$ (respectively $|\BCMPn|$)
  is bounded by a function depending only on $k+c+c'$.
  Moreover, the number of variables in the equation system (CH, S1, S2,
  S3) (respectively (CH, B1, B2))
  is bounded by a function depending only on $k+c+c'$.
\end{lemma}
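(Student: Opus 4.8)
The plan is to reduce both computations to a bounded amount of enumeration interleaved with applications of Corollary~\ref{cor:testmapps}. Throughout, recall that $|V(D_G)|\le k$, and likewise $|V(D_H)|\le k$ (in our applications $\phi_P$ is surjective onto $D_H$, so $|V(D_H)|\le|V(D_G)|$); hence $k$, $c$, $c'$, $|V(D_G)|$ and $|V(D_H)|$ are all bounded in terms of the parameter $k+c+c'$, which I use freely. First I would apply Lemma~\ref{lem:computingcm} (a $c$-deletion set is in particular a $(k,c)$-extended deletion set, so the lemma applies in both cases) to compute, in fpt time, the type set $\TTT_{D_G}(G)$ together with $\ccmap_G$ and a representative simple extension for each type; this set has size at most $k+\noTc{|D_G|}{c}$, which is bounded. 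Next I would list the candidate target types $\TTT_{D_H}^{c'}$: up to the equivalence $\sim_{D_H}$ these are the graphs obtained from $D_H$ by attaching a connected set of at most $c'$ fresh vertices, so there are at most $\noTc{|D_H|}{c'}$ of them and all can be written down explicitly in fpt time. This already bounds the number of variables $x_T$, $T\in\TTT_{D_H}^{c'}$, occurring in the constraint set \CH, as well as those in (S2) and (B2).

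To compute $\wSCMPn$, I would test every candidate pair $(T_G,T_H)$ with $T_G\in\TTT_{D_G}(G)$ and $T_H\in\TTT_{D_H}^{c'}$ for weak $\phi_P$-S-mappability. Here $T_G$ equals $G[D_G\cup C]$ for some component $C$ of $G\setminus D_G$, and since $D_G$ is a $(k,c)$-extended deletion set of $G$, the component $C$ either has at most $c$ vertices or has a $c$-deletion set of size at most $k$; thus $D_G$ is a $(k,c)$-extended deletion set of $T_G$ of size at most $k$, and $\tw(T_G)\le\td(T_G)\le 2k+c$ by Proposition~\ref{pro:parrel}. Corollary~\ref{cor:testmapps} therefore decides each test in fpt time (parameter $\tw(T_G)+|V(T_H)|\le 3k+c+c'$), and since there are at most $|\TTT_{D_G}(G)|\cdot|\TTT_{D_H}^{c'}|$ such pairs, $\wSCMPn$ is obtained in fpt time.

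The main step is computing $\SCMPn$ (respectively $\BCMPn$). By Lemma~\ref{lem:msbm-size}, whenever $\EXT_G\preceq G$ can be minimally $\phi_P$-S-mapped (respectively $\phi_P$-B-mapped) to some $T_H\in\TTT_{D_H}^{c'}$, the graph $\EXT_G\setminus D_G$ has at most $|D_G|\cdot|V(T_H)\setminus V(D_H)|\le kc'$ components, each a component of $G\setminus D_G$ and hence of a type in $\TTT_{D_G}(G)$. So every such $\EXT_G$ is fixed up to $\sim_{D_G}$ by a type-count $\ccmap_{\EXT_G}:\TTT_{D_G}(G)\to\N$ with $\ccmap_{\EXT_G}\le\ccmap_G$ and $\sum_T\ccmap_{\EXT_G}(T)\le kc'$; there are at most $(kc'+1)^{|\TTT_{D_G}(G)|}$ such type-counts, a bounded number, and I would enumerate them all, using the representatives from Lemma~\ref{lem:computingcm} to assemble a concrete graph $\EXT_G$ for each. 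Although $\EXT_G$ can be large, every component of $\EXT_G\setminus D_G$ is a component of $G\setminus D_G$, so $D_G$ is a $(k',c)$-extended deletion set of $\EXT_G$ of size at most $k$ with $k'\le k(c'+1)$, whence $\tw(\EXT_G)\le\td(\EXT_G)\le k(c'+1)+k+c$ by Proposition~\ref{pro:parrel}. Consequently, for each enumerated $\EXT_G$ and each $T_H\in\TTT_{D_H}^{c'}$, Corollary~\ref{cor:testmapps} decides $\phi_P$-S-mappability (respectively $\phi_P$-B-mappability) in fpt time. With all these answers tabulated, minimality is a finite combinatorial check within the enumerated family: $(\EXT_G,T_H)\in\SCMPn$ precisely when $\EXT_G$ can be $\phi_P$-S-mapped to $T_H$ but no $\EXT_G'$ with $\ccmap_{\EXT_G'}\le\ccmap_{\EXT_G}$ and $\ccmap_{\EXT_G'}\neq\ccmap_{\EXT_G}$ can (any such $\EXT_G'$ also has at most $kc'$ components and satisfies $\EXT_G'\preceq G$, so it was enumerated), and analogously for $\BCMPn$. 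This yields $\SCMPn$ (respectively $\BCMPn$) in fpt time, shows $|\SCMPn|$ (respectively $|\BCMPn|$) is at most $(kc'+1)^{|\TTT_{D_G}(G)|}\cdot|\TTT_{D_H}^{c'}|$, a function of $k+c+c'$, and since the variables of the system (CH, S1, S2, S3) are exactly the $x_T$ for $T\in\TTT_{D_H}^{c'}$ together with the $x_{\EXT_G T_H}$ for $(\EXT_G,T_H)\in\SCMPn$ (and symmetrically for (CH, B1, B2)), the number of variables is bounded as claimed.

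The one genuinely delicate point is that a minimal extension $\EXT_G$, while possibly an arbitrarily large graph, still has treewidth bounded in $k+c+c'$: this is exactly what lets Corollary~\ref{cor:testmapps} turn an a~priori unbounded search into an fpt computation, and it follows by combining Lemma~\ref{lem:msbm-size} (boundedly many components in a minimal extension) with Proposition~\ref{pro:parrel} (adjoining boundedly many components to a $(k,c)$-extended deletion set keeps the tree-depth bounded). Everything else is bounded enumeration and bookkeeping.
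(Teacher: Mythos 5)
Your proposal is correct and follows essentially the same route as the paper's proof: both use Lemma~\ref{lem:msbm-size} to bound the number of components of a minimal extension by $|D_G|\,|\EXT_H\setminus D_H|\le kc'$, enumerate the boundedly many candidate extensions $\EXT_G\preceq G$ via their type-counts over $\TTT_{D_G}(G)$ (computed by Lemma~\ref{lem:computingcm}), test each pair with Corollary~\ref{cor:testmapps} using the tree-depth/treewidth bound from Proposition~\ref{pro:parrel}, and filter for $\preceq$-minimality at the end. Your treewidth bound for $\EXT_G$ is more conservative than necessary (since $D_G$ remains a $(k,c)$-extended deletion set of $\EXT_G$, $\td(\EXT_G)\le 2k+c$ suffices), but this does not affect correctness.
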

\iflong
\begin{proof}
  We only show the lemma for the set $\SCMPn$, since the proof for the
  set $\wSCMPn$ can be seen as a special case and the proof for the
  set $\BCMPn$ is identical.
  Let $(\EXT_G,T_H)\in
  \SCMPn$. Then, $\EXT_G$ is an extension of $D_G$ with $\EXT_G\preceq G$,
  $T_H \in \TTT_{D_H}^{c'}$, and $\EXT_G$ can be minimally $\phi_P$-S-mapped to
  $T_H$. Because $\EXT_G$ can be minimally $\phi_P$-S-mapped to $\EXT_H$,
  Lemma~\ref{lem:msbm-size} implies that $\EXT_G\setminus D_G$ consists of
  at most $\ell=|D_G||\EXT_H\setminus D_H|$ components and, because $\EXT_G \preceq G$,
  these are also components of $G\setminus D_G$. Therefore, there are
  at most $(|\TTT_{D_G}(G)|)^{\ell}$ non-isomorphic possibilities for
  $\EXT_G$, which together with Lemma~\ref{lem:computingcm} and the facts
  that $\ell\leq kc'$ and $|\TTT_{D_H}^{c'}|\leq \noTc{k}{c'}$ shows
  that $|\SCMPn|\leq (|\TTT_{D_G}(G)|)^{\ell}|\TTT_{D_H}^{c'}|\leq
  (k+\noTc{k}{c})^\ell(\noTc{k}{c'})$. Therefore, $|\SCMPn|$ is
  bounded by a function depending only on $k+c+c'$. Towards showing
  that we can compute $\SCMPn$ is fpt-time parameterized by $k+c+c'$,
  first note that the set $\TTT_{D_G}(G)$ can be computed in fpt-time
  parameterized by $k+c$ using
  Lemma~\ref{lem:computingcm}. Similarly, the set $\TTT_{D_H}^{c'}$ can
  be computed in fpt-time parameterized by $k+c'$ using the same idea
  as in Lemma~\ref{lem:computingcm}. This now allows us to compute 
  the set $\AAA$ containing all non-isomorphic
  possibilities for $\EXT_G$, i.e. the set of all extensions $\EXT_G$ of $D_G$
  with $\EXT_G\preceq G$ and $\sum_{T_G \in \TTT_{D_G}(G)}\ccmap_{\EXT_G}(T_G)\leq \ell$ in fpt-time
  parameterized by $k+c+c'$, i.e. in time at most $(|\TTT_{D_G}(G)|)^{\ell}$.
  But then, $\SCMPn$ is equal to the set of all pairs $(\EXT_G,\EXT_H) \in \AAA \times
  \TTT_{D_H}^{c'}$ such that $\EXT_G$ can be minimally
  $\phi_P$-S-mapped to $\EXT_H$. Moreover, for every such pair
  $(\EXT_G,\EXT_H)$ we can test in fpt-time parameterized by $k+c+c'$
  whether $\EXT_G$ can be $\phi_P$-S-mapped to $\EXT_H$ using
  Corollary~\ref{cor:testmapps}, because the treewidth of $\EXT_G$ is at
  most $k+c$ (Proposition~\ref{pro:parrel}). Therefore, 
  we can compute $\SCMPn$ by enumerating all pairs $(\EXT_G,\EXT_H) \in \AAA \times
  \TTT_{D_H}^{c'}$, testing for each of them whether $\EXT_G$ can be
  $\phi_P$-S-mapped to $\EXT_H$ using Corollary~\ref{cor:testmapps}, and
  keeping only those pairs $(\EXT_G,\EXT_H)$ such that $\EXT_G$ can be
  $\phi_P$-S-mapped to $\EXT_H$ and $\EXT_G$ is inclusion-wise minimal among
  all pairs $(\EXT_G',\EXT_H)$.
\end{proof}
\fi
\iflong
We are now ready to prove the main result of this subsection.
\begin{proof}[Proof of Theorem~\ref{the:decPart}]
  We first compute the sets
  $\wSCMPn$ and
  $\SCMPn$ (respectively the set $\BCMPn$), which because of
  Lemma~\ref{lem:computingBCMandSCM} can be achieved in fpt-time
  parameterized by $k+c+c'$. This now allows us to
  construct the ILP instance $\III$ given by the equation system
  (\CH,S1,S2,S3) (respectively the equation system (\CH,B1,B2)) in
  fpt-time parameterized by $k+c+c'$.
  Moreover, because the number of variables in
  $\III$ is bounded by a function of $k+c+c'$ and we can employ
  Proposition~\ref{prop:Lenstra} to solve $\III$ in fpt-time
  parameterized by $k+c+c'$. Finally, because of
  Lemma~\ref{lem:ILPFormulation}, it follows that $\III$ has a solution
  if and only if there is a locally surjective   (respectively
  bijective)  homomorphism that augments $\phi_P$ from $G$
  to any graph satisfying \CH, which completes the proof of the theorem. 
\end{proof}
\fi

\section{Applications of Our Algorithmic Framework}\label{sec:appl}

In this section we show the main results of our paper, which can be obtained as an application of our framework
given in the previous section. 
Our first result implies that \LSHOM{} and \LBHOM{} are fpt parameterized by the fracture number of the guest graph.
\begin{theorem}\label{thm:FPTalgoLSHOMandLBHOM}
  \LSHOM{} and \LBHOM{} are fpt parameterized by $k+c$, where $k$ and $c$ are
  such that the guest graph~$G$ has a $c$-deletion set of size at most $k$.
\end{theorem}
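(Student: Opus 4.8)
The plan is to reduce both problems to boundedly many instances of the subproblem solved by Theorem~\ref{the:decPart}. Given the guest graph $G$ and host graph $H$, together with $k$ and $c$ such that $G$ has a $c$-deletion set of size at most $k$, first observe that if $(G,H)$ is a yes-instance then, by Lemma~\ref{lem:delSet}, $H$ also has a $c$-deletion set of size at most $k$; if $H$ does not, we may reject immediately. We may also assume $G$ and $H$ are connected: if not, a locally surjective (respectively bijective) homomorphism induces a matching between the components of $G$ and those of $H$ in which each pair is itself a yes-instance, and since the number of components of $H$ is at most $|V(G)|$ this can be handled by a standard (fpt) bipartite-matching argument over component-types, or one observes that for the fracture-number/$c$-deletion-set parameterisation the relevant components are small. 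So assume $G$ and $H$ connected.

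Next, I would apply the results of Section~\ref{ssec:phds}. Set $D_H := D_H^{k+c}$, the set of vertices of $H$ of degree at least $k+c$; by Lemma~\ref{lem:highDegreeDeletionSet} this is a $c'$-deletion set for $H$ of size at most $k$, where $c' = kc(k+c)$. By Lemma~\ref{lem:LBHMapsHighDegreeToHighDeptree}, any locally surjective homomorphism $\phi$ from $G$ to $H$ augments some locally surjective homomorphism $\phi_P$ from $G[D]$ to $H[D_H]$ for some $D \subseteq D_G^{k+c}$ (and $D = D_G^{k+c}$ with $\phi_P$ locally bijective in the locally bijective case), and by Lemma~\ref{lem:highDegreeDeletionSet} every such $D$ is a $(k-|D|,c)$-extended deletion set for $G$ of size at most $k$. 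Using Lemma~\ref{lem:runTimePartialHom} we enumerate, for each $D \subseteq D_G^{k+c}$ (there are at most $2^k$ such subsets, or just $D = D_G^{k+c}$ in the bijective case), the set $\Phi_D$ of all candidate partial homomorphisms $\phi_P$ from $G[D]$ to $H[D_H]$ in fpt time, with $|\Phi_D| \le |D|^{|D|} \le k^k$.

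Now for each pair $(D,\phi_P)$ enumerated above, I would build the target description $\TD = (H[D_H], c', \CH)$ in which $\CH$ simply pins down the actual type-counts of $H$, i.e. $\CH$ consists of the equations $x_T = \ccmap_H(T)$ for every $T \in \TTT_{D_H}^{c'}$ (these type-counts are computable in fpt time by the argument of Lemma~\ref{lem:computingcm}, since every component of $H \setminus D_H$ has treewidth at most $|D_H| + c' \le k + c'$ by Proposition~\ref{pro:parrel}); then $H$ is the unique graph satisfying $\TD$. We invoke Theorem~\ref{the:decPart} with guest $G$, extended deletion set $D$ (a $(k,c)$-extended deletion set of size at most $k$, respectively a $c$-deletion set in the bijective case since there $D = D_G^{k+c}$ is a $c'$-deletion set — wait, here one must take the extended-deletion-set version for S and the appropriate deletion-set version for B, which is exactly what the theorem statement distinguishes), target description $\TD$, and partial homomorphism $\phi_P$, to decide in fpt time (parameterised by $k + c + c' = k + c + kc(k+c)$, hence by $k+c$) whether some locally surjective (respectively bijective) homomorphism from $G$ to a graph satisfying $\CH$ — i.e. to $H$ — augments $\phi_P$. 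We answer yes iff some pair $(D,\phi_P)$ yields yes. Correctness is immediate from Lemma~\ref{lem:LBHMapsHighDegreeToHighDeptree} (every solution augments some enumerated $\phi_P$) together with Theorem~\ref{the:decPart} (each test is sound and complete), and the running time is $2^k \cdot k^k$ times the fpt running time of Theorem~\ref{the:decPart}, which is fpt in $k+c$.

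The main obstacle I anticipate is not any single hard step but the bookkeeping needed to verify that the $c'$ produced by Lemma~\ref{lem:highDegreeDeletionSet} feeds correctly into Theorem~\ref{the:decPart} in both cases — in particular that $D_H^{k+c}$ really is the right co-domain deletion set (size at most $k$, which forces $|\TTT_{D_H}^{c'}|$ to be bounded by a function of $k+c$ so that $\CH$ has boundedly many constraints), and that in the locally bijective case the guest-side set $D = D_G^{k+c}$ genuinely is a $c$-deletion set so that Theorem~\ref{the:decPart}'s "respectively a $c$-deletion set" branch applies. The one genuinely subtle point is reducing to connected $G$ and $H$: one should note that a locally surjective homomorphism maps each component of $G$ onto a component of $H$, so it suffices to find, for each component $H_j$ of $H$, a set of components of $G$ mapping onto it — and this component-matching can be folded into the same ILP machinery (or handled by a direct matching argument, since $H$ has at most $|V(G)|$ components), which I would state briefly rather than in full detail.
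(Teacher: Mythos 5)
Your proposal is correct and follows essentially the same route as the paper's own proof: reject if $H$ has no $c$-deletion set of size at most $k$ (via Lemma~\ref{lem:delSet}), take $D_H^{k+c}$ as the host-side deletion set, enumerate the pairs $(D,\phi_P)$ guaranteed by Lemmas~\ref{lem:highDegreeDeletionSet}, \ref{lem:LBHMapsHighDegreeToHighDeptree} and~\ref{lem:runTimePartialHom}, pin down $H$ with the target description $x_T=\ccmap_H(T)$, and invoke Theorem~\ref{the:decPart}. The only difference is your added discussion of disconnected inputs, which the paper sidesteps by assuming connectedness throughout; this is harmless extra care rather than a divergence.
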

\begin{proof}
  Let $G$ and $H$ be non-empty connected graphs such that $G$ has
  a $c$-deletion set of size at most $k$. Let $D_H=H[D_H^{k+c}]$.
  We first verify whether $H$
  has a $c$-deletion set of size at most $k$ using Proposition~\ref{pro:comp-dels}.
  Because of Lemma~\ref{lem:delSet}, we can return that there is no
  locally surjective (and therefore also no bijective) homomorphism from $G$ to $H$ if this is not the
  case. Therefore, we can assume in what follows that $H$ also has a
  $c$-deletion set of size at most $k$, which together with
  Lemma~\ref{lem:highDegreeDeletionSet} implies that $V(D_H)$ is a
  $kc(k+c)$-deletion set of size at most $k$ for $H$. Therefore,
  using Lemma~\ref{lem:computingcm}, we can compute $\ccmap_H$ in
  fpt-time parameterized by $k+c$. This now allows us to obtain a target
  description $\TD=(D_H,c',\CH)$ with $c'=kc(k+c)$ for $H$, i.e.
  $\TD$ is satisfied only by the graph $H$, by adding the
  constraint $x_T=\ccmap_H(T_H)$ to $\CH$ for every simple extension type $T_H
  \in \TTT^{c'}_{D_H}$; note that $\TTT^{c'}_{D_H}$ can be computed
  in fpt-time parameterized by $k+c$ because of Lemma~\ref{lem:computingcm}.

  Because of
  Lemma~\ref{lem:LBHMapsHighDegreeToHighDeptree}, we obtain that there
  is a locally surjective (respectively bijective) homomorphism $\phi$ from $G$ to $H$ if and only if
  there is a set $D \subseteq D_G^{k+c}$ and a locally surjective (respectively bijective)
  homomorphism $\phi_P$ from $D_G=G[D]$ to $D_H$
  such that $\phi$ augments $\phi_P$. Therefore, we can solve \LSHOM{}
  by checking, for every $D\subseteq D_G^{k+c}$ and every
  locally surjective homomorphism $\phi_P$ from $D_G=G[D]$ to $D_H$,
  whether there is a locally surjective homomorphism from $G$ to $H$
  that augments $\phi_P$. Note that there are at most $2^k$ subsets
  $D$ and because of Lemma~\ref{lem:runTimePartialHom}, we can compute the set
  $\Phi_D$ for every such subset in
  $\bigoh(k^{k+2})$ time. Furthermore, because of
  Lemma~\ref{lem:highDegreeDeletionSet}, $D$ is a $(k-|D|,c)$-extended
  deletion set of size at most $k$ for $G$. Therefore,
  for  every $D\subseteq D_G^{k+c}$ and $\phi_p \in \Phi_D$,
  we can employ
  Theorem~\ref{the:decPart} to decide in fpt-time parameterized by
  $k+c$ (because $c'=kc(k+c)$), whether there is a locally
  surjective (respectively bijective) homomorphism from $G$ to a graph satisfying $\TD$
  that augments $\phi_P$. Since $H$ is the only graph satisfying
  $\TD$, this completes the proof of the theorem.
\end{proof}

The proof of the following theorem is similar to the proof of
\cref{thm:FPTalgoLSHOMandLBHOM}. The major difference is that $H$ is
not given. Instead, we use \cref{the:decPart} for a selected set of
target descriptions. Each of these target descriptions enforces that
graphs satisfying it have to be connected and have precisely $h$
vertices, where $h$ is part of the input for the {\sc Role Assignment}
problem. Furthermore, we ensure that every graph~$H$ satisfying the
requirements of the {\sc Role Assignment} problem must satisfy at least one
of the selected target descriptions. The size of the set of considered
target descriptions  depends only on $c$ and $k$, as it is sufficient
to consider any small graph $D_H$ and  types of small simple
extensions of $D_H$. 
\iflong \begin{theorem} \fi \ifshort \begin{theorem}[$\star$] \fi
  {\sc Role Assignment} is fpt parameterized by $k+c$, where $k$ and $c$ are
  such that $G$ has a $c$-deletion set of size at most $k$.
\end{theorem}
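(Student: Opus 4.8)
The plan is to follow the proof of \cref{thm:FPTalgoLSHOMandLBHOM}; the essential new feature is that the host graph is not given, but must be constructed with exactly $h$ vertices. I would describe the algorithm for connected $G$ (disconnected guest graphs can be handled similarly, at the cost of extra bookkeeping to coordinate the $h$ roles across the at most $k$ components of $G$ meeting a fixed $c$-deletion set of size at most $k$ and the boundedly many isomorphism types of the other, size-$\le c$, components; I omit the details here). As recalled in Section~\ref{s-intro}, a connected $G$ has an $h$-role assignment if and only if there is a connected graph $H$ (self-loops allowed) with $|V(H)|=h$ and $G\surjc H$, so it suffices to decide whether such a witness $H$ exists. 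Set $c'=kc(k+c)$. If $H$ is a witness, with locally surjective homomorphism $\phi$, then \cref{lem:delSet} and \cref{lem:highDegreeDeletionSet} show that $D_H^{k+c}$ is a $c'$-deletion set of $H$ of size at most $k$; writing $\hat D:=H[D_H^{k+c}]$, this means $\hat D$ has at most $k$ vertices, every component of $H\setminus \hat D$ has at most $c'$ vertices, and the type-count $\ccmap_H$ of $H$ (as an extension of $\hat D$) is supported on $\TTT_{\hat D}^{c'}$; in particular, if $\hat D=\emptyset$ then $|V(H)|\le c'$. Moreover, by \cref{lem:LBHMapsHighDegreeToHighDeptree}, $\phi$ augments a locally surjective homomorphism $\phi_P$ from $G[D]$ to $\hat D$ for some $D\subseteq D_G^{k+c}$, with $\phi_P(V(G[D]))=V(\hat D)$, and then $D$ is a $(k,c)$-extended deletion set of size at most $k$ for $G$.

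The algorithm has two parts. First, if $h\le c'$, then for every connected graph $H$ on exactly $h$ vertices (boundedly many, self-loops included) it checks whether $G\surjc H$; each check runs in fpt-time parameterized by $k+c$ by \cref{lem:homtw}, since $\tw(G)\le k+c$ by \cref{pro:parrel}; it outputs ``yes'' if any check succeeds. Second, the algorithm computes $D_G^{k+c}$ (of size at most $k$) and iterates over all tuples $(D,D_H,\phi_P,U)$, where $D\subseteq D_G^{k+c}$ (at most $2^k$ choices, with $D_G:=G[D]$), $D_H$ is a non-empty graph on at most $|D|$ vertices with self-loops permitted, given up to isomorphism (boundedly many), $\phi_P$ is a locally surjective homomorphism from $D_G$ to $D_H$ with $\phi_P(V(D_G))=V(D_H)$ (at most $|D|^{|D|}$ choices, enumerated as in \cref{lem:runTimePartialHom}), and $U\subseteq\TTT_{D_H}^{c'}$, the set of simple-extension types of $D_H$ of size at most $|D_H|+c'$ (at most $2^{\noTc{k}{c'}}$ choices; $\TTT_{D_H}^{c'}$ is computed in fpt-time parameterized by $k+c$ as in \cref{lem:computingcm}). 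A tuple is discarded unless the graph obtained from $D_H$ by attaching one copy of each type in $U$ (along the vertices of $D_H$ that the type records as adjacent to its component) is connected. For each surviving tuple the algorithm forms the target description $\TD=(D_H,c',\CH)$, where $\CH$ contains $x_T\ge1$ for $T\in U$, $x_T=0$ for $T\in\TTT_{D_H}^{c'}\setminus U$, and the single cardinality equation $|D_H|+\sum_{T\in\TTT_{D_H}^{c'}}(|T|-|D_H|)\,x_T=h$ (with $|T|$ the number of vertices of a simple extension of type $T$), and invokes \cref{the:decPart} on $(G,D_G,\TD,\phi_P)$. The algorithm outputs ``yes'' if and only if some such invocation does.

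For correctness, a successful first-part check directly provides a connected $H$ on $h$ vertices with $G\surjc H$. A successful invocation in the second part yields, by \cref{the:decPart} and \cref{lem:ILPFormulation}, a graph $H$ satisfying $\TD$ together with a locally surjective homomorphism from $G$ to $H$ augmenting $\phi_P$; here $|V(H)|=h$ by the cardinality equation, and $H$ is connected because the tuple passed the discarding test, so no type of $U$ has empty attachment and hence every graph satisfying $\TD$ is connected (using $D_H\neq\emptyset$). Conversely, let $H$ be a witness with homomorphism $\phi$. If $D_H^{k+c}=\emptyset$ then $|V(H)|\le c'$, so $h\le c'$ and the first part detects an isomorphic copy of $H$. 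Otherwise the structural facts of the first paragraph supply a tuple — $D$ and $\phi_P$ from \cref{lem:LBHMapsHighDegreeToHighDeptree}, $D_H$ the isomorphism type of $\hat D=H[D_H^{k+c}]$, and $U$ the set of types occurring in $H$ (a subset of $\TTT_{D_H}^{c'}$) — for which $H$ itself satisfies $\TD$, the tuple survives the discarding test (since $H$ is connected), and $\phi$ augments $\phi_P$; hence \cref{the:decPart} succeeds on it. For the running time: the first part makes boundedly many checks, each fpt in $k+c$; the number of tuples in the second part is bounded by a function of $k+c$; computing $D_G^{k+c}$, the sets $\TTT_{D_H}^{c'}$ and the homomorphisms $\phi_P$, and performing the discarding tests all take fpt-time in $k+c$; and each invocation of \cref{the:decPart} runs in fpt-time parameterized by $k+c+c'$, which is a function of $k+c$ since $c'=kc(k+c)$. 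Hence the whole algorithm is fpt parameterized by $k+c$.

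I expect the main obstacle to be packaging the two requirements ``$H$ is connected'' and ``$|V(H)|=h$'' into the linear constraint system $\CH$ that a target description permits. The cardinality requirement is a single linear equation and is unproblematic. Connectivity is not linearly expressible; however, every graph satisfying a target description $(D_H,c',\CH)$ is an extension of the bounded-size graph $D_H$ assembled only from the boundedly many simple-extension types in $\TTT_{D_H}^{c'}$, and increasing the multiplicity of an already-present type can never disconnect the resulting graph provided $D_H\neq\emptyset$. Thus, for non-empty $D_H$, connectivity is determined by $D_H$ and the mere \emph{set} of occurring types, so one can afford to branch over this set and impose it with $x_T\ge1$ and $x_T=0$ constraints; the case $D_H=\emptyset$, which forces $|V(H)|\le c'$, is peeled off and handled by brute force. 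The only secondary complication is the reduction from general to connected guest graphs mentioned at the outset.
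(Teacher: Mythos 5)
Your proposal is correct and follows essentially the same route as the paper's proof: enumerate $D\subseteq D_G^{k+c}$, a bounded-size host core $D_H$, a surjective partial homomorphism $\phi_P$, and a set of simple-extension types that forces connectivity, encode $|V(H)|=h$ as a single linear equation in the target description, and invoke Theorem~\ref{the:decPart}. The only differences are cosmetic: the paper branches over \emph{minimal} connected type-sets with $x_T\ge 1$ constraints only, whereas you branch over the exact occurring type-set (adding $x_T=0$ constraints) and peel off the $D_H^{k+c}=\emptyset$ case by brute force — if anything, your treatment of that edge case is slightly more careful than the paper's.
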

\iflong
\begin{proof}
Let $G$ be a non-empty connected graph such that $G$ has a $c$-deletion set of size at most $k$ and let $h\geq 1$ be an integer. 

In order to use \cref{the:decPart} in this case, we need to ensure that the target descriptions used enforce that $H$ is connected and has $h$ vertices. 
Therefore for a fixed graph $D$ on at most $k$ vertices, we let $\CON_{D}$ be the set of all minimal sets $S\subseteq \TTT_{D}^{k+c}$ such that any extension $H$ of $D$, which contains exactly the types in $S$ is connected. Since $|\TTT_{D}^{k+c}|$ is bounded by
$(2k+c)2^{{2k+c \choose 2}}$,
 we can compute $\CON_D$ by considering every $S\subseteq \TTT_D^{k+c}$ and checking whether an extension $T\in \TTT_D$ of $D$ containing precisely the types in $S$ is connected. Since $|V(T)|\leq k+(k+c)\cdot |S|$ and checking connectivity takes linear time (using BFS or DFS) we can compute $\CON_D$ in time depending only on $k$ and $c$. For $S\in \CON_D$, we set $\CH_S$ to be the set of equations containing 
   $x_T\geq 1$ for every $T\in S$ and 
    $|V(D_H)|+\sum_{T\in \TTT_{D_H}^c} (|V(T)|-|V(D_H)|)*x_T=h$. Note that for $D$ and $S\in \CON_D$, any graph $H$ satisfying the target description $(D,c+k,\CH_S)$ is connected and has $h$ vertices.

  If there is a connected graph $H$ on $h$ vertices and a locally surjective homomorphism $\phi$ from $G$ to $H$, then by Lemma~\ref{lem:LBHMapsHighDegreeToHighDeptree} 
  there is a set $D \subseteq D_G^{k+c}$ and a locally surjective 
  homomorphism $\phi_P$ from $D_G=G[D]$ to $D_H=H[D_H^{k+c}]$
  such that $\phi$ augments $\phi_P$. Note that by Lemmas~\ref{lem:delSet} and~\ref{lem:highDegreeDeletionSet}, $D_H$ is a $(k+c)$-deletion set of size at most $k$. This implies firstly that $D_H$ is a graph on at most $k$ vertices. Secondly,  $H$ is an extension of $D_H$ such that $\ccmap_H(T)=0$ for $T\notin T_{D_H}^{c+k}$ and, since $H$ is also connected and has $h$ vertices, $H$ satisfies the target description $(D_H,c+k,\CH_S)$ for at least one $S\in \CON_{D_H}$.

  Therefore, we can solve the {\sc Role Assignment} problem 
  by checking for every $D\subseteq D_G^{k+c}$, every graph $D_H$ on no more than $k$ vertices, every $S\in \CON_{D_H}$ and every
  locally surjective homomorphism $\phi_P$ from $D_G=G[D]$ to $D_H$,
  whether there is a graph $H$ satisfying the target description $(D_H,k+c,\CH_S)$ and a  locally surjective homomorphism from $G$ to $H$
  that augments $\phi_P$. 
  Note that there are at most $2^k$ subsets
  $D$. Furthermore, there are at most
$k2^{k \choose 2}$
graphs on at most $k$ vertices and for each we can compute $\CON_D$ in time depending only on $k$ and $c$.
For each such graph $D_H$, there are at most
$|\CON_D|\leq 2^{(2k+c)2^{{2k+c \choose 2}}}$
subsets $S$ to consider. Lastly, because of Lemma~\ref{lem:runTimePartialHom}, for every $D\subseteq D_G^{k+c}$, and any graph $D_H$ on no more than $k$ vertices, we can compute the set
  of locally surjective homomorphisms $\phi_P$ from $G[D]$ to $D_H$ in time  $\bigoh(k^{k+2})$ time and there are at most $|D|^{|D|}$ $\phi_P$ to consider. 
  
  By
  Lemma~\ref{lem:highDegreeDeletionSet}, $D$ is a $(k-|D|,c)$-extended
  deletion set of size at most $k$ for $G$. Therefore,
  for  every $D\subseteq D_G^{k+c}$, every graph $D_H$ on no more than $k$ vertices, every $S\in \CON_{D_H}$ and every
  locally surjective homomorphism $\phi_P$ from $D_G=G[D]$ to $D_H$,
  we can employ
  Theorem~\ref{the:decPart} to decide in fpt-time parameterized by
  $k+c$, whether there is a graph $H$ satisfying $(D_H,c+k,\CH_S)$ and a locally
  surjective  homomorphism from $G$ to $H$
  that augments $\phi_P$. This completes the proof. 
\end{proof}
\fi

\section{Locally Injective Homomorphisms}\label{s-injective}

\iflong
The following result is well known. We include a proof for completeness.

\begin{theorem}[Folklore]\label{the:IH-wh}
  \LIHOM{} is \W{1}-hard parameterized by $|V(G)|$. In particular, it is \W{1}-hard for all structural parameters of $G$.
\end{theorem}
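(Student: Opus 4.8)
The plan is to reduce from {\sc Clique} parameterized by the target clique size~$\ell$, which is \W{1}-hard~\cite{DF95}. Given an instance $(H,\ell)$ of {\sc Clique} with $H$ a simple graph, I would map it to the instance $(K_\ell,H)$ of \LIHOM{}, where $K_\ell$ is the complete graph on $\ell$ vertices. This can be computed in polynomial time and the new parameter $|V(K_\ell)|=\ell$ equals the old one, so once correctness is established this is an \FPT-reduction, giving \W{1}-hardness parameterized by $|V(G)|$.

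Correctness amounts to showing $K_\ell\parc H$ if and only if $H$ has a clique on $\ell$ vertices. The ``if'' direction is easy: mapping $V(K_\ell)$ bijectively onto an $\ell$-clique of $H$ is a homomorphism, and it is locally injective because its restriction to each neighbourhood is a restriction of a bijection. The ``only if'' direction is the part needing care. Let $\phi$ be a locally injective homomorphism from $K_\ell$ to $H$. The key claim is that $\phi$ is injective on $V(K_\ell)$: for $\ell\ge 3$, any two distinct vertices $i,j$ of $K_\ell$ both lie in $N_{K_\ell}(m)$ for a third vertex~$m$, so $\phi(i)\ne\phi(j)$ by local injectivity at~$m$; for $\ell=2$ injectivity follows since $\phi(1)\phi(2)\in E(H)$ and $H$ has no self-loops; and $\ell\le 1$ is trivial. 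Being an injective homomorphism, $\phi$ then maps $V(K_\ell)$ to $\ell$ distinct, pairwise-adjacent vertices of~$H$, i.e.\ to an $\ell$-clique.

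For the ``in particular'' part, I would note that the guest graph produced by this reduction is always $K_\ell$, a single graph depending only on~$\ell$. Hence for every graph parameter~$p$, the value $p(K_\ell)$ is a computable function of~$\ell$, so the identical reduction is also an \FPT-reduction when \LIHOM{} is parameterized by $p$ of the guest graph; thus \LIHOM{} is \W{1}-hard parameterized by any structural parameter of~$G$. I do not anticipate a real obstacle: the only slightly delicate point is the ``only if'' direction above, where one must check the small cases $\ell\in\{0,1,2\}$ and use that local injectivity together with the simplicity of~$H$ upgrades $\phi$ to a genuinely injective map.
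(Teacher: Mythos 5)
Your proposal is correct and takes essentially the same approach as the paper: both reduce from {\sc Clique} by taking the guest graph to be $K_\ell$ and the host graph to be the {\sc Clique} instance. You merely spell out the detail (via the common-neighbour argument) that local injectivity on a complete guest graph forces global injectivity, which the paper asserts in one line.
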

\begin{proof}
Let $G$ be a complete graph on $k$ vertices, and let $H$ be an arbitrary graph. There exists a locally injective homomorphism $\phi$ from $G$ to $H$ if and only if $H$ contains a clique $K$ on $k$ vertices.
Indeed, for the forward direction, pick $K$ to be the image of $V(G)$ under $\phi$.
Then $|K|=|V(G)|=k$ by the local injectivity of $\phi$, and $K$ is a clique.
For the reverse direction, let $\phi$ be any bijection between $V(G)$ and $K$.
The result follows from the fact that \textsc{Clique} is \W{1}-hard.
\end{proof}
\fi

The locally injective case is more difficult in our setting since, in general, surjectivity helps to transfer structural parameters on $G$ to similar structures on $H$ (for example, in \LSHOM{} and \LBHOM{} the image of a deletion set is also a deletion set by \Cref{lem:delSet}). In \LIHOM{} however, and even in the restricted case of graphs with bounded vertex cover number,  no such property can be used to help find the image of a vertex cover, and exponential-time enumerations appear to be necessary. On the positive side, once such a partial mapping from a vertex cover of $G$ to $H$ has been found, our ILP framework can still be applied to map the remaining vertices in \FPT-time. This leads to an \XP algorithm for vertex cover number (\Cref{the:IHvc}). Interestingly, this result does not extend to $c$-deletion set number for $c>1$: even if the mapping of the deletion set can be guessed, the fact that the non-trivial remaining components must be mapped to distinct subgraphs of $H$ makes the problem difficult (see \Cref{the:IHdic}).\iflong \begin{theorem} \fi \ifshort \begin{theorem}[$\star$] \fi\label{the:IHvc}
  \LIHOM{} is in \XP{} parameterized by the vertex cover number of $G$.
\end{theorem}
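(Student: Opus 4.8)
The plan is to combine an exhaustive guess of the image of a minimum vertex cover of $G$ with an integer linear program, in the spirit of the framework of Section~\ref{s-algo}, that decides how the remaining (independent) vertices are mapped. The guess contributes an $|V(H)|^{\bigoh(k)}$ factor, which is exactly what places the problem in \XP, while the ILP will be solved in \FPT\ time via Proposition~\ref{prop:Lenstra}. So unlike in the locally surjective/bijective case, we do not get fixed-parameter tractability, only membership in \XP.

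Concretely, first compute a minimum vertex cover $D_G$ of $G$, say $|D_G|=k=\vcn(G)$ (this is \FPT\ in $k$), so that $I:=V(G)\setminus D_G$ is an independent set with $N_G(v)\subseteq D_G$ for every $v\in I$. Then enumerate all of the at most $|V(H)|^k$ maps $\phi_P\colon D_G\to V(H)$, discarding any that is not a \emph{valid partial locally injective homomorphism}, meaning: $\phi_P$ is a homomorphism from $G[D_G]$ to $H$, $\phi_P$ is injective on $N_G(d)\cap D_G$ for every $d\in D_G$, and $\phi_P$ is injective on $N_G(v)$ for every $v\in I$ (this last condition is precisely local injectivity at $v$, which is already forced by $\phi_P$ since $N_G(v)\subseteq D_G$). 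For each surviving $\phi_P$ it then remains to decide whether it extends to a locally injective homomorphism by choosing an image $\phi(v)\in V(H)$ for each $v\in I$.

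For the extension step, observe that in any locally injective homomorphism $\phi$ augmenting $\phi_P$, the image $\phi(v)$ of $v\in I$ must lie in
\[ L(v)\;=\;\Bigl(\textstyle\bigcap_{d\in N_G(v)}N_H(\phi_P(d))\Bigr)\ \setminus\ \Bigl(\textstyle\bigcup_{d\in N_G(v)}\phi_P\bigl(N_G(d)\cap D_G\bigr)\Bigr), \]
and conversely any assignment with $\phi(v)\in L(v)$ for all $v$ yields a locally injective homomorphism if and only if, for every $d\in D_G$, the vertices of $N_G(d)\cap I$ receive pairwise distinct images. The key point is that $L(v)$ depends only on $\phi_P$ and on the \emph{type} of $v$, namely the set $N_G(v)\subseteq D_G$; there are at most $2^k$ types, write $L_T$ for the common list of vertices of type $T$ and $n_T$ for the number of such vertices, and note that two vertices of distinct types are forced to distinct images exactly when their type sets intersect. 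I would group the vertices of $H$ by their \emph{profile} $\{\,T:a\in L_T\,\}$, of which there are at most $2^{2^k}$ (with counts $p_P$ computable in polynomial time), and introduce a nonnegative integer variable $x_{P,Q}$ for every profile $P$ and every subset $Q\subseteq P$ of types whose type sets are pairwise disjoint, standing for the number of colours of profile $P$ that are used by exactly one vertex of each type in $Q$. The constraints $\sum_{Q}x_{P,Q}\le p_P$ for every profile $P$ together with $\sum_{P,Q\,:\,T\in Q}x_{P,Q}\ge n_T$ for every type $T$ are then, for a valid $\phi_P$, equivalent to the existence of the desired extension: pairwise disjointness of the type sets inside an admissible $Q$ is exactly what prevents two $I$-vertices sharing a $D_G$-neighbour from colliding, and recombining colours to vertices within each type is unconstrained. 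Since the number of variables is bounded by a function of $k$ only, Proposition~\ref{prop:Lenstra} solves this ILP in \FPT\ time.

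The main obstacle is this extension step: in the locally surjective/bijective setting one exploits Lemma~\ref{lem:delSet} to transfer structure from $G$ to $H$, whereas here $H$ may be arbitrarily larger than $G$ with no such control, so one cannot afford to reason colour by colour. What makes the ILP work is that the only information about a colour of $H$ that matters is its profile over the $\le 2^k$ types, and that all the distinctness requirements at vertices of $D_G$ collapse into the single $k$-bounded family of admissible sets $Q$. I would also handle the degenerate cases separately — vertices isolated in $G$ (which can be mapped anywhere, assuming $H$ is nonempty) and types with empty list (which immediately certify that the current $\phi_P$ does not extend); these are routine. Putting the pieces together, the total running time is $|V(H)|^{\bigoh(k)}$ times a factor that is \FPT\ in $k$, giving the claimed \XP\ bound.
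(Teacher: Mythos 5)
Your proposal is correct and follows essentially the same strategy as the paper's proof: enumerate the $|V(H)|^{\bigoh(k)}$ possible images of the vertex cover (this is the \XP{} factor), abstract the remaining independent vertices by their neighbourhood type in $D_G$ and the host vertices by which types they can serve, and decide the extension via an ILP whose number of variables is bounded in $k$ alone, solved by Proposition~\ref{prop:Lenstra}. The only notable difference is organisational: the paper inserts a second (still \FPT-bounded) guessing stage for the pre-images of the vertices in $\phi_P(D_G)$ so as to work with ``stable'' solutions before setting up its ILP over candidate pre-images, whereas you fold that case directly into the lists $L(v)$ and the colour profiles, which is a mild simplification of the same route rather than a different argument.
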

\iflong
\begin{proof}
As for the surjective and bijective cases, we employ a two-step algorithm that first guesses the image of the vertex cover through a partial homomorphism, then runs an ILP to map the remaining vertices. The ILP only requires \FPT-time, however the first step needs an exhaustive enumeration of subsets of $H$ (in the injective case, the image of a vertex cover does not have to be a vertex cover), hence the \XP{} running time.

We use the definitions of \emph{types} and \emph{extensions} from Section~\ref{ssec:ILPform}. 
Let $G$ be a connected graph with a vertex cover $D_G$ of size $k$.  Note that connected components of $G\setminus D_G$ are single vertices.
We can thus define the type of a vertex $v\in G\setminus D_G$ to be the type of the component $\{v\}$. Note that there are at most $n_t = |\TTT_{D_G}(G)|\leq 2^{|D_G|}$ types in $G$.

The first step of the algorithm consists of guessing a partial homomorphism $\phi_P$ between $D_G$ and $H$ (there are $|V(G)|^{|D_G|}$ such homomorphisms). From now on, we look for locally injective homomorphisms $\phi$ from $G$ to $H$ such that $\phi(v)\in D_H\Rightarrow v\in D_G$ (such a function $\phi$ is more simply called a \emph{solution}). 
We write $D_H=\phi_P(D_G)$. $\phi$ is \emph{stable} if if it is an augmentation of $\phi_P$ (i.e. if  $\phi(v)\in D_H\Leftrightarrow v\in D_G$).

Let $\phi$ be a  solution. For a vertex $h\in V(H)$, let $P_h=\phi^{-1}(h)$ be the pre-image of $h$. Sets $P_h$ form a partition of $V(G)$, two distinct vertices of $P_h$ may not share a neighbour in $G$ (otherwise, $\phi$ would not be locally injective for this common neighbour). Thus, $P_h\setminus D_G$ has size at most $|D_G|$ (since by connectivity of $G$, each vertex in $G\setminus D_G$ has at least one neighbour in $D_G$), and may not contain two vertices with the same type. In particular there are at most $n_t^{|D_G|}$ distinct pre-images (up to equivalence).

Guess the pre-image of each $h\in D_H$ (for a total of $n_t^{|D_G|^2}$ branches), and let $D'_G=\bigcup_{h\in D_H} P_h$. Then $D'_G$ is a vertex cover of $G$ with size at most $|D_G|^2$. Define $\phi'_P:D'_G\rightarrow D_H$ such that $\phi'_P(v) = h$ whenever $v\in P_h$. Thus $\phi$ is a locally injective homomorphism extending $\phi_P'$ and $\phi(v)\in D_H\Leftrightarrow v\in D_G'$. Without loss of generality, we thus assume that we look for a stable solution $\phi$ (equivalently, we can set $D_G:=D_G'$ and $\phi_P:=\phi_P'$). Also note that all edges in $H\setminus D_H$ can be safely ignored (since no solution $\phi$ would map an edge $\{u,v\}$ of $G$ to such an edge of $H$, and there is no surjectivity constraint), so that $D_H$ is a vertex cover of $H$.   Finally, we can assume that $\phi_P$ is locally injective (on its domain $D_G$), and that two vertices $u,u'$ with $\phi_P(u)=\phi_P(u')$ do not share a neighbour in $G$, since otherwise no stable solution exist.

A (possibly empty) subset $P$ of $V(G\setminus D_G)$ is a \emph{candidate pre-image} of $h\in V(H\setminus D_H)$ if the following conditions hold:
\begin{enumerate}
\item \label{cond:candidate_homom} $\phi_P(N_G(P)) \subseteq N_H(h)\cap D_H$,
\item \label{cond:common_neighbour} any two vertices in $P$  do not share a neighbour,
\item \label{cond:type-count} $P$ contains at most one vertex from every type in $G$ and
\item \label{cond:size} $P$ has size at most $|D_G|$.
\end{enumerate}
By the remarks above, given a stable solution $\phi$, $P_h=\phi^{-1}(h)$ is a candidate pre-image of $h$. Conversely, building $\phi$ using candidate pre-images only leads to a stable solution, as formalised below.

\begin{claim}
 If $\phi:G\rightarrow H$ satisfies $\phi(v)=\phi_P(v)$ for $v\in D_G$ and $\phi^{-1}(h)$ is a candidate pre-image of $h$ for each $h\in V(H\setminus D_H)$, then $\phi$ is a stable solution.
\end{claim}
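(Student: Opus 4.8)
The plan is to verify directly the three defining properties of a stable solution: that $\phi$ is a homomorphism from $G$ to $H$, that it is locally injective, and that it augments $\phi_P$ (equivalently, $\phi^{-1}(D_H)=D_G$). Throughout I would use only Conditions~\ref{cond:candidate_homom} and~\ref{cond:common_neighbour} of the definition of a candidate pre-image, together with the standing assumptions recorded just above the claim, namely that $\phi_P$ is locally injective on $D_G$ and that no two distinct $a,b\in D_G$ with $\phi_P(a)=\phi_P(b)$ have a common neighbour in $G$; Conditions~\ref{cond:type-count} and~\ref{cond:size} play no role here (they are only needed to bound the candidate pre-images for the ILP). The augmentation property is essentially immediate: $\phi$ extends $\phi_P$, which maps $D_G$ into $D_H$, and $\phi$ is built on $V(G\setminus D_G)$ by the prescribed candidate pre-images, each of which is by definition a subset of $V(G\setminus D_G)$ assigned to a vertex of $V(H\setminus D_H)$; hence $\phi(V(G\setminus D_G))\subseteq V(H\setminus D_H)$ and $\phi^{-1}(D_H)=D_G$.

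Next I would check that $\phi$ is a homomorphism. Let $xy\in E(G)$. Since $D_G$ is a vertex cover of $G$, at least one endpoint lies in $D_G$, say $x\in D_G$. If $y\in D_G$ as well, then $\phi(x)\phi(y)=\phi_P(x)\phi_P(y)\in E(H)$ because $\phi_P$ is a homomorphism. If $y\notin D_G$, set $h=\phi(y)\in V(H\setminus D_H)$; then $x\in N_G(y)\subseteq N_G(\phi^{-1}(h))$, and Condition~\ref{cond:candidate_homom} applied to the candidate pre-image $\phi^{-1}(h)$ gives $\phi(x)=\phi_P(x)\in\phi_P(N_G(\phi^{-1}(h)))\subseteq N_H(h)$, so $\phi(x)\phi(y)\in E(H)$.

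The only part with any real content is local injectivity: for every $u\in V(G)$ and every pair of distinct neighbours $a,b\in N_G(u)$ one needs $\phi(a)\ne\phi(b)$, and I would argue by cases on which of $u,a,b$ lie in $D_G$. If $u\notin D_G$ then $a,b\in D_G$ (vertex cover), and since $a,b$ have the common neighbour $u$, the standing assumption on $\phi_P$ forbids $\phi_P(a)=\phi_P(b)$. If $u\in D_G$ and $a,b\in D_G$, local injectivity of $\phi_P$ at $u$ gives $\phi_P(a)\ne\phi_P(b)$. If $u\in D_G$ and exactly one of $a,b$, say $a$, lies in $D_G$, then $\phi(a)=\phi_P(a)\in D_H$ while $\phi(b)\in V(H\setminus D_H)$. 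Finally, if $u\in D_G$ and $a,b\notin D_G$ and, for contradiction, $\phi(a)=\phi(b)=:h$, then $a$ and $b$ are two vertices of the candidate pre-image $\phi^{-1}(h)$ that share the neighbour $u$, contradicting Condition~\ref{cond:common_neighbour}. In every case $\phi(a)\ne\phi(b)$, so $\phi$ is locally injective, and the proof is complete. I do not expect a genuine obstacle: the argument is bookkeeping over the position of $u,a,b$ relative to $D_G$, and the only points where the candidate-pre-image conditions (rather than the properties of $\phi_P$ and the vertex-cover structure) are genuinely invoked are the last subcase of local injectivity, where Condition~\ref{cond:common_neighbour} is exactly what is needed, and the cover-boundary edge in the homomorphism check, where Condition~\ref{cond:candidate_homom} is precisely the required statement.
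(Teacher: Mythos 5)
Your proof is correct and follows essentially the same route as the paper's: the homomorphism check splits edges by the vertex cover and invokes Condition~\ref{cond:candidate_homom} for the cover-boundary case, and local injectivity is a case analysis on membership in $D_G$ that uses the standing assumptions on $\phi_P$ for the in-cover cases and Condition~\ref{cond:common_neighbour} for the case of two neighbours outside $D_G$. The only (harmless) difference is that you also spell out the augmentation property $\phi^{-1}(D_H)=D_G$, which the paper leaves implicit.
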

\begin{claimproof}
First note that $\phi$ is a homomorphism, i.e. $(\phi(u),\phi(v))$ is an edge in $H$ for each edge $(u,v)$ in  $G$ (since $\phi_P$ is a homomorphism for $u,v\in D_G$,   by Condition~\ref{cond:candidate_homom} for $u\in D_G$ and $v\notin D_G$, and the case $u,v\notin D_G$ is impossible since $D_G$ is a vertex cover). 
Pick $v\in V(G)$ and $u,u'\in N_G(v)$. We prove that $\phi(u)\neq \phi(u')$.
If $u,u'\in D_G$, then $u,u'$ share a neighbour and $\phi_P(u)\neq\phi_P(u')$.
If $u\in D_G$ and $u'\notin D_G$, then $\phi(u)\in D_H$ and $\phi(u')\notin D_H$.
If $u,u'\notin D_G$, then $u,u'$ share a neighbour and they cannot be in the same candidate pre-image (by Condition~\ref{cond:common_neighbour}).
\end{claimproof}

Also note that if $D_G\cup P \sim_{D_G} D_G\cup P'$ and $D_H\cup \{h\} \sim_{D_H} D_H\cup\{h'\}$, then $P$ is a candidate pre-image of $h$ if and only if $P'$ is a candidate pre-image of $h'$. Let $\ICMP$ be the set of pairs $(\EXT_G,T_H), \EXT_G\in \EEE_{D_G}(G), T_H\in \TTT_H$ such that $\EXT_G$ contains an extension $D_G\cup P$, $T_H$ contains an extension $D_H\cup \{h\}$, and $P$ is a candidate pre-image of $h$ (note that there is no minimality constraint for pairs in $\ICMP$). 

We now build the ILP computing the pre-images of vertices in $H\setminus D_H$. Introduce a variable $x_{\EXT_G,T_H}$ for each pair $(\EXT_G,T_H)\in\ICMP$. This variable represents the number of vertices $h$ with type $T_h$ whose pre-image $P_h$ has $P_h\cup D_G \in \EXT_G$. We introduce two types of constraints (see below). Constraint (I1) enforces that the pre-images $P_h$ form a partition of $V(G\setminus D_G)$ by counting vertices of each type in each $\EXT_G$ and checking that the sum corresponds to the count in $G$. Constraint (I2) corresponds to the fact that each vertex in $H$ needs to be assigned a (possibly empty) pre-image (the number of pairs involving a type $T_H$ must correspond to the type-count of  $T_H$ in $H$).
\begin{description}
	\item[(I1)] $\sum_{(\EXT_G,T_H) \in
	\ICMP}\ccmap_{\EXT_G}(T_G)*x_{\EXT_G, T_H}  =  \ccmap_G(T_G)$ for every
$T_G \in \TTT_{D_G}(G)$,
	\item[(I2)] $\sum_{\EXT_G : (\EXT_G,T_H)\in \ICMP}x_{\EXT_G,T_H} = \ccmap_H(T_H)$ for every
	$T_H \in \TTT_{D_H}$.	
\end{description}

From the above remarks, a stable solution $\phi$ yields a feasible solution for constraints (I1,I2). Conversely, a solution to the ILP gives integers $x_{\EXT_G,T_H}$: for each pair $(\EXT_G,T_H)$, pick $x=x_{\EXT_G,T_H}$ new vertices $h_1,\ldots,h_x$ with type $T_H$, pick  $x$ sets $P_1,\ldots,P_x$ in $G\setminus D_G$ such that $P_i\cup D_G \in \EXT_G$ (each $P_i$ being disjoint from previously selected sets). Assign $\phi(v)=h_i$ for each $v\in P_i$, and $\phi(v)=\phi_P(v)$ for $v\in D_G$. Then, over all pairs $(\EXT_G,T_H)$, the sets $P_i$ form a partition of $V(G\setminus D_G)$ so $\phi$ is well defined, and each $\phi^{-1}(h)$ is a candidate pre-image of $H$. By the claim above, $\phi$ is indeed a locally injective homomorphism from $G$ to~$H$.
\end{proof}

\begin{corollary}\label{cor:lihom-a-1-p}
 For any constant $k$, \LIHOM{} is polynomial-time solvable for graphs $G$ with $1$-deletion set number at most~$k$.
\end{corollary}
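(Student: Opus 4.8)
The plan is to obtain \Cref{cor:lihom-a-1-p} as an immediate specialisation of \Cref{the:IHvc}. First I would recall that the $1$-deletion set number of a graph $G$ equals its vertex cover number: a set $S\subseteq V(G)$ is a $1$-deletion set if and only if every component of $G\setminus S$ is a single vertex, i.e. if and only if $G\setminus S$ has no edges, i.e. if and only if $S$ is a vertex cover. Hence the graphs $G$ with $1$-deletion set number at most $k$ are exactly those with $\vcn(G)\le k$, so \Cref{the:IHvc} applies to them.

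Next I would unfold what membership in \XP{} parameterized by $\vcn(G)$ means: there is an algorithm deciding \LIHOM{} in time $\bigoh(|I|^{f(\vcn(G))})$ for some computable function $f$, where $|I|$ is the size of the instance $(G,H)$. Fixing the constant $k$ and restricting to inputs with $\vcn(G)\le k$, this running time becomes $\bigoh(|I|^{f(k)})$, which is polynomial in $|I|$ since $f(k)$ is a constant. A minor point to dispatch along the way is that the algorithm underlying \Cref{the:IHvc} starts from a vertex cover of $G$ of size at most $k$; for fixed $k$ such a vertex cover can be found in polynomial time, for instance by the standard $\bigoh(2^k(|V(G)|+|E(G)|))$-time branching algorithm for \textsc{Vertex Cover}, or simply by testing all $\bigoh(|V(G)|^k)$ vertex subsets of size at most $k$.

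The only genuine wrinkle is that \Cref{the:IHvc} and its proof are phrased for connected guest graphs, whereas \LIHOM{} instances need not be connected. I would handle this by noting that local injectivity is a per-vertex condition referring only to a vertex's own neighbourhood, so a map $\phi\colon V(G)\to V(H)$ is a locally injective homomorphism if and only if its restriction to each connected component of $G$ is one; consequently $G\parc H$ holds if and only if $C\parc H$ holds for every connected component $C$ of $G$. Each such component $C$ still has vertex cover number at most $k$, so running the (polynomial-time, for fixed $k$) algorithm of \Cref{the:IHvc} on each of the at most $|V(G)|$ components of $G$ against $H$, and answering \emph{yes} precisely when all of them succeed, yields the claimed polynomial-time algorithm. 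I do not expect any real obstacle here: beyond the componentwise bookkeeping, the corollary is just the restriction of the \XP{} statement to a fixed parameter value.
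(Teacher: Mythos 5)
Your proposal is correct and matches the paper's (implicit) derivation: the corollary is stated without a separate proof precisely because a $1$-deletion set is the same thing as a vertex cover, so Theorem~\ref{the:IHvc} specialised to a fixed parameter value immediately gives a polynomial-time algorithm. Your extra remarks about computing the vertex cover and reducing to connected components are sensible housekeeping but do not change the argument.
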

\fi

We actually obtain the following dichotomy for the complexity of \LIHOM{}, where the $c=1$, $k\geq 1$ case is
already given by
\iflong
Corollary~\ref{cor:lihom-a-1-p}.
\else
Theorem~\ref{the:IHvc}.
\fi
\iflong \begin{theorem} \fi \ifshort \begin{theorem}[$\star$] \fi\label{the:IHdic}
  Let $c,k \geq 1$. Then
  \LIHOM{} is polynomial-time solvable on guest graphs with a $c$-deletion set of size at most $k$ if
  either $c= 1$ and $k\geq 1$ or $c=2$ and $k=1$; otherwise, it is \NP{}\hy complete.
\end{theorem}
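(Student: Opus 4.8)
The plan is to split into the tractable side and the hardness side. For the tractable side, the case $c=1$ (vertex cover, any $k$) is already handled by Theorem~\ref{the:IHvc}, so it remains to deal with $c=2$, $k=1$. Here a guest graph $G$ with a $1$-vertex $2$-deletion set is, after removing one vertex $v$, a disjoint union of edges and isolated vertices. So $G$ is essentially a "friendship-like" graph: a central vertex $v$ together with some pendant edges, pendant triangles (when both endpoints of an edge of $G\setminus\{v\}$ are adjacent to $v$), and pendant vertices, plus possibly some isolated components. I would enumerate the (polynomially many, indeed $O(|V(H)|)$) choices for the image $\phi(v)=h$ and the (at most $\deg_H(h)$) candidate images of the at most $\deg_G(v)$ neighbours of $v$; because $G\setminus\{v\}$ has only components of size $\le 2$, the remaining mapping problem decomposes into an assignment/matching problem (each pendant edge or triangle of $G$ must be mapped injectively around $h$, with the constraint that no two neighbours of $v$ collide), which can be expressed as a bipartite matching or a flow, hence solved in polynomial time. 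The isolated components of $G$ not attached to $v$ (if $G$ is connected this does not arise, but for completeness one argues as usual that $G$ connected is the only interesting case, or handles each component separately) impose no extra coupling. This yields a polynomial algorithm.

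For the hardness side, I need \NP-completeness whenever $c\ge 2$ and $k\ge 2$, or $c\ge 3$ and $k\ge 1$. Membership in \NP{} is immediate. For \NP-hardness I would aim for a single reduction, parametrised so it lands in guest graphs with a $c$-deletion set of size $k$ for the smallest relevant pairs $(c,k)\in\{(3,1),(2,2)\}$, and then pad (add isolated vertices / enlarge components / add dummy deletion-set vertices) to cover all larger $c$ and $k$. A natural source problem is a variant of exact/perfect matching-type or packing problems, or better, the known \NP-hardness of \LIHOM{} on restricted structures; but the cleanest route is probably to reduce from a problem like \textsc{3-Dimensional Matching} or \textsc{Exact Cover by 3-Sets}, encoding set-elements as the components of $G\setminus D_G$ and using the deletion-set vertices of $G$ (together with a carefully designed host graph $H$ whose deletion set has the same size) to force a global selection. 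The key gadget must exploit exactly the feature highlighted in the paragraph preceding the theorem: when $c\ge 2$ (so $G\setminus D_G$ has non-trivial components) the components must be mapped to \emph{distinct} subgraphs of $H$, which is precisely what makes packing-type constraints expressible. Concretely, for $(c,k)=(3,1)$ one deletion-set vertex in $G$ is enough to "touch" many triangle-components, each of which must be embedded into its own triangle in $H$, and one can arrange $H$ so that the available triangles correspond to the sets of an \textsc{X3C} instance. For $(c,k)=(2,2)$ one uses two deletion-set vertices and edge-components.

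The main obstacle I expect is the hardness construction: I must simultaneously (i) keep the guest graph's $c$-deletion set of size exactly $k$ (which is very tight: for $k=1$ there is only one deletion vertex, so all the combinatorial work must be done through that vertex's neighbourhood and through the distinctness-of-images constraint), (ii) control the structure of $H$ so that a locally injective homomorphism exists iff the source instance is a yes-instance, and (iii) make sure the local injectivity conditions at the deletion-set vertices of $H$ (and at the components' vertices) encode the covering/packing constraint rather than something weaker. Once a working reduction is obtained for the two minimal pairs, the extension to all $(c,k)$ with $c\ge 3,k\ge1$ or $c\ge 2,k\ge2$ is routine padding: adding a disjoint new vertex to $D_G$ and a disjoint new matched vertex to $D_H$ increases $k$ by one without changing solvability, and enlarging each nontrivial component by attaching a fixed small rigid gadget (with a unique injective image) increases $c$ without affecting the argument. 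I would also double-check the boundary: $(c,k)=(2,1)$ is tractable and $(1,k)$ is tractable, matching the statement, so the reduction must genuinely need either $c\ge 3$ or $k\ge 2$, which is consistent with needing at least two "degrees of freedom" (either two deletion vertices, or components large enough — size $\ge 3$ — to carry internal choices).
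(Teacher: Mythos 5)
Your overall plan coincides with the paper's: the $c=1$ cases come from Theorem~\ref{the:IHvc}, the $(c,k)=(2,1)$ case is handled by a matching computation around the image of the single deletion vertex, and hardness is established for the two minimal pairs $(3,1)$ and $(2,2)$ by packing-type reductions built around apex vertices, after which all remaining pairs follow (indeed, since a graph with a $3$-deletion set of size $1$ also has a $c$-deletion set of size at most $k$ for every $c\ge 3$ and $k\ge 1$, even less padding is needed than you describe; the paper only pads with universal vertices to pin down the deletion set number exactly). Your $(3,1)$ sketch --- $G$ a disjoint union of triangles plus a universal apex, $H$ a triangle-packing instance plus a universal apex --- is essentially the paper's reduction from $K_3$-\textsc{Partition}: degree counting forces the apex of $G$ onto the apex of $H$, local injectivity there forces a bijection on the remaining vertices, and each triangle component must then land on a distinct triangle of the instance.

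The genuine gap is the $(c,k)=(2,2)$ case, which you leave at ``two deletion-set vertices and edge-components''. With components of size $2$, a single component cannot carry the choice of a $3$-element set, and it is not clear how your X3C-style encoding would go through. The paper's resolution is a reduction from $P_3$-\textsc{Partition} using an identification trick: each element $i$ is represented by \emph{two} edge components $a_ib_i$ and $c_id_i$, attached asymmetrically to the two apexes ($u$ adjacent to $a_i,b_i,d_i$; $v$ adjacent to $a_i,c_i,d_i$; plus the edge $uv$), while $H$ is the instance graph plus two universal vertices. Degree counting sends $u,v$ to the universal vertices, and local injectivity at $u$ and at $v$ forces the set of images of the $b_i$ to coincide with that of the $c_i$, so after renumbering $\phi(b_i)=\phi(c_i)$ and the two disjoint edges are glued into a walk $\phi(a_i)\phi(b_i)\phi(d_i)$ that must be a $P_3$ of the instance, with the pre-images partitioning its vertex set. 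Without this (or an equivalent) gadget the $(2,2)$ case is not established. A smaller caution on the tractable side: in the $(2,1)$ algorithm, a size-$2$ component $\{x,y\}$ of $G\setminus\{v\}$ with only $x$ adjacent to $v$ forces $\phi(y)\neq\phi(v)$ and hence forces $\phi(x)$ to have a neighbour of $H$ other than $w$; your matching/flow formulation must encode this constraint explicitly (the paper's own argument reduces everything to one maximum matching in $H[N_H(w)]$ plus a degree test, counting only edges inside $N_G(v)$, and does not address these pendant components, so this is a point where extra care genuinely pays off).
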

\iflong
Theorem~\ref{the:IHdic} follows from Corollary~\ref{cor:lihom-a-1-p} and the following 
three 
lemmas.
\iflong \begin{lemma} \fi \ifshort \begin{lemma}[$\star$] \fi
  \LIHOM{} is polynomial-time solvable for graphs $G$ with a $2$-deletion set number at most~$1$.
\end{lemma}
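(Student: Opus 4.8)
The plan is to exploit the extremely restricted structure of a graph $G$ admitting a $2$-deletion set $\{v\}$ of size one. Deleting $v$ leaves a disjoint union of isolated vertices and single edges, so every component of $G$ other than the one containing $v$ is a single vertex or a single edge, and the component $G_0$ containing $v$ consists of $v$ together with some \emph{pendant vertices} (leaves attached only to $v$), some \emph{pendant edges} (induced paths $v$–$a$–$b$ with $b$ of degree one), and some \emph{triangles} $v,a,b$. Since a locally injective homomorphism restricts to a locally injective homomorphism on each connected component, I would first reduce to handling the components of $G$ independently: an isolated vertex maps anywhere (provided $V(H)\neq\emptyset$), and a single edge of $G$ can be mapped into $H$ exactly when $H$ has an edge or a self-loop. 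These checks are trivial, so the whole problem reduces to deciding whether $G_0$ admits a locally injective homomorphism to $H$.

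For $G_0$ I would branch on the image $w=\phi(v)$ (there are $|V(H)|$ choices) and let $R=N_H(w)$. Local injectivity at $v$ forces the $\deg_G(v)$ neighbours of $v$ to be mapped to pairwise distinct vertices of $R$, and a short analysis of the remaining vertices shows that a valid extension of $\phi(v)=w$ exists if and only if one can simultaneously realise: each triangle by an edge of the induced subgraph $H[R\setminus\{w\}]$ with these $t$ edges pairwise vertex-disjoint, i.e.\ forming a matching of size $t$ in $H[R\setminus\{w\}]$ (where $t$ is the number of triangles); each of the $q$ pendant-edge neighbours $a$ of $v$ by a distinct vertex of $S=\{x\in R : N_H(x)\setminus\{w\}\neq\emptyset\}$ (this being exactly what is needed so that the corresponding leaf $b$ can be placed, $b$ being otherwise unconstrained); and each of the $p$ pendant vertices by a distinct vertex of $R$; with all images involved pairwise distinct. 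Because $p+q+2t=\deg_G(v)$, the ``global count'' condition is simply $|R|\geq\deg_G(v)$, which does not depend on the chosen matching; the only genuinely combinatorial requirement is the existence of a size-$t$ matching $M$ in $H[R\setminus\{w\}]$ that leaves at least $q$ vertices of $S$ uncovered.

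This last condition is testable in polynomial time: assigning to each edge of $H[R\setminus\{w\}]$ a cost equal to its number of endpoints in $S$, the minimum of $|V(M)\cap S|$ over all matchings $M$ of cardinality exactly $t$ equals the minimum cost of a matching of cardinality $t$, which is a classical polynomial-time solvable problem (for instance via the standard reduction to minimum-weight perfect matching after padding $H[R\setminus\{w\}]$ with $|V(H[R\setminus\{w\}])|-2t$ extra vertices joined by zero-weight edges to every original vertex). Thus, for each $w$, a valid extension exists iff $q\le|S|$, $|R|\ge\deg_G(v)$, and this minimum cost is at most $|S|-q$; the algorithm answers ``yes'' iff the constant-time preconditions on $H$ for the other components hold and some $w\in V(H)$ works. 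Running over all $w$ keeps the total time polynomial, establishing that \LIHOM{} is polynomial-time solvable on guest graphs of $2$-deletion set number at most $1$.

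The step I expect to take up most of the write-up is the exhaustive but routine verification that conditions of the second paragraph are \emph{equivalent} to the existence of a locally injective homomorphism extending $\phi(v)=w$ — in particular, checking that leaves of pendant edges and second endpoints of triangle edges introduce no hidden constraints beyond those listed (any two such vertices fail to share a common neighbour in $G_0$, so they may even receive the same image), and handling cleanly the corner cases created by self-loops of $H$, which can make $w\in R$ but never help a triangle since its two vertices must get distinct images. Once this equivalence is in place, polynomial-time solvability is immediate from the polynomial solvability of cardinality-constrained minimum-weight matching.
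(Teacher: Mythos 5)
Your proof is correct, but it takes a genuinely more refined route than the paper's, and the difference is substantive. The paper also guesses a deletion vertex $v$ and its image $w$, but its characterisation is coarser: it asks only that $d_G(v)\le d_H(w)$ and that $H[N_H(w)]$ contain a matching of size $p$, where $p$ is the number of edges of $G[N_G(v)]$ (your triangles), and its construction then maps every vertex at distance two from $v$ --- the leaves of your pendant edges --- to $w$. Your analysis identifies exactly why that is not enough: for a pendant edge $v,a,b$ the midpoint $a$ has $N_G(a)=\{v,b\}$, so $b$ cannot also receive the colour $w$, forcing $\phi(a)$ into your set $S$ of neighbours of $w$ that have a neighbour other than $w$. (For instance, with $G=P_3$, $v$ an endpoint chosen as the deletion set, and $H=K_2$, the paper's criterion is met although no locally injective homomorphism exists, whereas your condition $q\le|S|$ correctly rejects.) The additional machinery you introduce --- the set $S$, the requirement that some size-$t$ matching in $H[N_H(w)\setminus\{w\}]$ leave at least $q$ vertices of $S$ uncovered, and the polynomial-time test via cardinality-constrained minimum-weight matching --- is precisely what is needed to make the equivalence two-sided, and your verification sketch (leaves of distinct pendant edges share no common neighbour, self-loops at $w$ cannot realise a triangle, the count $p+q+2t=\deg_G(v)$ reduces the remaining bookkeeping to $|N_H(w)|\ge\deg_G(v)$) is sound. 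So your proposal is not merely an alternative proof; it supplies a case analysis that the paper's own argument glosses over.
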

\begin{proof}
Let~$G$ and~$H$ be connected graphs such that~$G$ has $2$-deletion set number at most~$1$.
If~$G$ has a $2$-deletion set containing no vertices, then~$G$ contains at most two vertices, in which case we can solve \LIHOM{} in polynomial time.
Otherwise, we can find a $2$-deletion set~$\{v\}$ in polynomial time by trying all possibilities for~$v$.
Let~$p$ be the number of edges in $G[N_G(v)]$ and let~$w$ be a vertex of~$H$.
We claim that there is a locally injective homomorphism~$\phi$ from~$G$ to~$H$ such that $\phi(v)=w$ if and only if~$H[N_H(w)]$ has a matching on at least~$p$ edges and $d_G(v) \leq d_H(w)$.

Indeed, if such a locally injective homomorphism~$\phi$ exists, then $d_G(v) \leq d_H(w)$ because~$\phi$ is locally injective.
Furthermore, for every edge $xy$ in $G[N_G(v)]$, the homomorphism~$\phi$ maps the vertices~$x$ and~$y$ to adjacent vertices of $H[N_H(w)]$, and since~$\phi$ is locally injective, it cannot map two vertices of~$N_G(v)$ to the same vertex in~$N_H(w)$.
Therefore~$H[N_H(w)]$ must have a matching on at least~$p$ edges.

Now suppose that~$H[N_H(w)]$ has a matching~$M$ on at least~$p$ edges and $d_G(v) \leq d_H(w)$.
For each edge $xy$ in $G[N_G(v)]$, let~$\phi(x)$ and~$\phi(y)$ be the endpoints of an edge in~$M$ (choosing a different edge of~$M$ for each edge~$xy$).
For the remaining vertices $x \in N_G(v)$, assign the remaining vertices of $N_H(w)$ arbitrarily, such that no two vertices of $N_G(v)$ are assigned the same value (this can be done since $d_G(v) \leq d_H(w)$).
Let $\phi(x)=w$ for all remaining vertices of~$G$ (i.e. the vertex~$v$ and all vertices non-adjacent to~$v$ that have a common neighbour with~$v$).
By construction, $\phi$ is a locally injective homomorphism from~$G$ to~$H$.

The size of a maximum matching in a graph can be found in polynomial time~\cite{Ed65}.
Thus, by branching over the possible vertices~$w \in V(H)$, we obtain a polynomial-time algorithm for \LIHOM{}.
\end{proof}

To prove \NP{}-hardness for results in Lemmas~\ref{lem:lihom-2-2-np} and~\ref{lem:lihom-3-1-np} below, we use a reduction from the \textsc{$H'$-Partition} problem when $H'=P_3$
(the $3$-vertex path) or~$K_3$ (the $3$-vertex complete graph), respectively.
Let~$H'$ be a fixed graph on~$h$ vertices.
The \textsc{$H'$-Partition} problem takes as input a graph~$G'$ on $hn$ vertices and the task is to decide whether the vertex set of~$G'$ can be partitioned into sets $V_1,\ldots,V_n$, each of size~$h$, such that~$G'[V_i]$ contains~$H'$ as a subgraph for all $i \in \{1,\ldots,n\}$.
This problem is known to be \NP{}-complete if $H' \in \{K_3,P_3\}$~\cite{GJ79,KP78}.

\begin{lemma}\label{lem:lihom-2-2-np}
  For $c \geq 2$ and $k\geq 2$, \LIHOM{} is \NP{}-hard on graphs $G$ with $c$-deletion set number $k$.
\end{lemma}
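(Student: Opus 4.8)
The plan is to give a polynomial‑time reduction from \textsc{$P_3$-Partition}, which is \NP{}‑complete. Given an instance $G'$ with $|V(G')|=3n$, I will build a host graph $H$ from $G'$ and a guest graph $G$ of constant $c$‑deletion set number such that $G\parc H$ if and only if $G'$ has a $P_3$-partition. In fact $G$ will always be constructible so that $\mathrm{ds}_2(G)=2$, which already places $G$ in the class of graphs with $c$‑deletion set number at most $2\le k$ for every $c\ge 2$ and $k\ge 2$, so a single construction suffices; if the exact value $k$ (or a larger $c$) is wanted, one pads $G$ and $H$ with $k-2$ disjoint copies of $K_{c+1}$, each of which forces one extra vertex into any $c$‑deletion set and maps onto its twin in $H$.

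It is instructive to first record the easy case $c\ge 3$. Let $H$ be $G'$ together with one new vertex $z$ adjacent to all of $V(G')$ (and, after a standard degree‑padding of $H$, the unique vertex of its large degree), and let $G$ consist of a single vertex $v$ (the $1$‑element deletion set) together with $n$ vertex‑disjoint copies of $P_3$, with $v$ adjacent to all $3n$ path vertices. Then $\deg_G(v)$ forces $\phi(v)=z$; since $H$ has no self‑loop at $z$, every path vertex is sent into $N_H(z)=V(G')$; local injectivity at $v$ makes this assignment a bijection onto $V(G')$; and then each guest $P_3$ is sent to three distinct vertices of $G'$ forming a $P_3$ there, so the images of the $n$ guest $P_3$'s are exactly a $P_3$-partition of $G'$. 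The converse is immediate. This construction has $\mathrm{ds}_3(G)=1$, so with the clique padding it already proves the lemma for all $c\ge 3$; but it breaks for $c=2$, since a component of $G\setminus S$ of size at most $2$ cannot contain a $P_3$.

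For $c=2$ the idea is to push the $P_3$-structure into the host and let the guest, which now has two deletion‑set vertices $v_1,v_2$ and only components of size at most $2$, merely ``point'' at the host and count. In $H$ I give each $a_i\in V(G')$ a \emph{center‑copy} $\hat c_i$, a \emph{leaf‑copy} $\hat\ell_i$, and a \emph{port} $P_i$ adjacent to both $\hat c_i$ and $\hat\ell_i$; I put $\hat c_i\sim\hat\ell_{i'}$ precisely when $a_ia_{i'}\in E(G')$; I add a hub $z_1$ adjacent to all ports; and I add ``incidence gadgets'' wiring the center‑ and leaf‑copies so that the only globally consistent choices are those in which some $n$ of the $\hat c_i$ are ``activated'' and every $\hat\ell_{i'}$ is assigned, respecting $G'$‑adjacency, to an activated center. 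In $G$, $v_1$ plays the role of $z_1$: it is adjacent to $3n$ vertices $y_1,\dots,y_{3n}$, each lying in a size‑$2$ component $\{y_i,t_i\}$. Forcing $\phi(v_1)=z_1$ (degree padding) sends the $y_i$ bijectively onto the ports, whence $\phi(t_i)\in N_H(P_i)\setminus\{z_1\}=\{\hat c_i,\hat\ell_i\}$, i.e.\ $t_i$ decides whether $a_i$ is a center or a leaf. The second deletion‑set vertex $v_2$, together with further size‑$\le 2$ components attached to $v_1$ and $v_2$ and with the incidence gadgets of $H$, is used so that $\phi$ extends if and only if the leaf‑copies can be matched to center‑copies $2$‑to‑$1$; then each activated center gets exactly two leaves and each used leaf exactly one center, which forces exactly $n$ centers and an honest $P_3$-partition. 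One checks that $G\setminus\{v_1,v_2\}$ has all components of size at most $2$ and that neither $v_1$ nor $v_2$ alone is a $2$‑deletion set (each is incident to $\Theta(n)$ leaves), so $\mathrm{ds}_2(G)=2$.

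The main obstacle is exactly this last step: realising the ``each center receives precisely two leaves'' requirement---which is what separates $P_3$-partition from a mere, polynomially solvable, matching condition---using only a constant‑size deletion set and components of size $2$, so that it is enforced purely through the integrality of how many times each host incidence gadget is used. The remaining work---pinning $\phi(v_1)$ and $\phi(v_2)$ to the intended hubs by degree padding of $H$, checking that each guest edge maps to the intended host edge rather than to a hub edge, and verifying both directions of the equivalence---is technical but routine, as is the parameter bookkeeping that reduces general $(c,k)$ with $c,k\ge 2$ to the $c=2$, $k=2$ construction together with the clique padding.
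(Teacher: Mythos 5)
There is a genuine gap, and you have located it yourself: for the decisive case $c=2$ your construction relies on unspecified ``incidence gadgets'' that are supposed to force each activated centre-copy to receive \emph{exactly two} adjacent leaf-copies, and you explicitly defer their construction as ``the main obstacle''. But that $2$-to-$1$ counting constraint is precisely what separates \textsc{$P_3$-Partition} from a polynomially solvable degree-constrained matching problem, so without a concrete gadget the reduction establishes nothing; the hardness proof is not merely ``technical but routine'' from this point, it is the entire content of the lemma for $c=2$. (Your $c\ge 3$ warm-up and the clique/universal-vertex padding for larger $k$ are fine in spirit, though padding with \emph{disjoint} copies of $K_{c+1}$ disconnects $G$ and $H$ and does not pin those copies to their ``twins''; the paper instead keeps everything connected by replacing the hub $u$ with a $k$-clique and adding $k-1$ universal vertices to $H$.)

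For comparison, the paper closes exactly this gap with a trick you may find useful: it keeps the host simple ($H=G'$ plus two adjacent universal vertices $u',v'$) and instead encodes each element of the sought partition by \emph{two} disjoint guest edges $a_ib_i$ and $c_id_i$ (so all components of $G\setminus\{u,v\}$ have size $2$), attaching the two hubs asymmetrically: $u$ is adjacent to $a_i,b_i,d_i$ and $v$ to $a_i,c_i,d_i$. Degree counting forces $\phi(u)=u'$, $\phi(v)=v'$, and local injectivity at $u$ and at $v$ makes each of the two $3n$-element neighbourhood lists biject onto $V(G')$; comparing the two bijections forces the image multiset of the $b_i$'s to equal that of the $c_i$'s, so after renumbering $\phi(b_i)=\phi(c_i)$ and the two guest edges are glued into the host path $\phi(a_i)\hy\phi(b_i)\hy\phi(d_i)$. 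The ``exactly two leaves per centre'' bookkeeping you could not realise with size-$2$ components is thus obtained for free from a single cardinality argument, with no extra gadgetry in $H$.
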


\begin{proof}
We first consider the case when $k=2$.
Consider an instance~$G'$ of the {\sc $P_3$-Partition} problem on~$3n$ vertices, where $n\geq c$.
We construct a graph~$G$ as follows.
For $i \in \{1,\ldots,n\}$, add vertices $a_i,b_i,c_i$ and~$d_i$ and edges $a_ib_i$ and $c_id_i$.
Then add vertices~$u$ and~$v$ and make~$u$ adjacent to $a_i$, $b_i$ and~$d_i$ and~$v$ adjacent to $a_i$, $c_i$ and~$d_i$ for all $i \in \{1,\ldots,n\}$.
Finally, add the edge~$uv$.
Note that $\{u,v\}$ is a minimum-size $c$-deletion set for~$G$ since $\deg_G(u)=\deg_G(v)>c$.
Now let~$H$ be the graph obtained from~$G'$ by adding two vertices~$u'$ and~$v'$ that are adjacent to all the vertices in~$V(G')$ and to each other.
We claim that there is a locally injective homomorphism~$\phi$ from~$G$ to~$H$ if and only if~$G'$ is a yes-instance of the {\sc $P_3$-Partition} problem.

Suppose that~$G'$ is a yes-instance of the {\sc $P_3$-Partition} problem and, for $i \in \{1,\ldots,n\}$, let $v_i^1,v_i^2,v_i^3$ be the three vertices in~$V_i$, such that~$v_i^2$ is adjacent to~$v_i^1$ and~$v_i^3$ ($v_i^1$ may or may not be adjacent to~$v_i^3$).
Let $\phi:V(G)\to V(H)$ be the function such that $\phi(u)=u'$, $\phi(v)=v'$, and for $i \in \{1,\ldots,n\}$, $\phi(a_i)=v_i^1$, $\phi(b_i)=\phi(c_i)=v_i^2$ and $\phi(d_i)=v_i^3$.
Then~$\phi$ is a locally injective homomorphism from~$G$ to~$H$.

Now suppose that~$\phi$ is a locally injective homomorphism from~$G$ to~$H$.
Now $deg_G(u)=deg_G(v)=3n+1$.
Since~$H$ has $3n+2$ vertices and~$\phi$ is a locally injective homomorphism, it follows that~$\phi(u)$ and~$\phi(v)$ must be universal vertices in~$H$.
By symmetry, we may therefore assume that $\phi(u)=u'$ and $\phi(v)=v'$.
Now~$u$ is adjacent to $v$ and the vertices $a_i$, $b_i$ and $d_i$ for all $i \in \{1,\ldots,n\}$.
Similarly, $v$ is adjacent to $u$ and the vertices $a_i$, $c_i$ and $d_i$ for all $i \in \{1,\ldots,n\}$.
Since $deg_G(u)=3n+1$, and~$\phi$ is locally injective, it follows that $\phi(\{a_i,b_i,d_i \;|\; i \in \{1,\ldots,n\}\})=V(G')$.
Similarly, since $deg_G(v)=3n+1$, it follows that $\phi(\{a_i,c_i,d_i \;|\; i \in \{1,\ldots,n\}\})=V(G')$.
Therefore $\phi(\{b_i \;|\; i \in \{1,\ldots,n\}\})=\phi(\{c_i \;|\; i \in \{1,\ldots,n\}\})$.
Renumbering the indices of the~$c_i$ and~$d_i$ vertices if necessary, we may therefore assume by symmetry that $\phi(b_i)=\phi(c_i)$ for all $i \in \{1,\ldots,n\}$.
Now, for all $i \in \{1,\ldots,n\}$, the vertices $a_i$ and~$b_i$ are adjacent in~$G$, so $\phi(a_i)$ and~$\phi(b_i)$ are adjacent in~$H$.
Furthermore the vertices $c_i$ and~$d_i$ are adjacent in~$G$, so $\phi(c_i)=\phi(b_i)$ and~$\phi(d_i)$ are adjacent in~$H$.
We now set $V_i=\{\phi(a_i),\phi(b_i),\phi(d_i)\}$ and note that the~$V_i$ sets partition~$V(G')$, and that~$G'[V_i]$ contains a~$P_3$ subgraph for all $i \in \{1,\ldots,n\}$.
This completes the proof of the case when $k=2$.

To extend the proof to graphs with $c$-deletion number $k>2$, we add $(k-1)$ universal vertices to~$H$ and replace~$u$ with a $k$-clique~$K$ each of whose vertices is adjacent to~$v$ and $a_i$, $b_i$ and~$d_i$ for all $i \in \{1,\ldots,n\}$.
\end{proof}

\begin{lemma}\label{lem:lihom-3-1-np}
  For $c \geq 3$ and $k\geq 1$, \LIHOM{} is \NP{}-hard on graphs $G$ with $c$-deletion set number~$k$.
\end{lemma}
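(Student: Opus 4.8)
The plan is to reduce from \textsc{$K_3$-Partition}, which is \NP-complete~\cite{GJ79,KP78}, following the template of the proof of Lemma~\ref{lem:lihom-2-2-np} but with the two hub vertices replaced by a $k$-clique and the $P_3$-gadgets replaced by triangles. Let $G'$ be an instance of \textsc{$K_3$-Partition} on $3n$ vertices; by adding disjoint triangles to $G'$ (which form their own connected components, so a triangle part of any solution must lie inside a single component) we may assume $n$ is as large as we need without changing the answer. Build $G$ from $n$ vertex-disjoint triangles $T_1,\dots,T_n$ together with a clique $K$ on $k$ new vertices, making every vertex of $K$ adjacent to every vertex of every $T_i$. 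Build $H$ from $G'$ by adding a clique $U$ on $k$ new vertices, each joined to all of $V(G')$; then $|V(H)|=3n+k$ and $\deg_G(w)=3n+k-1=|V(H)|-1$ for every $w\in K$. Here $K$ is a $c$-deletion set of $G$ of size $k$ (as $c\ge 3$ and $G\setminus K$ is a disjoint union of triangles), and conversely any $c$-deletion set $S$ must contain all of $K$ — a surviving vertex of $K$ would lie in a component of $G\setminus S$ containing all but fewer than $k$ of the $3n$ triangle vertices, hence more than $c$ vertices once $n$ is large — so $G$ has $c$-deletion set number exactly $k$.

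Next I would prove that $G$ has a locally injective homomorphism to $H$ if and only if $G'$ is a yes-instance. For the forward direction, given a partition $V(G')=V_1\cup\dots\cup V_n$ into triangles of $G'$, map $K$ bijectively onto $U$ and each $T_i$ bijectively onto $V_i$; this is a homomorphism (triangle edges go to triangle edges, $K$-edges to $U$-edges, and edges between $K$ and each $T_i$ to edges between $U$ and $V(G')$) and it is locally injective since the neighbours of any vertex get pairwise distinct images. For the converse, let $\phi$ be a locally injective homomorphism from $G$ to $H$. Because $\deg_G(w)=|V(H)|-1$, local injectivity at $w$ forces $\deg_H(\phi(w))=|V(H)|-1$, so $\phi(w)$ is a universal vertex of $H$ and $\phi$ maps $N_G(w)$ bijectively onto $V(H)\setminus\{\phi(w)\}$, for every $w\in K$; combining these over all $w\in K$ (and handling $k\le 2$ directly) shows that $\phi$ is injective on $W:=K\cup\bigcup_iT_i$, hence a bijection onto $V(H)$, and that $\phi(K)$ is a set of $k$ universal vertices of $H$. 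Since any transposition of two universal vertices of $H$ is an automorphism of $H$, after composing $\phi$ with a suitable automorphism we may assume $\phi(K)=U$; then $\phi(\bigcup_iT_i)=V(H)\setminus U=V(G')$, each $\phi(T_i)$ is a triangle (a homomorphic image of the triangle $T_i$) contained in $V(G')$ hence a triangle of $G'$, and the injectivity of $\phi$ on $W$ makes these $n$ triangles a $K_3$-partition of $G'$.

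I expect the main obstacle to be this converse direction — pinning down the image of the hub clique $K$. The subtlety is that $\phi$ is forced to map $K$ only onto \emph{some} set of universal vertices of $H$, and $H$ may have universal vertices besides those of $U$ (precisely the vertices that are already universal in $G'$); the symmetry/automorphism argument above resolves this, exactly as the ``by symmetry'' step in the proof of Lemma~\ref{lem:lihom-2-2-np}. The remaining verifications are routine, and the construction works verbatim for every $k\ge 1$, which is why a single argument covers all $c\ge 3$ and $k\ge 1$.
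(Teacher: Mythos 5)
Your proposal is correct and follows essentially the same route as the paper: a reduction from \textsc{$K_3$-Partition} in which $G$ is a disjoint union of $n$ triangles plus $k$ universal hub vertices and $H$ is $G'$ plus $k$ universal vertices (the paper presents the case $k=1$ first and then adds $k-1$ universal vertices to both graphs, yielding exactly your construction). Your treatment is marginally more explicit on two points the paper glosses over — the automorphism argument justifying the ``without loss of generality $\phi(K)=U$'' step when $G'$ itself has universal vertices, and the verification that the $c$-deletion set number is exactly $k$ — but these are refinements of, not departures from, the paper's argument.
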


\begin{proof}
We first consider the case when $k=1$.
Consider an instance~$G'$ of the {\sc $K_3$-Partition} problem on~$3n$ vertices, where $n\geq c$.
Let~$H$ be the graph obtained from~$G'$ by adding a universal vertex~$w$.
Let~$G$ be the graph obtained by taking the disjoint union of~$n$ copies of~$K_3$ and adding a universal vertex~$v$.
Note that $\{v\}$ forms a minimum-size $c$-deletion set for~$G$ since $\deg_G(v)>c$.
We claim that there is a locally injective homomorphism~$\phi$ from~$G$ to~$H$ if and only if~$G'$ is a yes-instance of the {\sc $K_3$-Partition} problem.

Indeed, suppose there is such a~$\phi$.
Since~$\phi$ is locally injective and the graphs~$G$ and~$H$ each have~$3n$ vertices, the universal vertex~$v$ must be mapped to a universal vertex of~$H$; without loss of generality, we may therefore assume that $\phi(v)=w$.
Since~$v$ and~$w$ are universal vertices of the same degree, it follows that~$\phi$ is a bijection from~$V(G)$ to~$V(H)$.
Every~$K_3$ in the disjoint union part of~$G$ must therefore be mapped to a~$K_3$ in $H \setminus \{w\}=G'$.
Therefore~$G'$ is a yes-instance of the {\sc $K_3$-Partition} problem.

Now suppose that~$G'$ is a yes-instance of the {\sc $K_3$-Partition} problem.
We let $\phi(v)=w$, and map the vertices of each~$K_3$ in the disjoint union part of~$G$ to some~$V_i$ from the $K_3$-partition of~$H$, mapping each $K_3$ to a different set~$V_i$.
Clearly this is a locally injective homomorphism.
This completes the proof of the case when $k=1$.
To extend the proof to graphs with $c$-deletion number $k>1$, we add $(k-1)$ universal vertices to~$G$ and~$H$.
\end{proof}
\fi

\section{Bounded Tree-depth and Feedback Vertex Set Number}\label{s-npcom}

By Theorem~\ref{the:IHdic}, we already obtained \paraNP{}\hy hardness for \LIHOM{} parameterized by tree-depth or feedback vertex set number.
In this section we show that our tractability results for \LSHOM{} and \LBHOM{} cannot
be significantly extended, since both problems become \paraNP{}\hy
hard parameterized by tree-depth\iflong. \fi
\ifshort as well as parameterized by feedback
vertex set number. Our result strengthens the corresponding result
in~\cite{CFHPT15} for pathwidth and uses a simple adaptation of their
reduction.\fi
\iflong Furthermore, the reduction we give here also provides \paraNP{}\hy hardness for both \LSHOM{} and \LBHOM{} parameterized by the feedback vertex set number. We show this by replacing cycles with stars in the reduction provided in~\cite{CFHPT15} for path-width.
This strengthens their result from path-width to tree-depth and feedback vertex set number.\fi
\iflong \begin{theorem} \fi \ifshort \begin{theorem}[$\star$] \fi\label{t-np}
  \LBHOM{},
  or more specifically, $3$-{\sc FoldCover},
 and \LSHOM{}
  are \NP{}\hy complete on input pairs $(G,H)$ where $G$ has tree-depth at most~$6$ and $H$ has tree-depth at most~$4$.
\end{theorem}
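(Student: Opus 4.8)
The plan is to adapt the reduction of~\cite{CFHPT15} that proves \NP{}\hy completeness of \LBHOM{} (in fact of $3$-{\sc FoldCover}) and \LSHOM{} for guest graphs of bounded path-width. That reduction builds an instance $(G,H)$ out of constant-size gadgets that are linked together by long cycles; these cycles are precisely what makes $G$ have unbounded tree-depth (a cycle has path-width~$2$ but unbounded tree-depth), while every gadget individually has constant size. My proposal is to replace every cycle occurring in the construction, in both $G$ and~$H$, by a star on the same vertex set: pick one vertex of the cycle to act as a \emph{centre} and make it adjacent to all the remaining cycle-vertices, re-routing the gadget attachments to leaves of the star as needed. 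Stars are forests and have tree-depth~$2$, so this is the natural way to trade bounded path-width for bounded tree-depth (and, simultaneously, bounded feedback vertex set number, which yields Theorem~\ref{t-np2}).

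First I would establish the parameter bounds for the modified instance. Deleting the star centre(s) of~$G$ leaves a graph each of whose components is either an isolated former cycle-vertex or lies inside a single constant-size gadget; since tree-depth is taken component-wise, choosing the gadgets to have tree-depth at most~$5$ gives $\td(G)\le 6$, and choosing the gadgets to be forests makes the star centres a feedback vertex set, bounding $\fvn(G)$ as well. The same bookkeeping applied to~$H$, whose non-backbone part is a single constant-size target gadget of tree-depth at most~$3$, yields $\td(H)\le 4$. For $3$-{\sc FoldCover} I would additionally check that the replacement can be performed so that $|V(G)|=3|V(H)|$ is preserved; since a $k$-fold cover of a tree is necessarily disconnected, the preimage of a star centre of~$H$ under the cover has to be realised as the centres of three disjoint stars inside~$G$, which is harmless here precisely because tree-depth and feedback vertex set number are computed component-wise.

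Next comes correctness, which I would obtain by re-running the correctness proof of~\cite{CFHPT15} with every appeal to the cycle structure replaced by an appeal to the star structure. The key structural facts to re-establish are: (i) in any locally bijective (respectively locally surjective) homomorphism $\phi$ from $G$ to~$H$, the centres of~$G$ are mapped onto centres of~$H$ --- for \LBHOM{} this follows from $\deg_G(v)=\deg_H(\phi(v))$ together with making the centres the unique vertices of their (large) degree, and for \LSHOM{} from $\deg_G(v)\ge\deg_H(\phi(v))$ together with surjectivity (Observation~\ref{obs:surjective}); (ii) conditioned on (i), each gadget of~$G$ is forced to map exactly as in the original reduction, so the gadget semantics are untouched; and (iii) the flexibility that locally bijective (respectively surjective) homomorphisms of a star enjoy is exactly what is needed to realise every intended solution, so both directions of the reduction still go through.

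I expect step (iii), and more generally ruling out \emph{spurious} homomorphisms, to be the main obstacle: a star is a far more permissive backbone than a cycle, so one must verify carefully that the star linkage still enforces the global consistency constraints that the cycle linkage enforced in~\cite{CFHPT15}, rather than allowing the gadgets to be solved independently in a way that does not correspond to a solution of the source instance. Concretely, this means revisiting each gadget of the original construction and checking that its interface to a single leaf of a star (rather than to two consecutive vertices of a cycle) still transmits precisely the information the reduction relies on. Once this is verified, \NP{}\hy membership of both problems is immediate, completing the proof.
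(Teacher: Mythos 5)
Your high-level strategy is exactly the one the paper uses: it adapts the $3$-{\sc Partition}-based reduction of~\cite{CFHPT15}, replacing the long cycles by stars $K_{1,b}$ (one star $S_i$ per element $a_i$, with pendant vertices $p^i_j,q^i_j$ hanging off each leaf $u^i_j$ and three high-degree vertices $x,y,z$ tying the stars together), and the tree-depth and feedback-vertex-set bookkeeping you sketch is essentially the right one. However, what you have written is a plan rather than a proof: the concrete instance $(G,H)$ is never constructed, and the correctness argument --- which you yourself single out as ``the main obstacle,'' namely verifying that the star backbone still forces a $3$-partition and does not admit spurious homomorphisms --- is left entirely open. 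In the paper this is precisely where all the work lies: one must argue that $\phi(x)=\tilde{x}$ by a degree count, that local bijectivity at $x$ and at the vertices of $N_H(\tilde{x})$ forces a bijection between the second neighbourhoods $N_G^2(x)$ and $N_H^2(\tilde{x})$, and that the centres $c^i$ must map onto centres $\tilde{c}^j$, which together recover the partition; the converse direction requires an explicit (and somewhat fiddly) assignment of the $u$-, $p$- and $q$-vertices. None of this is present in your proposal, so as it stands the theorem is not proved.

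There is also a concrete gap in your treatment of \LSHOM{}. You propose to force ``centres map to centres'' using $\deg_G(v)\ge\deg_H(\phi(v))$ together with surjectivity, but this inequality only prevents low-degree vertices of $G$ from landing on high-degree vertices of $H$; it does not by itself prevent $x$, $y$ or $z$ from mapping to a low-degree vertex of $H$, so the rigidity argument you rely on for \LBHOM{} does not transfer. The paper sidesteps this entirely: it verifies that the constructed $G$ and $H$ have equal degree refinement matrices and then invokes the result of Kristiansen and Telle that $G\surjc H$ together with $\drm(G)=\drm(H)$ implies $G\fullc H$, so the \LSHOM{} case reduces to the already-established \LBHOM{} case. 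You would need either this lemma or a substitute for it to close the locally surjective half of the statement.
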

\iflong
\begin{proof}
First note that \LBHOM{}, $3$-{\sc FoldCover} and \LSHOM{} are in \NP{}. 
To prove \NP{}-hardness for $3$-{\sc FoldCover} and \LSHOM{}
we use a reduction from the {\sc $3$-Partition} problem. 
This problem takes as input a multiset $A$ of $3m$ integers, denoted in what follows by $\{a_1,a_2,\ldots,a_{3m}\}$,
and a positive integer $b$, such that $\frac{b}4<a_i<\frac{b}2$ for all $i\in \{1,\ldots,3m\}$ and $\sum_{1\leq i\leq 3m} a_i=mb$. 
The task is to determine whether $A$ can be partitioned into $m$ disjoint sets $A_1,\ldots,A_m$ such that $\sum_{a\in A_i} a=b$ 
for all $i\in \{1,\ldots,m\}$. Note that the restrictions on the size of each element in $A$ implies that each set $A_i$ 
in the desired partition must contain exactly three elements, which is why such a partition $A_1,\ldots,A_m$ is 
called a {\em $3$-partition} of $A$. The {\sc $3$-Partition} problem is strongly \NP{}-complete~\cite{GJ79}, 
i.e. it remains \NP{}-complete even if the problem is encoded in unary. 

We first prove \NP{}-hardness for $3$-{\sc FoldCover}, which implies \NP{}-hardness for \LBHOM.
Given an instance $(A,b)$ of {\sc $3$-Partition}, we construct
an instance of $3$-{\sc FoldCover} consisting of connected graphs $G$ and $H$ with $|V(G)|=3|V(H)|$ as follows. 
To  construct $G$ we take $3m$ disjoint copies $S_1,\ldots,S_{3m}$ of $K_{1,b}$ (stars), one for each element of $A$. 
For each $i\in \{1,\ldots,3m\}$, the vertices of $S_i$ are labelled $c^i,u^i_{1},\ldots,u^i_b$, where $c_i$ is the vertex of degree $b$ in $S_i$ (the centre of the star). We  add 
two new vertices $p^i_j$ and $q^i_j$ for each $i\in \{1,\ldots,3m\}$, $j\in \{1,\ldots,b\}$, as well as two new edges $u^i_jp^i_j$ and $u^i_jq^i_j$. 
We then add three new vertices $x$, $y$ and $z$. The vertex $x$ is made adjacent to the vertices $p^i_1,p^i_2\ldots,p^i_{a_i}$ 
and $q^i_1,q^i_2\ldots,q^i_{a_i}$ for every $i\in \{1,\ldots,3m\}$. 
Finally, the vertex $y$ is made adjacent to every vertex $p^i_j$ that is not adjacent to $x$, and the vertex $z$ is made adjacent 
to every vertex $q^i_j$ that is not adjacent to $x$. This completes the construction of $G$. 
Note that $|V(G)|=3|V(H)|$. 
For an example see Figure~\ref{fig:red}.

To construct $H$, we take $m$ disjoint copies  $\tilde{S}_1,\ldots,\tilde{S}_{m}$ of $K_{1,b}$, 
where the vertices of each star $\tilde{S}_i$ are labelled $\tilde{c}^i,\tilde{u}^i_1,\ldots,\tilde{u}^i_b$. 
For each $i\in \{1,\ldots,m\}$ and $j\in \{1,\ldots,b\}$, we add two vertices $\tilde{p}^i_j$ 
and $\tilde{q}^i_j$ and make both of them adjacent to $\tilde{u}^i_j$. 
Finally, we add a vertex $\tilde{x}$ and make it adjacent to each of the vertices $\tilde{p}^i_j$ and $\tilde{q}^i_j$. 
This finishes the construction of $H$. For an illustration see Figure~\ref{fig:red}.

We now show that there exists a locally bijective homomorphism from $G$ to $H$ if and only if $(A,b)$ is a yes-instance of {\sc $3$-Partition}. 
Let us first assume that there exists a locally bijective homomorphism $\phi$ from $G$ to $H$. 
Since $\phi$ is a degree-preserving mapping, we must have $\phi(x)=\tilde{x}$. 
Moreover, since $\phi$ is locally bijective, the restriction of $\phi$ to $N_G(x)$ is a bijection from $N_G(x)$ to $N_H(\tilde{x})$. 
Again using the definition of a locally bijective mapping, this time considering the neighbourhoods of the vertices in $N_H(\tilde{x})$, 
we deduce that there is a bijection from the set 
$N^2_G(x):= \{u^i_j \mid 1\leq i\leq 3m, 1\le j \le a_i \}$, i.e. from the set of vertices in $G$ at distance $2$ from $x$, to the set 
$N^2_H(\tilde{x}):=\{\tilde{u}^k_j \mid 1\leq k\leq m, 1\le j \le b \}$ 
of vertices that are at distance $2$ from $\tilde{x}$ in $H$. 
For every $k\in \{1,\ldots, m\}$, we define a set $A_k\subseteq A$ such that $A_k$ contains element $a_i\in A$ if and only if 
$\phi(u^i_1)\in \{\tilde{u}^k_1,\ldots,\tilde{u}^k_b\}$. 
Since $\phi$ is a bijection from $N^2_G(x)$ to $N^2_H(\tilde{x})$, the sets $A_1,\ldots,A_m$ are disjoint; 
moreover each element $a_i\in A$ is contained in exactly one of them. 
Since $\phi$ is degree preserving, each $c^i$ has to be mapped onto a $\tilde{c}^j$ (in the special case when $b=3$, we can argue this using the distance to $x$ as before). Additionally, since $\phi$ is locally bijective for every $i\in \{1,\dots,3m\}$, there is a bijection from $N_G(c^i)=\{u^i_1,\dots,u^i_b\}$ to $N_H(\tilde{c}^j)=\{\tilde{u}^j_1,\dots,\tilde{u}^j_b\}$ for the $j\in \{1,\dots,m\}$ for which $\phi(c^i)=\tilde{c}^j$. Combining this and the previous argument  implies that $\sum_{a\in A_i} a=b$ for all $i\in \{1,\ldots,m\}$. Hence $A_1,\ldots,A_m$ is a $3$-partition of $A$.

\begin{figure}
    \centering
    \includegraphics{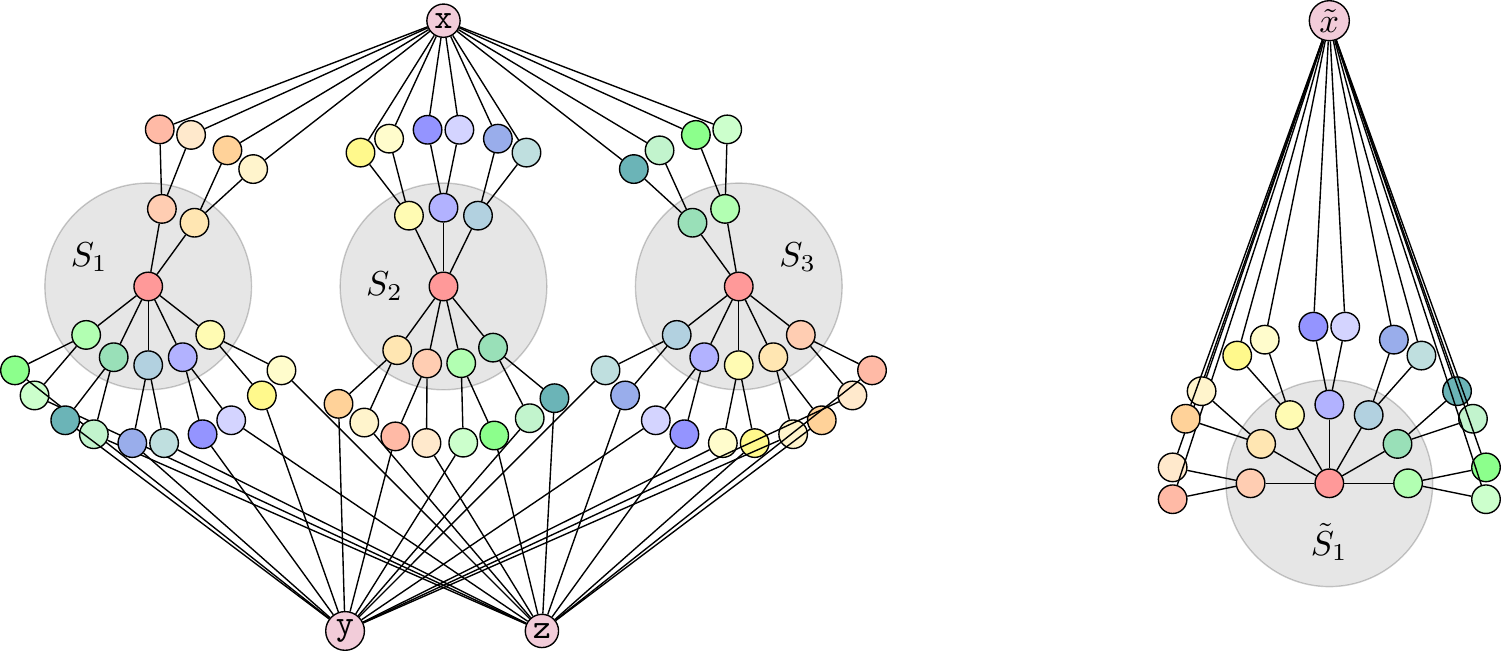}
    
    \caption{An instance of \LBHOM{} consisting of the graph $G$ (left) and the graph $H$ (right) corresponding to the instance $(A,b)$ of {\sc $3$-Partition}, where $A=\{2,3,2\}$ and $b=7$. As $(A,b)$ is a yes-instance of the {\sc $3$-Partition} problem, there is a locally bijective homomorphism from $G$ to $H$ which is indicated by colours.}
    \label{fig:red}
\end{figure}

For the reverse direction, suppose there exists a $3$-partition $A_1,\ldots,A_m$ of $A$. We define a mapping $\phi$ as follows. 
We first set $\phi(x)=\phi(y)=\phi(z)=\tilde{x}$. Let $A_i=\{a_r,a_s,a_t\}$ be any set of the $3$-partition. 
 We  map the vertices of $S_r,S_s,S_t$ to the vertices of $\tilde{S}_i$ in the following way:
 $\phi(c_r)=\phi(c_s)=\phi(c_t)=\tilde{c}_i$;
$\phi(u^r_j)=\tilde{u}^i_j$ for each $j\in \{1,\ldots,b\}$; $\phi(u^s_j)=\tilde{u}^i_{a_r+j}$ for each $j\in \{1,\ldots,a_s+a_t\}$ and $\phi(u^s_j)=\tilde{u}^i_{a_r+j-b}$ for $j\in \{a_s+a_t+1,\dots,b\}$; and $\phi(u^t_j)=\tilde{u}^i_{a_r+a_s+j}$ for each $j\in \{1,\ldots,a_t\}$ and $\phi(u^s_j)=\tilde{u}^i_{a_r+j-b}$ for $j\in \{a_t+1,\dots,b\}$. 
It remains to map the vertices $p^i_j$ and $q^i_j$ for each $i\in \{1,\ldots,3m\}$ and $j\in \{1,\ldots,b\}$. 
Let $p^i_j,q^i_j$ be a pair of vertices in $G$ that are adjacent to $x$, and let $u^i_j$ be the second common neighbour of $p^i_j$ and $q^i_j$. Suppose $\tilde{u}^k_\ell$ is the image of $u^i_j$, i.e. suppose that $\phi(u^i_j)=\tilde{u}^k_\ell$. Then we map $p^i_j$ and $q^i_j$ to $\tilde{p}^k_\ell$ and $\tilde{q}^k_\ell$, respectively. 
We now consider the neighbours of $y$ and $z$ in $G$. By construction, the neighbourhood of $y$ consists of the $2mb$ vertices in the set 
$\{p^i_j \mid a_{i+1}\leq j \leq b\}$, while $N_G(z)=\{q^i_j \mid a_{i+1}\leq j \leq b\}$. 

Observe that $\tilde{x}$, the image of $y$ and $z$, is adjacent to two sets of $mb$ vertices: one of the form $\tilde{p}^k_\ell$, the other of the form $\tilde{q}^k_\ell$. 
Hence, we need to map half the neighbours of $y$ to vertices of the form $\tilde{p}^k_\ell$ and half the neighbours of $y$ to vertices of the form $\tilde{q}^k_\ell$ in order 
to make $\phi$ a locally bijective homomorphism. The same should be done with the neighbours of $z$. For every vertex $\tilde{u}^k_\ell$ in $H$, we do as follows. 
By construction, exactly three vertices of $G$ are mapped to $\tilde{u}^k_\ell$, and exactly two of these vertices, say $u^i_j$ and $u^{g}_{h}$, are at distance $2$ from $y$ in $G$. 
We set $\phi(p^i_j)=\tilde{p}^k_\ell$ and $\phi(p^{g}_{h})=\tilde{q}^k_\ell$. 
We also set $\phi(q^i_j)=\tilde{q}^k_\ell$ and $\phi(q^{g}_{h})=\tilde{p}^k_\ell$.
This completes the definition of the mapping $\phi$.
For an illustration of the map $\phi$, see Figure~\ref{fig:red}.

Since the mapping $\phi$ preserves adjacencies, it clearly is a homomorphism. In order to show that $\phi$ is locally bijective, we first observe that the degree of every vertex in $G$ is equal to the degree of its image in $H$, in particular, 
$d_G(x)=d_G(y)=d_G(z)=d_H(\tilde{x})=2mb$. From the above description of $\phi$ we get a bijection between the vertices of $N_H(\tilde{x})$ and the vertices of $N_G(v)$ for each $v\in \{x,y,z\}$. For every vertex $p^i_j$ that is adjacent to $x$ and $u^i_j$ in $G$, its image $\tilde{p}^k_\ell$ is adjacent to the images $\tilde{x}$ of $x$ and $\tilde{u}^k_\ell$ of $u^i_j$. For every vertex $p^i_j$ that is adjacent to $y$ (respectively $z$) and $u^i_j$ in $G$, its image $\tilde{p}^k_\ell$ or $\tilde{q}^k_\ell$ is adjacent to $\tilde{x}$ of $y$ (respectively $z$) and $\tilde{u}^k_\ell$ of $u^i_j$. Hence the restriction of $\phi$ to $N_G(p^i_j)$ is bijective for every $i\in \{1,\ldots,3m\}$ and $j\in \{1,\ldots,b\}$, and the same clearly holds for the restriction of $\phi$ to $N_G(q^i_j)$. The vertices of each star $S_i$ are mapped to the vertices of some star $\tilde{S}_k$ in such a way that the centres are mapped to centres. This, together with the fact that the image $\tilde{u}^k_\ell$ of every vertex $u^i_j$ is adjacent to the images $\tilde{p}^k_\ell$ and $\tilde{q}^k_\ell$ of the neighbours $p^i_j$ and $q^i_j$ of $u^i_j$, shows that the restriction of $\phi$ to $N_G(u^i_j)$ is bijective for every $i\in \{1,\ldots,3m\}$ and $j\in \{1,\ldots,b\}$. Finally, the neighbourhood of $c^i$ is clearly mapped to the neighbourhood of $\phi(c^i)$ for every $i\in \{1,\ldots,3m\}$.
We conclude that $\phi$ is a locally bijective homomorphism from $G$ to $H$.

In order to show that the tree-depth of $G$ is at most $6$, we construct a rooted tree $T$ as follows. We let $z$ be the root of $T$ and add one child $y$. For $y$ we add one child $x$. We construct $3m$ children $c^1,\ldots,c^{3m}$ of $x$. Furthermore, to each $c^i$ we add $b$ children $u^i_1,\ldots,u^i_b$. Finally, each $u^i_j$ gets two children $p^i_j$ and $q^i_j$. It is easy to observe that $G$ is a subgraph of $C(T)$ and since $T$ has depth $5$, this implies $\operatorname{td}(G)\leq 6$. Furthermore, for $H$ we can use a very similar approach. We let $\tilde{T}$ be the tree obtained from $T$ by removing $z,y$ from $T$ and letting $x$ be the root. After renaming the vertices appropriately, $H$ is a subgraph of $C(\tilde{T})$ and hence $\operatorname{td}(H)\leq 4$. This completes the proof for $3$-{\sc FoldCover} and therefore \LBHOM{}.

In order to prove \NP{}-hardness for \LSHOM{} we can use the same reduction as for \LBHOM. For this we can argue that there is a locally bijective homomorphism from $G$ to $H$, for the graphs $G$ and $H$ constructed above, if and only if there is a locally surjective homomorphism from $G$ to $H$. While the one direction is clear, if $G\xrightarrow{B} H$ then $G\xrightarrow{S} H$, for the converse direction we can make use of the following statement due to Kristiansen and Telle~\cite{KT00}:
\begin{description}
    \item[(*)]If $G\xrightarrow{S} H$ and $\drm(G)=\drm(H)$, then $G\xrightarrow{B} H$.
\end{description}
Here $\drm(G)$, $\drm(H)$ refers to the degree refinement matrix of $G$ or $H$ respectively, which is defined as follows.
An {\it equitable partition} of a connected graph $G$ is a partition of its vertex set into blocks  $B_1,\ldots, B_k$ such that every vertex in $B_i$
has the same number $m_{i,j}$ of neighbours in $B_j$. Then $\drm(G)=(m_{i,j})$ for $m_{i,j}$ corresponding to the coarsest equitable partition of $G$.
We can easily observe that 
$$
\drm(G)=\drm(H)=
\begin{pmatrix}
0\; & 0\; & 2mb\; & 0 \\
0 & 0 & 2 & 1 \\
1 & 1 & 0 & 0\\
0 & b & 0 & 0
\end{pmatrix},
$$
corresponding to the equitable partitions $B_1=\{x,y,z\}$, $B_2=\{u^i_j\mid i\in \{1,\ldots,3m\}, j\in \{1,\ldots, b\}\}$, $B_3=\{p^i_j,q^i_j\mid i\in \{1,\ldots,3m\}, j\in \{1,\ldots, b\}\}$ and $B_4=\{c^i\mid i\in \{1,\ldots,3m\}\}$ in $G$ and a similar equitable partition  in $H$. Hence by (*) we find that $G\xrightarrow{B}H$ if and only if $G\xrightarrow{S} H$, completing the proof for \LSHOM{}.
\end{proof}
\fi
\iflong \begin{theorem} \fi \ifshort \begin{theorem}[$\star$] \fi\label{t-np2}
 \LBHOM{}, or more specifically, {\sc $3$-FoldCover}, and \LSHOM{} are \NP{}\hy complete on input pairs $(G,H)$ where $G$ and $H$ have feedback vertex set number at most~$3$ and~$1$, respectively.
\end{theorem}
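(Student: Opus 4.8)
The plan is to reuse the very reduction from {\sc $3$-Partition} that appears in the proof of Theorem~\ref{t-np}, and then to check that the guest graph $G$ and host graph $H$ produced by it already have feedback vertex set number at most $3$ and at most $1$, respectively. Membership in \NP{} is immediate. Since that reduction outputs connected graphs with $|V(G)|=3|V(H)|$, its \NP{}-hardness for \LBHOM{} is in fact \NP{}-hardness for {\sc $3$-FoldCover}. Correctness is already established in the proof of Theorem~\ref{t-np}: there it is shown that $G\xrightarrow{B}H$ if and only if $(A,b)$ is a yes-instance of {\sc $3$-Partition}, and that $\drm(G)=\drm(H)$, so that by $(*)$ we get $G\xrightarrow{B}H$ if and only if $G\xrightarrow{S}H$; hence the same instance witnesses \NP{}-hardness of \LSHOM{} as well. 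Thus the only new ingredient needed is the two feedback vertex set bounds.

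First I would argue that $\{x,y,z\}$ is a feedback vertex set of $G$. After deleting $x$, $y$ and $z$, each vertex $p^i_j$ and each $q^i_j$ retains only the single neighbour $u^i_j$, each $u^i_j$ retains exactly the three neighbours $c^i$, $p^i_j$, $q^i_j$, and each centre $c^i$ retains exactly the neighbours $u^i_1,\dots,u^i_b$. Consequently the component of $G\setminus\{x,y,z\}$ containing $c^i$ is the tree rooted at $c^i$ with children $u^i_1,\dots,u^i_b$, each $u^i_j$ having the two leaves $p^i_j$ and $q^i_j$ as its children. Hence $G\setminus\{x,y,z\}$ is a disjoint union of $3m$ trees, i.e.\ a forest, so $\fvn(G)\leq 3$. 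An entirely analogous check shows that $\{\tilde{x}\}$ is a feedback vertex set of $H$: deleting $\tilde{x}$ leaves each $\tilde{p}^i_j$ and $\tilde{q}^i_j$ of degree one, each $\tilde{u}^i_j$ with the three neighbours $\tilde{c}^i$, $\tilde{p}^i_j$, $\tilde{q}^i_j$, and each centre $\tilde{c}^i$ adjacent precisely to $\tilde{u}^i_1,\dots,\tilde{u}^i_b$, so that $H\setminus\{\tilde{x}\}$ is a disjoint union of $m$ trees, again a forest, giving $\fvn(H)\leq 1$.

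I do not expect any genuine obstacle: the only thing to be careful about is checking the incidences of the construction, so that no cycle survives the deletion of $\{x,y,z\}$ (respectively of $\{\tilde{x}\}$). The conceptual point is that the design choice already made in Theorem~\ref{t-np} — replacing the cycles of the \cite{CFHPT15} path-width reduction by stars in order to bound the tree-depth — automatically bounds the feedback vertex set number as well; since feedback vertex set number is incomparable with tree-depth, this single reduction delivers both para-\NP-completeness results at once.
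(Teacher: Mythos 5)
Your proposal is correct and matches the paper's own proof exactly: both reuse the reduction from Theorem~\ref{t-np} verbatim and observe that $\{x,y,z\}$ and $\{\tilde{x}\}$ are feedback vertex sets of $G$ and $H$, respectively. Your explicit verification that the deletions leave forests is accurate and merely spells out what the paper leaves to the reader.
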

\iflong
\begin{proof}
    To prove the statement we use the same reductions as in the proof of Theorem~\ref{t-np}. This is sufficient, as the set $\{x,y,z\}$ is a feedback vertex set of $G$ and the set $\{\tilde{x}\}$ is a feedback vertex set of $H$ for graphs $G$ and $H$ defined in the proof of Theorem~\ref{t-np}.
\end{proof}
\fi
\section{Conclusions}\label{s-con}

\iflong We introduced a general algorithmic framework that can be employed for a wide variety of problems related to
homomorphisms on graphs that have a small fracture number. We already illustrated the applicability
of the framework for three well-known variants of the locally constrained homomorphism problem, i.e. \LSHOM{}, \LBHOM{}, and \LIHOM{}, as well as the
{\sc Role Assignment} problem, by giving three \FPT{} results and one \XP{} result. Our complementary hardness results provide
a fairly comprehensive picture concerning the parameterized complexity of the three
locally constrained homomorphism problems\iflong\ (see also Table~\ref{t-thetable})\fi.
\fi

For future work we aim to extend our ILP-based framework. If
successful, this will then also enable us to address the parameterized complexity of other graph homomorphism variants such as quasi-covers~\cite{FT13} and pseudo-covers~\cite{Ch05,CP11,CP14}.
We also recall an interesting open problem from~\cite{CFHPT15}.
Namely, are \LBHOM{} and \LSHOM{} in \FPT\ when parameterized by the treewidth of the guest graph plus the maximum degree of the guest graph?


\begin{thebibliography}{10}

\bibitem{AFS91}
James Abello, Michael~R. Fellows, and John Stillwell.
\newblock On the complexity and combinatorics of covering finite complexes.
\newblock {\em Australasian Journal of Combinatorics}, 4:103--112, 1991.

\bibitem{An80}
Dana Angluin.
\newblock Local and global properties in networks of processors (extended
  abstract).
\newblock {\em Proc. STOC 1980}, pages 82--93, 1980.

\bibitem{AG81}
Dana Angluin and A.~Gardiner.
\newblock Finite common coverings of pairs of regular graphs.
\newblock {\em Journal of Combinatorial Theory, Series {B}}, 30:184--187, 1981.

\bibitem{Bi74}
Norman~J. Biggs.
\newblock {\em Algebraic Graph Theory}.
\newblock Cambridge University Press, 1974.

\bibitem{Bi82}
Norman~J. Biggs.
\newblock Constructing $5$-arc transitive cubic graphs.
\newblock {\em Journal of the London Mathematical Society II}, 26:193--200,
  1982.

\bibitem{BLT11}
Ondrej B{\'{\i}}lka, Bernard Lidick{\'{y}}, and Marek Tesar.
\newblock Locally injective homomorphism to the simple weight graphs.
\newblock {\em Proc. {TAMC} 2011, LNCS}, 6648:471--482, 2011.

\bibitem{Bo89}
Hans~L. Bodlaender.
\newblock The classification of coverings of processor networks.
\newblock {\em Journal of Parallel and Distributed Computing}, 6:166--182,
  1989.

\bibitem{BGHK95}
Hans~L. Bodlaender, John~R. Gilbert, Hj{\'{a}}lmtyr Hafsteinsson, and Ton
  Kloks.
\newblock Approximating treewidth, pathwidth, frontsize, and shortest
  elimination tree.
\newblock {\em Journal of Algorithms}, 18:238--255, 1995.

\bibitem{BFHJK}
Jan Bok, Jir{\'{\i}} Fiala, Petr Hlinen{\'{y}}, Nikola Jedlickov{\'{a}}, and
  Jan Kratochv{\'{\i}}l.
\newblock Computational complexity of covering two-vertex multigraphs with
  semi-edges.
\newblock {\em CoRR}, abs/2103.15214, 2021.

\bibitem{BTV13}
Binh{-}Minh Bui{-}Xuan, Jan~Arne Telle, and Martin Vatshelle.
\newblock Fast dynamic programming for locally checkable vertex subset and
  vertex partitioning problems.
\newblock {\em Theoretical Computer Science}, 511:66--76, 2013.

\bibitem{Ch05}
J{\'{e}}r{\'{e}}mie Chalopin.
\newblock Local computations on closed unlabelled edges: The election problem
  and the naming problem.
\newblock {\em Proc. {SOFSEM} 2005, LNCS}, 3381:82--91, 2005.

\bibitem{CMZ06}
J{\'{e}}r{\'{e}}mie Chalopin, Yves M{\'{e}}tivier, and Wieslaw Zielonka.
\newblock Local computations in graphs: The case of cellular edge local
  computations.
\newblock {\em Fundamenta Informaticae}, 74:85--114, 2006.

\bibitem{CP11}
J{\'{e}}r{\'{e}}mie Chalopin and Dani{\"{e}}l Paulusma.
\newblock Graph labelings derived from models in distributed computing: {A}
  complete complexity classification.
\newblock {\em Networks}, 58:207--231, 2011.

\bibitem{CP14}
J{\'{e}}r{\'{e}}mie Chalopin and Dani{\"{e}}l Paulusma.
\newblock Packing bipartite graphs with covers of complete bipartite graphs.
\newblock {\em Discrete Applied Mathematics}, 168:40--50, 2014.

\bibitem{CFHPT15}
Steven Chaplick, Jir{\'{\i}} Fiala, Pim van~'t Hof, Dani{\"{e}}l Paulusma, and
  Marek Tesar.
\newblock Locally constrained homomorphisms on graphs of bounded treewidth and
  bounded degree.
\newblock {\em Theoretical Computer Science}, 590:86--95, 2015.

\bibitem{CR00}
Chandra Chekuri and Anand Rajaraman.
\newblock Conjunctive query containment revisited.
\newblock {\em Theoretical Computer Science}, 239:211--229, 2000.

\bibitem{CyganFKLMPPS15}
Marek Cygan, Fedor~V. Fomin, Lukasz Kowalik, Daniel Lokshtanov, D{\'{a}}niel
  Marx, Marcin Pilipczuk, Micha{\l} Pilipczuk, and Saket Saurabh.
\newblock {\em Parameterized Algorithms}.
\newblock Springer, 2015.

\bibitem{DKV02}
V{\'{\i}}ctor Dalmau, Phokion~G. Kolaitis, and Moshe~Y. Vardi.
\newblock Constraint satisfaction, bounded treewidth, and finite-variable
  logics.
\newblock {\em Proc. {CP} 2002, LNCS}, 2470:310--326, 2002.

\bibitem{diestel00}
Reinhard Diestel.
\newblock {\em Graph Theory}, volume 173 of {\em Graduate Texts in
  Mathematics}.
\newblock Springer Verlag, New York, 2nd edition, 2000.

\bibitem{Do16}
Mitre~Costa Dourado.
\newblock Computing role assignments of split graphs.
\newblock {\em Theoretical Computer Science}, 635:74--84, 2016.

\bibitem{DF95}
Rodney~G. Downey and Michael~R. Fellows.
\newblock Fixed-parameter tractability and completeness {II:} on completeness
  for {W[1]}.
\newblock {\em Theoretical Computer Science}, 141:109--131, 1995.

\bibitem{DowneyF13}
Rodney~G. Downey and Michael~R. Fellows.
\newblock {\em Fundamentals of Parameterized Complexity}.
\newblock Texts in Computer Science. Springer, 2013.

\bibitem{DDH16}
P{\aa}l~Gr{\o}n{\aa}s Drange, Markus~S. Dregi, and Pim van~'t Hof.
\newblock On the computational complexity of vertex integrity and component
  order connectivity.
\newblock {\em Algorithmica}, 76:1181--1202, 2016.

\bibitem{DEGKO17}
Pavel Dvor{\'{a}}k, Eduard Eiben, Robert Ganian, Dusan Knop, and Sebastian
  Ordyniak.
\newblock Solving integer linear programs with a small number of global
  variables and constraints.
\newblock {\em Proc. {IJCAI} 2017}, pages 607--613, 2017.

\bibitem{Ed65}
Jack Edmonds.
\newblock Paths, trees, and flowers.
\newblock {\em Canadian Journal of Mathematics}, 17:449--467, 1965.

\bibitem{EB91}
Martin~G. Everett and Stephen~P. Borgatti.
\newblock Role colouring a graph.
\newblock {\em Mathematical Social Sciences}, 21:183--188, 1991.

\bibitem{FellowsLokshtanovMisraRS08}
Michael~R. Fellows, Daniel Lokshtanov, Neeldhara Misra, Frances~A. Rosamond,
  and Saket Saurabh.
\newblock Graph layout problems parameterized by vertex cover.
\newblock In {\em ISAAC}, Lecture Notes in Computer Science, pages 294--305.
  Springer, 2008.

\bibitem{FKK01}
Jir{\'{\i}} Fiala, Ton Kloks, and Jan Kratochv{\'{\i}}l.
\newblock Fixed-parameter complexity of lambda-labelings.
\newblock {\em Discrete Applied Mathematics}, 113:59--72, 2001.

\bibitem{FK02}
Jir{\'{\i}} Fiala and Jan Kratochv{\'{\i}}l.
\newblock Partial covers of graphs.
\newblock {\em Discussiones Mathematicae Graph Theory}, 22:89--99, 2002.

\bibitem{FK08}
Jir{\'{\i}} Fiala and Jan Kratochv{\'{\i}}l.
\newblock Locally constrained graph homomorphisms - structure, complexity, and
  applications.
\newblock {\em Computer Science Review}, 2:97--111, 2008.

\bibitem{FKP08}
Jir{\'{\i}} Fiala, Jan Kratochv{\'{\i}}l, and Attila P{\'{o}}r.
\newblock On the computational complexity of partial covers of theta graphs.
\newblock {\em Discrete Applied Mathematics}, 156:1143--1149, 2008.

\bibitem{FP05}
Jir{\'{\i}} Fiala and Dani{\"{e}}l Paulusma.
\newblock A complete complexity classification of the role assignment problem.
\newblock {\em Theoretical Computer Science}, 349:67--81, 2005.

\bibitem{FP10}
Jir{\'{\i}} Fiala and Dani{\"{e}}l Paulusma.
\newblock Comparing universal covers in polynomial time.
\newblock {\em Theory of Computing Systems}, 46:620--635, 2010.

\bibitem{FT13}
Jir{\'{\i}} Fiala and Marek Tesar.
\newblock Dichotomy of the ${H}$-{Quasi}-{Cover} problem.
\newblock {\em Proc. {CSR} 2013, LNCS}, 7913:310--321, 2013.

\bibitem{FlumGrohe06}
J\"{o}rg Flum and Martin Grohe.
\newblock {\em Parameterized Complexity Theory}, volume XIV of {\em Texts in
  Theoretical Computer Science. An EATCS Series}.
\newblock Springer Verlag, Berlin, 2006.

\bibitem{FrankTardos87}
Andr{\'a}s Frank and {\'E}va Tardos.
\newblock An application of simultaneous diophantine approximation in
  combinatorial optimization.
\newblock {\em Combinatorica}, 7(1):49--65, 1987.

\bibitem{Fr90}
Eugene~C. Freuder.
\newblock Complexity of $k$-tree structured constraint satisfaction problems.
\newblock {\em Proc. AAAI 1990}, pages 4--9, 1990.

\bibitem{GJ79}
Michael~Randolph Garey and David~S. Johnson.
\newblock {\em Computers and Intractability: A Guide to the Theory of
  {NP-Completeness}}.
\newblock W. H. Freeman \& Co., New York, NY, USA, 1979.

\bibitem{GK03}
Michael~U. Gerber and Daniel Kobler.
\newblock Algorithms for vertex-partitioning problems on graphs with fixed
  clique-width.
\newblock {\em Theoretical Computer Science}, 299(1-3):719--734, 2003.

\bibitem{Gr07}
Martin Grohe.
\newblock The complexity of homomorphism and constraint satisfaction problems
  seen from the other side.
\newblock {\em Journal of the {ACM}}, 54:1:1--1:24, 2007.

\bibitem{HHP12}
Pinar Heggernes, Pim van~'t Hof, and Dani{\"{e}}l Paulusma.
\newblock Computing role assignments of proper interval graphs in polynomial
  time.
\newblock {\em Journal of Discrete Algorithms}, 14:173--188, 2012.

\bibitem{HN90}
Pavol Hell and Jaroslav Ne\v{s}et\v{r}il.
\newblock On the complexity of $h$-coloring.
\newblock {\em Journal of Combinatorial Theory, Series {B}}, 48:92--110, 1990.

\bibitem{HN92}
Pavol Hell and Jaroslav Ne\v{s}et\v{r}il.
\newblock The core of a graph.
\newblock {\em Discrete Mathematics}, 109:117--126, 1992.

\bibitem{HN04}
Pavol Hell and Jaroslav Ne\v{s}et\v{r}il.
\newblock {\em Graphs and Homomorphisms}.
\newblock Oxford University Press, 2004.

\bibitem{Kannan87}
Ravi Kannan.
\newblock Minkowski's convex body theorem and integer programming.
\newblock {\em Math. Oper. Res.}, 12(3):415--440, 1987.

\bibitem{KP78}
David~G. Kirkpatrick and Pavol Hell.
\newblock On the completeness of a generalized matching problem.
\newblock In {\em Proceedings of the Tenth Annual ACM Symposium on Theory of
  Computing}, STOC '78, page 240–245, New York, NY, USA, 1978. Association
  for Computing Machinery.

\bibitem{Kl17}
Pavel Klav\'ik.
\newblock {\em Extension Properties of Graphs and Structures}.
\newblock PhD thesis, Charles University, Prague, 2017.

\bibitem{Kr94}
Jan Kratochv{\'{\i}}l.
\newblock Regular codes in regular graphs are difficult.
\newblock {\em Discrete Mathematics}, 133:191--205, 1994.

\bibitem{KPT97}
Jan Kratochv{\'{\i}}l, Andrzej Proskurowski, and Jan~Arne Telle.
\newblock Covering regular graphs.
\newblock {\em Journal of Combinatorial Theory, Series {B}}, 71:1--16, 1997.

\bibitem{KPT98}
Jan Kratochv{\'{\i}}l, Andrzej Proskurowski, and Jan~Arne Telle.
\newblock On the complexity of graph covering problems.
\newblock {\em Nordic Journal of Computing}, 5:173--195, 1998.

\bibitem{KTT16}
Jan Kratochv{\'{\i}}l, Jan~Arne Telle, and Marek Tesar.
\newblock Computational complexity of covering three-vertex multigraphs.
\newblock {\em Theoretical Computer Science}, 609:104--117, 2016.

\bibitem{KT00}
Petter Kristiansen and Jan~Arne Telle.
\newblock Generalized ${H}$-coloring of graphs.
\newblock {\em Proc. ISAAC 2000, LNCS}, 1969:456--466, 2000.

\bibitem{DBLP:journals/ai/KroneggerOP19}
Martin Kronegger, Sebastian Ordyniak, and Andreas Pfandler.
\newblock Backdoors to planning.
\newblock {\em Artif. Intell.}, 269:49--75, 2019.

\bibitem{Lenstra83}
Hendrik~W. {Lenstra Jr.}
\newblock Integer programming with a fixed number of variables.
\newblock {\em Math. Oper. Res.}, 8(4):538--548, 1983.

\bibitem{LT10}
Bernard Lidick{\'{y}} and Marek Tesar.
\newblock Complexity of locally injective homomorphism to the theta graphs.
\newblock {\em Proc. {IWOCA} 2010, LNCS}, 6460:326--336, 2010.

\bibitem{LokshtanovPPS17}
Daniel Lokshtanov, Marcin Pilipczuk, Micha{\l} Pilipczuk, and Saket Saurabh.
\newblock Fixed-parameter tractable canonization and isomorphism test for
  graphs of bounded treewidth.
\newblock {\em {SIAM} J. Comput.}, 46(1):161--189, 2017.

\bibitem{Ma67}
William~S. Massey.
\newblock {\em Algebraic Topology: An Introduction}.
\newblock Harcourt, Brace and World, 1967.

\bibitem{Ne71}
Jaroslav Ne\v{s}et\v{r}il.
\newblock Homomorphisms of derivative graphs.
\newblock {\em Discrete Mathematics}, 1:257--268, 1971.

\bibitem{NOdM12}
Jaroslav Ne\v{s}et\v{r}il and Patrice {Ossona de Mendez}.
\newblock {\em {Sparsity: Graphs, Structures, and Algorithms}}, volume~28 of
  {\em Algorithms and Combinatorics}.
\newblock Springer, 2012.

\bibitem{OR20}
Karolina Okrasa and Pawe{\l} Rz{\k a}\.{z}ewski.
\newblock Subexponential algorithms for variants of the homomorphism problem in
  string graphs.
\newblock {\em Journal of Computer and System Sciences}, 109:126--144, 2020.

\bibitem{PRS21}
Sukanya Pandey, Venkatesh Raman, and Vibha Sahlot.
\newblock Parameterizing role coloring on forests.
\newblock {\em Proc. {SOFSEM} 2021, LNCS}, 12607:308--321, 2021.

\bibitem{PS}
Sukanya Pandey and Vibha Sahlot.
\newblock Role coloring bipartite graphs.
\newblock {\em CoRR}, abs/2102.01124, 2021.

\bibitem{PR01}
Aleksandar Peke\v{c} and Fred~S. Roberts.
\newblock The role assignment model nearly fits most social networks.
\newblock {\em Mathematical Social Sciences}, 41:275--293, 2001.

\bibitem{PR15}
Christopher Purcell and M.~Puck Rombach.
\newblock On the complexity of role colouring planar graphs, trees and
  cographs.
\newblock {\em Journal of Discrete Algorithms}, 35:1--8, 2015.

\bibitem{PR}
Christopher Purcell and M.~Puck Rombach.
\newblock Role colouring graphs in hereditary classes.
\newblock {\em CoRR}, abs/1802.10180, 2018.

\bibitem{RS01}
Fred~S. Roberts and Li~Sheng.
\newblock How hard is it to determine if a graph has a $2$-role assignment?
\newblock {\em Networks}, 37:67--73, 2001.

\bibitem{TP97}
Jan~Arne Telle and Andrzej Proskurowski.
\newblock Algorithms for vertex partitioning problems on partial
  \emph{k}-trees.
\newblock {\em {SIAM} Journal on Discrete Mathematics}, 10(4):529--550, 1997.

\bibitem{HPR10}
Pim van~'t Hof, Dani{\"{e}}l Paulusma, and Johan M.~M. van Rooij.
\newblock Computing role assignments of chordal graphs.
\newblock {\em Theoretical Computer Science}, 411:3601--3613, 2010.

\bibitem{WR83}
Douglas~R. White and Karl~P. Reitz.
\newblock Graph and semigroup homomorphisms on networks of relations.
\newblock {\em Social Networks}, 5:193--235, 1983.

\end{thebibliography}
\end{document}